\newtheorem{fact}{Fact}[section]
\newtheorem{lemma}{Lemma}[section]
\newtheorem{theorem}[lemma]{Theorem}
\newtheorem{proposition}[lemma]{Proposition}
\newenvironment{proofof}[1]{\smallskip\noindent{\bf Proof of #1}}%
        {\hspace*{\fill}$\Box$\par}
\newcommand{\etal}{et al.\ }
\newcommand{\ex}{{\mathrm E}}
\newcommand{\lpfrac}{\mathrm{LP'_{frac}}}
\newcommand{\lpfracs}{\mathrm{LP^*_{frac}}}
\newcommand{\lpint}{\mathrm{LP'_{int}}}
\newcommand{\lpints}{\mathrm{LP^*_{int}}}
\newcommand{\lpobj}{\mathrm{LP^i_{frac}}}
\newcommand{\cJ}{{\cal J}}
\newcommand{\opt}{\textsc{OPT}{}}
\newcommand{\eps}{\epsilon}
\begin{document}

\title{ Scheduling to Approximate Minimization Objectives on Identical Machines}

\author{Benjamin Moseley\thanks{Tepper School of Business, Carnegie Mellon University and Relational AI. moseleyb@andrew.cmu.edu. Supported in part by a Google Research Award, a Infor Award and NSF Grants CCF-1824303, CCF-1733873 and CCF-1845146. }}

\date{}

\maketitle

\begin{abstract}

This paper considers scheduling on identical machines.  The scheduling objective considered in this paper generalizes most scheduling minimization problems.  In the problem, there are $n$ jobs and each job $j$ is associated with a monotonically increasing function $g_j$. The goal is to design a schedule that minimizes $\sum_{j \in [n]} g_{j}(C_j)$ where $C_j$ is the completion time of job $j$ in the schedule.  An $O(1)$-approximation is known for the single machine case.  On multiple machines, this paper shows that if the scheduler is required to be either non-migratory or non-preemptive then any algorithm has an unbounded approximation ratio.  Using preemption and migration, this paper gives  a $O(\log \log nP)$-approximation on multiple machines, the \emph{first} result on multiple machines.  These results imply the first non-trivial positive results for several special cases of the problem considered, such as throughput minimization and tardiness.  

Natural linear programs known for the problem have a poor integrality gap.  The results are obtained by strengthening a natural linear program for the problem with a set of  covering inequalities we call  \emph{job cover inequalities}. This linear program is rounded to an integral solution by building on quasi-uniform sampling and rounding techniques.

\end{abstract}
\newpage

\section{Introduction}

A common optimization challenge  is scheduling a set of $n$ jobs on $m$ identical machines to optimize the quality of service delivered to the jobs.  The quality of service  objective could be: a delay based objective, such as minimizing the average waiting time;  a fairness objective ensuring resources are shared fairly between jobs, such as the $\ell_2$-norm of the waiting time; or a real-time objective such as ensuring a small number of jobs are not completed by their deadline. 


\medskip
\noindent \textbf{Scheduling Model:} This paper develops  an algorithm that has strong guarantees for most reasonable  objectives.  This work considers the \emph{identical} machines setting where all jobs are available at the same time.  Each job $j$ has a processing time $p_j$.  The job can be processed on $m$ identical machines where the processing time of the job is the same on all machines. This work assumes that \emph{preemption} and \emph{migration} are allowed.  That is jobs can be stopped and resumed at a later time, possibly on a different machine.

This paper initiates the study of the general scheduling problem (GSP) on identical machines.  In this problem, each job $j$ has a function $g_j(t): \mathbb{R}^+\rightarrow \mathbb{R}^+$.  The value of $g_j(t)$ specifies the cost of completing job $j$ at time $t$.  The goal is to design an algorithm that completes each job $j$ at time $C_j$ to minimize $\sum_{j \in [n]} g_j(C_j)$. No assumptions on the functions  are made except that they are positive and non-decreasing, so there never is an incentive to have a job wait longer to be completed.  Note that each job has its own, \emph{individual}, cost function.  In several systems, it is the case that jobs can be associated with distinct cost functions \cite{LoChGo15,MarsTaHu11,MarshallKeFr11}. 

  The problem  generalizes many scheduling objectives. Examples include the following.  In the following descriptions, each job $j$ has a positive weight $w_j$ denoting its priority.

\begin{itemize}
\item \textbf{Weighted Completion Time:} A job's cost is its weight multiplied by its completion time.   The completion time is how long the job waits in the system and this objective focuses on minimizing the priority scaled average waiting time. This objective is captured by setting $g_j(t)= w_j\cdot t$.
\item \textbf{Weighted $k$th Norm of Completion Time:} This objective focuses on minimizing  $\sqrt[k]{\sum_{j \in [n]} w_j C_j^k}$ or, by removing the outer $k$th root, $\sum_{j \in [n]} w_j C_j^k$ . This objective is captured by setting $g_j(t)= w_j\cdot t^k$. This is used to enforce fairness in the schedule and  typically $k \in \{2,3\}$.
\item \textbf{Weighted Throughput Minimization:} The goal is to minimize the weighted number of jobs that miss their deadline.  Each job $j$ has a deadline $d_j$.   Setting $g_j(t) = 0$ for $t \leq d_j$ and $w_j$ otherwise gives this objective.
\item \textbf{Weighted Tardiness:} Each job has no cost if completed before its deadline and otherwise  the job pays its weighted waiting time after is deadline.    Each job  has a deadline $d_j$ and weight $w_j$.  This objective is obtained by setting $g_j(t) = 0$ for $t \leq d_j$ and $w_j(t-d_j)$ otherwise.
\item \textbf{Exponential Completion Time:} In this objective a job's cost grows exponentially with its completion time.  The objective is captured by setting $g_j(t)= w_j\cdot \exp(t)$.
\end{itemize}

These problems have been challenging to understand. The problem considered is NP-Hard, even on a single machine and the cost functions are piecewise linear \cite{HohnJ15}.   It is known that Smith's rule is optimal for minimizing the total weighted completion time  \cite{smith}.     Bansal and Pruhs in a breakthrough result introduced a  $O(1)$-approximation algorithm for the general scheduling problem on a single machine \cite{BansalP14}. Cheung \etal improved this to show a $(4+\eps)$ approximation \cite{CheungMSV17,CheungS11,MestreV14}. Antoniadis \etal gave a quasi-polynomial time approximation scheme on a single machine \cite{AntoniadisHMVW17,HohnMW14}. 

The next step in this line of work is to generalize these techniques to multiple machine environments, but there is a clear barrier when generalizing past approaches to multiple machines. Prior work  introduced a strong linear program that uses a polynomial number of knapsack cover inequalities. See \cite{CarrFLP00} for details on knapsack cover inequalities.  The inequalities in \cite{BansalP14,CheungS11,MestreV14} are  weak in multiple machine environments and result in linear programs with an unbounded integrality gap.

An open question is if there exists  a linear program with a small integrality gap for multiple machines. Further, are there good approximation algorithms for the GSP on multiple machines.

\medskip
\noindent \textbf{Results:} This paper studies the GSP in the identical machine environment. The paper shows the following theorem.  The technical contributions that result in this theorem are the derivation of valid strong linear program inequalities that are used to strengthen a natural linear program relaxation of the problem and an iterative rounding technique that builds on quasi-uniform sampling \cite{Varadarajan10}. 

\begin{theorem}
\label{thm:main}
There is  a randomized algorithm that achieves a $O( \log \log nP)$ approximation in expectation and runs in expected polynomial time for the GSP on multiple identical machines with preemption and migration where $P$ is the ratio of the maximum to minimum job size. 
\end{theorem}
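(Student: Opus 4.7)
The plan is to formulate a strengthened LP relaxation of GSP and round it by an iterative procedure built on quasi-uniform sampling. First, I would discretize time on a geometric scale into windows $I_i = (2^{i-1}, 2^i]$ for $i = 0, 1, \ldots, O(\log nP)$, which loses only a constant factor since each $g_j$ is monotone. The natural LP would have variables $x_{j,i} \in [0,1]$ denoting the fractional assignment of job $j$ to window $I_i$, with completion constraint $\sum_i x_{j,i} = 1$, capacity constraint $\sum_{j, i' \leq i} p_j x_{j,i'} \leq m \cdot 2^i$ on each prefix, and objective $\sum_{j,i} g_j(2^i) x_{j,i}$. Since, as the excerpt notes, this relaxation has unbounded integrality gap for $m > 1$, I would strengthen it by adding \emph{job cover inequalities}: for each subset $S$ of jobs and each prefix $\bigcup_{i' \leq i} I_{i'}$, valid inequalities in the spirit of knapsack cover that upper bound the fractional mass of jobs in $S$ completing by time $2^i$, accounting for the capacity already consumed by jobs that must appear in the prefix. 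An approximate separation oracle or a polynomial-size subfamily then suffices to solve the strengthened LP to value $O(\opt)$, following the framework of Carr, Fleischer, Leung and Phillips.

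Second, I would round iteratively. Consider the per-window selection problem of choosing which jobs to finish within $I_i$, given the LP's fractional selection. The job cover inequalities endow the underlying set system with bounded \emph{shallow-cell complexity}: the number of feasible integral sub-collections of jobs fitting into the $m \cdot 2^i$ capacity depends only polynomially on the window's parameters, which is precisely the hypothesis under which Varadarajan's quasi-uniform sampling yields an $O(\log \log)$ integrality gap. I would apply this to obtain an integral job-to-window assignment in which each prefix of windows still respects the $m \cdot 2^i$ capacity bound and the expected cost is $O(\log \log nP)\cdot \opt$. In each iteration I would freeze jobs whose LP mass concentrates in a single window and re-solve a residual LP on the remaining jobs, with costs accumulating geometrically so the total remains $O(\log \log nP) \cdot \opt$. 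Given the integral prefix-feasible assignment, preemption and migration make it straightforward to produce an actual schedule: process the jobs assigned to each prefix within $[0, 2^i]$ via a McNaughton-style wrap-around, which is feasible exactly because the prefix capacity bound holds.

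The main obstacle I foresee lies in the first step: designing the job cover inequalities so that the strengthened LP simultaneously (a) admits efficient (approximate) separation, and (b) genuinely has low shallow-cell complexity under the per-window rounding interpretation, which is what powers the $O(\log\log)$ bound from quasi-uniform sampling. A secondary challenge is analyzing the iterative rounding across $O(\log nP)$ windows: to avoid accruing an extra $\log$ factor, one needs a charging argument showing the residual LPs do not blow up in cost, and the inter-window feasibility must be controlled via the prefix capacity constraints rather than per-window bounds alone, so that the sole logarithmic loss comes from the single application of quasi-uniform sampling reflected inside the $\log\log nP$ term.
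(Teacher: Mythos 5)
There is a genuine gap, and it sits exactly at the point the paper identifies as the crux of the multi-machine problem. Your LP encodes machine capacity only through aggregate prefix-volume constraints $\sum_{j,i'\le i} p_j x_{j,i'} \le m\cdot 2^i$, and your scheduling step claims that a McNaughton-style wrap-around is feasible "exactly because the prefix capacity bound holds." Both fail because they ignore that a job can be processed on at most one machine at any time. Even the \emph{integral} version of your relaxation admits unschedulable solutions: with $m=2$ machines take jobs $A,B$ with $p=4$ and deadline $4$, and job $C$ with $p=4$ and deadline $6$; every prefix-volume constraint holds ($8\le 8$, $12\le 12$) and every $p_j$ is at most its deadline, yet $C$ can receive at most $2$ units in $[4,6]$ since both machines are saturated by $A,B$ on $[0,4]$, so no schedule exists. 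Consequently, rounding your LP — however cleverly — cannot end with a valid schedule, and adding unspecified "knapsack-cover-spirit" inequalities on top of volume constraints does not repair this; the paper stresses that precisely this naive generalization has a large gap. Its fix is structural: reduce to choosing completion times, characterize feasibility by a bipartite job--time flow network (Theorem~\ref{thm:flow}), and derive the job cover inequalities from \emph{minimum cuts} of that network, where each job's contribution over a time set $T'$ is truncated to its earliest $\min\{p_j,\cdot\}$ slots ($E(T',j)$) and then further truncated relative to a vector $D$ of completion-time lower bounds ($E(T',D,j)$, with deficiency $V(T',D)$). That per-job truncation is what encodes the one-machine-at-a-time restriction, and it is absent from your formulation.

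The rounding half of your plan also leans on an unsubstantiated claim: that the strengthened set system has bounded shallow-cell complexity so that Varadarajan-style quasi-uniform sampling directly gives $O(\log\log)$. The paper does not establish or use shallow-cell complexity. Instead it runs an iterative randomized rounding that \emph{builds on} the quasi-uniform sampling idea: oversample each job's completion time by a factor $1/c = \Theta(\log\log nP)$, identify for each unsatisfied covering constraint a set of "critical" jobs (dropping two $\tfrac{1}{10}V(T',D')$ prefixes ordered by $D'_j$), and recurse on a fractional solution that boosts only critical jobs' variables. The $O(\log\log nP)$ emerges because, after grouping constraints geometrically by $V(T',D')$ and exploiting monotonicity (Propositions~\ref{prop:during} and~\ref{prop:before}), a Chernoff-type bound with this oversampling gives failure probability $1/\log^{10} nP$ per group, a union bound over the $O(\log nP)$ groups shows each job is critical for some unsatisfied constraint with probability at most a constant, and hence the fractional LP cost contracts geometrically across phases (Lemmas~\ref{lem:t1prob}, \ref{lem:costtotal}, \ref{lem:Finalcosts}). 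If you want to salvage your route, you would need to (a) replace the volume constraints by the flow-cut-based job cover inequalities (or an equivalent correct feasibility characterization) and (b) replace the shallow-cell-complexity appeal by an argument, like the paper's per-job critical-constraint analysis, that bounds the total oversampling across iterations.
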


A natural question is if preemption and migration are necessary for an algorithm to have a good approximation ratio.  This paper shows that they are by establishing that any scheduler  required to be non-preemptive or non-migratory  has an unbounded approximation ratio unless $P=NP$.  The proof is deferred to Appendix~\ref{sec:approxhard}.


\begin{theorem}\label{thm:approxhard}
The approximation ratio of any algorithm for GSP is unbounded unless P=NP if either the algorithm is required to be non-migratory or non-preemptive on $m$ identical machines.
\end{theorem}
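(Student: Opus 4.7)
The plan is a gap-preserving reduction from the (NP-hard) decision version of minimum makespan on $m$ identical machines: given processing times $a_1,\ldots,a_n$ and a target $S$, decide whether the jobs can be assigned so that every machine has load at most $S$. This is NP-hard for every fixed $m \geq 2$ (via \textsc{Partition} for $m=2$, and more generally via \textsc{3-Partition}). Given such an instance, I would build a GSP instance with the same $n$ jobs on the same $m$ machines, using step cost functions $g_j(t) = 0$ for $t \leq S$ and $g_j(t) = 1$ for $t > S$; these functions are non-negative and non-decreasing, hence admissible.

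The key step is the gap analysis. The central structural observation is that under either a non-migratory or a non-preemptive restriction, every job runs entirely on a single machine: non-migratory jobs are pinned by definition, and non-preemptive jobs cannot change machines since doing so would require interruption. Hence the schedule induces an assignment of jobs to machines, and the last-completing job on machine $i$ has completion time equal to the total load $L_i$ on that machine. In the YES case, the assumed makespan-$S$ assignment together with any within-machine ordering gives all $C_j \leq S$ and objective $0$. In the NO case, every assignment has some $L_i > S$, so at least one job completes after time $S$ and contributes cost $1$; thus the optimum is at least $1$.

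Combining the two sides of the gap, a polynomial-time $\alpha$-approximation algorithm, for any finite $\alpha$, must produce value $\alpha \cdot 0 = 0$ on YES instances and at least $1$ on NO instances, and therefore decides the underlying makespan question in polynomial time. This contradicts $\mathrm{P} \neq \mathrm{NP}$, so no finite approximation is possible.

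The step I expect to require the most care is ruling out that preemption cheats the bound in the non-migratory setting and that no exotic reading of non-preemption permits splitting a job across machines. Neither loophole opens: preempting and resuming a job on the same machine leaves the machine's total load unchanged, and non-preemption precludes concurrent or sequential execution of the same job on distinct machines. Once these observations are nailed down the rest of the reduction is routine.
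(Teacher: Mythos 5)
Your proposal is correct and follows essentially the same route as the paper: a gap reduction from a partition-type feasibility problem (the paper uses 3-Partition directly, you use the equivalent makespan decision problem) with threshold cost functions $g_j(t)=0$ for $t\leq S$ and positive afterwards, exploiting the fact that a non-migratory or non-preemptive schedule assigns each job wholly to one machine, so the optimum is $0$ on YES instances and positive on NO instances. The only cosmetic difference is your claim that the last job on a machine completes exactly at the machine's load; only the inequality ``at least the load'' is needed (and is what holds in general), and your conclusion is unaffected.
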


\medskip
\noindent \textbf{Overview of Technical Contributions:}
The main result is enabled by a set of strengthening inequalities added to a natural linear program (LP) for the problem.  The paper calls these inequalities, \textbf{job cover inequalities}.  See Section~\ref{sec:lp}. These inequalities are needed because without them the LP  introduced in this paper has an unbounded integrality gap even if all jobs arrive at the same time on a single machine\footnote{Without strengthening inequalities and when jobs arrive at the same time on a single machine the LP introduced in this paper can be reduced to an LP used in prior work where the gap is known \cite{BansalP14,CheungS11}}. Other natural LP relaxations, such as a time indexed LP, also have an unbounded gap even on a single machine \cite{BansalP14,CheungS11}. 

Prior work on a single machine also used a set of covering inequalities to strengthen a linear program.   These inequalities consider every interval $I$ and the set of jobs that arrive during the interval $S_I$.  A constraint states that the total processing time of jobs in $S_I$ that are completed after $I$ ends must be greater than the total processing time of jobs in $S_I$ minus the length of $I$ \cite{BansalP14,CheungS11}. This is a covering constraint ensuring that the jobs arriving during $I$ that complete during $I$ have total size at most the length of $I$. These covering constraints are  strengthened using knapsack cover inequalities. If such constraints are satisfied integrally then the Earliest-Deadline-First algorithm can be used to construct a schedule of the same cost as the LP.

A natural idea to extend this to identical machines is to use the same constraint, but the total work completed after $I$ ends must be greater than the size of jobs in $S_I$ minus $m$ times the length of $I$.  This generalization takes into account that each machine can be busy during $I$.   Then the natural next step is to use  knapsack cover inequalities to strengthen this new set of constraints. Unfortunately, it is easy to show such inequalities are insufficient and result in an LP with a large integrality gap.  There are several issues and they are all rooted in the fact that this does not take into account that a job can only be processed on one machine at any point in time.    To overcome this shortcoming, this paper considers covering constraints used to strengthen a minimum cut constraint arises from a natural bipartite flow problem.

This paper proceeds by first reducing the scheduling problem to the problem of finding completion times for each of the jobs, without committing to a schedule.  Once a feasible set of completion times is discovered, the scheduling of jobs can easily be obtained by solving a bipartite flow problem. See Section~\ref{sec:deadline} for details.   Feasible solutions to the bipartite flow problem have a one-to-one correspondence to the original scheduling problem.  We note that the reduction to this flow problem is a well-known scheduling technique. 

 The bipartite graph in the flow problem is used to derive the job cover inequalities. First an LP is written based on the flow problem. To ensure a feasible flow is possible, a set of constraints is added that ensure the minimum cut in the graph is sufficiently large.  Then this set of covering constraints are strengthened.  While these inequalities are used to strengthen constraints that arise from a flow graph, they are different than previously studied flow cover inequalities \cite{FlowCoverBook,FlowCover,LeviLS08}. The key to defining the improved constraints is leveraging  the structure of the minimum cuts in the bipartite graph resulting from the scheduling problem.

The algorithm solves the strong  LP  and rounds the solution.  The idea is to use iterative randomized rounding, but this results in a large approximation ratio.  We remark that standard randomized rounding techniques can be used to obtain a $O(\log n)$ approximation by over sampling variables by a $O(\log n)$ factor and then union bounding over constraints to show they are satisfied with high probability.  However, there are issues with reducing the approximation ratio below this factor; the most challenging is showing that the constraints are satisfied if variables are sampled by a smaller factor, which is what would be needed to reduce the approximation ratio.

 Instead, the algorithm uses an iterative scheme to round the solution. In each iteration, the algorithm  over samples variables by a small factor and modifies the linear program.  The modification ensures that (1) in expectation no variable,  over all iterations, is sampled by more than a $O( \log \log nP)$ factor  than how much it is selected by the optimal LP solution and (2) the relaxation remains feasible.    The scheme builds on techniques of quasi-uniform sampling \cite{ChanGKS12,Varadarajan10} and quasi-uniform iterative rounding \cite{ImM17}.


\medskip
\noindent \textbf{Other Related Work}

\medskip
\noindent
\textbf{Single Machine}. It is known the Smith's rule is optimal for minimizing the total weighted completion time \cite{smith}. The tardiness problem has been challenging to understand.  Lawer \cite{LAWLER} gave a polynomial approximation scheme if jobs have unit size.  Before the work on the GSP in the single machine environment, the previously best known approximation for arbitrary sized jobs was a $(n-1)$-approximation \cite{ChengNYL05}.  For the GSP a $(4+\eps)$-approximation is known \cite{CheungMSV17} and a quasi-polynomial time approximation scheme has been established \cite{AntoniadisHMVW17}.

\medskip
\noindent
\textbf{Identical Machines}. Smith's rule is optimal for minimizing the total weighted completion time \cite{smith}.   It is not difficult to see that there is an optimal algorithm for makespan\footnote{Makespan is equivalent to minimizing the maximum completion.} if jobs can be preempted and migrated across machines.  An PTAS is known if migration is disallowed \cite{HochbaumS85}.  As far as the author is aware, there are no non-trivial results known for exponential completion time and tardiness on multiple machines.

\section{Preliminaries}
\label{sec:prelim}

In the general scheduling problem (GSP), there is a set $J$ of $n$ jobs.  Each job $i$ has integer processing time $p_i$. Let $P$ denote the  maximum job size and assume the minumum job size is one.   The jobs are to be scheduled on a set $M$ of $m$ identical machines that can schedule one job at any point in time. It is assumed that time is slotted and job $i$ must be scheduled for $p_i$ time units over all machines.  Jobs can be preempted and migrated across machines.  Jobs cannot be scheduled on more than one machine simultaneously.  Every job $i$ is associated with a function $g_i: \; \mathbb{R}^+ \rightarrow \mathbb{R}^+$ where $g_i(t)$ specifies the cost of completing job $i$ at time $t$. Without loss of generality, assume that $g_i(0) = 0$ for all jobs $i$.  The only assumption on $g_i(t)$ is that it is a non-negative non-decreasing function. Under a given schedule, job $i$ is completed at time $C_i$. The goal is for the scheduler to minimize $\sum_{i\in J} g_i(C_i)$.   Note that it can be assumed that all jobs are completed by time $nP$.

\section{Scheduling Jobs with Deadlines}
\label{sec:deadline}

This section shows that if the completion times of the jobs are fixed then there is a method to determine how to schedule the jobs at or before their completion times or determine if such a schedule is not possible.  Notice that if such a schedule is feasible then this ensures the objective is either the same or smaller in the computed schedule than if all jobs are completed at exactly their given completion times.  Let job $i$ have a given completion time $C_i$. The completion time $C_i$  is interpreted as job $i$'s deadline.     

The method to construct a schedule for the jobs is to setup a flow problem. Setting up a flow graph to determine if a set of jobs can be feasibly scheduled is a standard scheduling technique (e.g. \cite{ChuzhoyGKN04}), but is presented here so that later this graph can be used in a linear program formulation. Consider creating a bipartite flow graph $G= (\{ s,d\} \cup A \cup B,E)$ where $A$ contains a node $a_i$ for every job $i$ and $B$ contains a node $b_t$ for every time step $t$. There is additionally a source $s$ with an outgoing edge to each node $a_i$ in $A$ with capacity $p_i$.  There is a sink node $d$ that has an incoming edge from each node in $B$ with capacity $m$.  Finally there is an edge from $a_i \in A$ to $b_t \in B$  of capacity $1$ if and only if $t \leq C_i$.  See Figure \ref{fig:flow}.

\begin{figure}{r}
\vspace{-.8cm}
\begin{center}
\includegraphics[width=.3\textwidth]{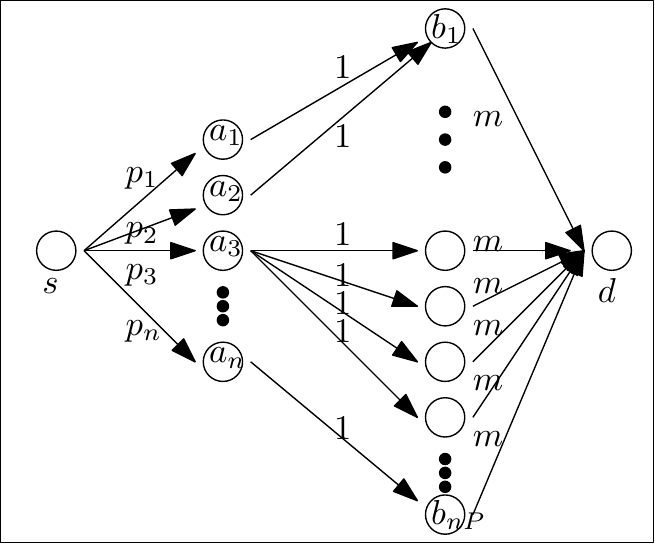}
\caption{The graph $G$.    A job $i$ needs to assign $p_i$ units of flow (processing time) to machines before $C_i$. Machines can process up to $m$ units at each time.  }\label{fig:flow}
\end{center}
\vspace{-1.cm}
\end{figure}

A  set of completion times are feasible if and only if there is a feasible flow in this network of value $\sum_{i \in [n]} p_i$.   This is because a job can be scheduled for a unit at each time during $[0,C_i]$ and must be scheduled for $p_i$ units total.  Further, every time step can schedule up to $m$ jobs. 
 
  There are two messages to takeaway from this. One is that the problem can be solved by only knowing completion times for the jobs. The other is that this flow graph can be used to determine if a set of completion times can be associated with a valid schedule.  A set of completion times are said to be \textbf{valid} if there is a schedule that completes each job only earlier than the given completion time.
 
 The following theorem follows from the construction of $G$.
 
 \begin{theorem}
 \label{thm:flow}
 A set of completion times is valid if and only if there is a feasible maximum flow of value $\sum_{i=1}^n p_i$ in the flow graph $G$.
 \end{theorem}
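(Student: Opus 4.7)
The plan is to argue both directions of the biconditional by exhibiting an explicit correspondence between feasible schedules meeting the deadlines $C_1,\ldots,C_n$ and integral $s$-$d$ flows of value $\sum_{i=1}^n p_i$ in $G$. Since all capacities in $G$ are integers, a maximum flow of value $\sum_i p_i$ can be taken to be integral by the classical integrality of max-flow with integer capacities, so every edge $a_i \to b_t$ carries flow $0$ or $1$.

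For the forward direction, I would start from a valid schedule, i.e., one in which each job $i$ is processed for $p_i$ total units on machines in $M$ over time slots $t \in \{1,\ldots,C_i\}$, never on more than one machine at the same instant and with at most $m$ jobs running in any time slot. I would then define a flow by sending one unit along the path $s \to a_i \to b_t \to d$ whenever the schedule processes job $i$ at time $t$. The capacity $p_i$ on $(s,a_i)$ is respected because the total processing time of $i$ equals $p_i$; the unit capacity on $(a_i,b_t)$ is respected because $i$ runs on at most one machine at time $t$; the capacity $m$ on $(b_t,d)$ is respected because at most $m$ jobs run concurrently; and the edge $(a_i,b_t)$ is present since processing only occurs in slots with $t \le C_i$. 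Summing across jobs, the resulting flow has value $\sum_i p_i$.

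For the reverse direction, I would start from an integral $s$-$d$ flow $f$ of value $\sum_i p_i$. Since the total capacity leaving $s$ is exactly $\sum_i p_i$, every edge $(s,a_i)$ is saturated, so for each job $i$ the set $T_i = \{t : f(a_i, b_t) = 1\}$ has size $p_i$ and consists only of times $t \le C_i$. I would then construct the schedule slot by slot: at each time $t$, the jobs to be processed are exactly $\{i : t \in T_i\}$, and since the capacity of $(b_t,d)$ is $m$ there are at most $m$ such jobs, so they can be assigned arbitrarily to the $m$ machines. The unit capacity on $(a_i,b_t)$ guarantees that $i$ is scheduled on at most one machine at time $t$, and $|T_i| = p_i$ guarantees $i$ receives its full processing requirement by $C_i$. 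This schedule is therefore valid.

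The two constructions are inverses of each other, which completes both directions. There is no serious obstacle here: the only mild point to mention is the appeal to integrality of max-flow to justify that a fractional feasible flow of the required value can be replaced by an integral one before extracting a schedule; everything else is bookkeeping against the three capacity constraints in $G$.
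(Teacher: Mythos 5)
Your argument is correct and is essentially the correspondence the paper itself relies on: the paper states the theorem "follows from the construction of $G$," with exactly the reasoning you spell out (each job needs $p_i$ units before $C_i$, at most one machine per job per slot, at most $m$ jobs per slot). Your only addition is the explicit appeal to integrality of max-flow with integer capacities, which is a reasonable and correct way to make the reverse direction fully rigorous.
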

 
 Given a set of valid completion times, one can construct a feasible schedule using the flow graph. That is, an assignment of jobs to machine at each time step.  Unfortunately, the graph has size $\Omega(nP)$ and could be exponential in size. Recall that $P$ is the ratio of the maximum to minimum job size. There is a polynomial time algorithm that constructs a feasible schedule in time polynomial in $n$ and $\log W$ given a set of valid completion times.    Here $W$ denotes the maximum value of $g_j(t)$ for $t \leq nP$.   This algorithm is omitted due to space. It can be obtained by rounding possible completion times to geometrically increasing times.


\section{Strengthened Linear Program with Job Cover Inequalities}
\label{sec:lp}

Consider the following natural integer program.  The constraints in the program come from the flow graph of the prior section.  Let $x_{j,t}$ be $1$ if job $j$ is not completed at time $t$ and $0$ otherwise.\footnote{Note that this is \emph{not} the standard time indexed LP where the variables represent the amount $j$ is processed at time $t$.}   This implies that $x_{j,t}$ is continuously $1$ for $t$ less than $j$'s completion time and then $0$ for times $t$ thereafter in an integer solution.   Let $T$ be the set of all time slots and $J$ the set of all jobs. By assuming $g_{j}(0) = 0$ for all $j$, the objective function is a telescoping summation for each job $j$ whose value will be $g_j(t)$ if $t$ is the last time $j$ is not fully processed. The first set of constraints says that if job $j$ is completed at time $t-1$ then it is also completed at time $t$. The second set of constraints are more involved.  One can think of the latest time $t$ that $x_{j,t}=1$ as being the completion time of $j$.  Given this, the constraint says that every cut in the flow network from Section~\ref{sec:deadline} has value at least $\sum_{i \in J} p_i$.  This is a valid constraint by Theorem~\ref{thm:flow} and the maximum-flow minimum-cut theorem.  Note that $J'$ corresponds to jobs whose nodes are on the side of the cut with the source $s$.  Similarly, $T'$ corresponds to time steps whose nodes are on the side of the cut with the sink $d$.  

\vspace{-.5cm}
\begin{small}
\begin{align}
\min  \sum_{j \in J} \sum_{t \in T}  &x_{j,t} (g_j(t) - g_j(t-1))  \label{IP:first} \\
\mbox{s.t. } \;\;\;\; x_{j,t} &\leq x_{j,t-1} & \forall j \in J , t \in T  \label{constraint:greater}\\
\sum_{j \in J \setminus J'}p_j + \sum_{j \in J'} \sum_{t \in  T'} x_{j,t} + \sum_{t \in T\setminus T'} m &\geq \sum_{j \in J} p_j &\forall J' \subseteq J,T' \subseteq T \label{constraint:weak}\\
x_{j,t} \in \{0,1\} & &\forall j\in J,t \in T \nonumber
\end{align}
\end{small}
\vspace{-.8cm}

   The goal is to derive a set of valid strengthening inequalities for the IP (\ref{IP:first}). These inequalities are used to  strengthen the minimum cut constraints. These inequalities are needed because without them the LP has an unbounded integrality gap even if all jobs arrive at the same time on a single machine \cite{CheungS11}. Other natural LP relaxations, such as a time indexed LP also have an unbounded gap even on a single machine. 

We can derive a set of strengthening inequalities for this linear program that replace the set of constraints (\ref{constraint:weak}).   The proof establishing validity of the following constraintsin Appendix \ref{sec:derivation} .  The strengthening focuses on the constraints in  (\ref{constraint:weak}) for $J' = J$.  In the end, the derived constraints are strictly stronger than the above and one need not consider the other sets $J'$.   

Fix  $J' = J$ and consider the  constraints $\sum_{j \in J} \sum_{t \in  T'} x_{j,t} + \sum_{t \in T\setminus T'} m \geq \sum_{j \in J} p_j$ for all $T' \subseteq T$. For a job $j$ and a collection of time steps $T'$ let $E(T',j)$ be the set of up to $\min\{p_j, \sum_{i \in J} p_i - m | T\setminus T'| \}$ \textbf{earliest} time steps in $T'$.  The full analysis first shows that we can strengthen this to $$\sum_{j \in J} \sum_{t \in  E(j,T')} x_{j,t} + \sum_{t \in T\setminus T'} m \geq \sum_{j \in J} p_j \mbox{ for all } T' \subseteq T.$$  Notice the first summation now only considers time steps in $E(T',j)$ for each job $j$.

The proof further improves these inequalities by taking inspiration from knapsack cover inequalities.  Let $D$ be a \textbf{vector} where $D_j$ is a time corresponding to job $j$.   Intuitively, $D_j$ is a lower bound on the completion time for  job $j$. The constraints below say that even if all jobs $j$ are set to have completion times \textbf{at least} $D_j$  then the constraints should still be satisfied. 

Fix $T' \subseteq T$ and a vector $D$ of completion times for every job.  Consider the constraint for $T'$ in the above set of inequalities.  If job $j$ is given a completion time of at least $D_j$ then  $j$ will contribute at least $\sum_{t \in  [0, D_j] \cap E(T',j) }1$ to the left side of the inequality.  Let $V(T',D) = \sum_{i\in J } p_i - \sum_{t \in T \setminus T'} m - \sum_{j \in J}\sum_{t \in  [0, D_j] \cap E(T',j) }1$.    Let $E(T',D,j)$ be the \textbf{earliest}  $V(T',D)$ time steps  in $E(T',j)$ later than $D_j$.  The new constraints are as follows.  These are the \textbf{job cover inequalities}.

\begin{eqnarray}
\sum_{j \in J} \sum_{t \in E(T',D,j)} x_{j,t} \geq V(T',D)  \;\;\;\;  \forall T' \subseteq T, \forall D, V(T',D)  >0 \label{constraint:nospeed1}
\end{eqnarray}

The validity of these inequalities is not difficult to show, but technical. Notice that the number of inequalities is large.  We can somewhat reduce the number of constraints as follows.  It can be established that any \emph{integer} solution satisfies all the constraints in (\ref{constraint:nospeed1}) if and only if the following are satisfied. The proof is in Appendix \ref{sec:continuousT}.


\begin{eqnarray}
\sum_{j \in J} \sum_{t \in E(T',D,j)} x_{j,t} \geq V(T',D)  \;\;\;\;   \forall b \in [0,\infty], T' =[b,\infty], \forall D, V(T',D)  >0\label{constraint:nospeed}
\end{eqnarray}

 These constraints state that if the subset of constraints in  (\ref{constraint:nospeed1}) are satisfied for any $D$ and all sets $T'$  that consist of a continuous set of time steps from some  time $b$ to time $\infty$ then all of the constraints in  (\ref{constraint:nospeed1})  are satisfied (for any $T'$).  Due to this, we will only need to use the constraints in (\ref{constraint:nospeed}) that restricts the sets $T'$.

The constraints (\ref{constraint:weak}) in the  IP are replaced by the constraints in (\ref{constraint:nospeed}).   Throughout the paper these constraints are discussed and it is said that a fixed constraint is defined by the set $T' = [b,\infty]$ and a vector $D$.

\medskip \noindent \textbf{Note on Solving  the LP:}  The IP is relaxed to a LP.  The LP is solved and then subsequently rounded to an integer solution.  There are an exponential number of constraints.  To solve the LP, the ellipsoid method is used.  The author does not know of an efficient separation oracle for the set of constraints in (\ref{constraint:nospeed1}).   The reduced set of constraints in (\ref{constraint:nospeed}) are the only constraints needed for the analysis.  For this set of constraints, an efficient dynamic programming algorithm can be used as a separation oracle.  The separation oracle can be found in the appendix.

\section{The Rounding Algorithm}
\label{sec:alg}

In this section the algorithm for rounding a fractional LP solution to an integral solution is described.   Recall in Section~\ref{sec:deadline} it was shown how to assign jobs to machines and time slots if a valid set of completion times have been established.   The remaining goal is to design an algorithm that rounds a fractional LP solution to an integral solution, giving the completion times for the jobs.  

The algorithm  takes as input a fractional solution to the LP $x'$.  The input solution is for the LP given in the previous section with constraints (\ref{constraint:weak}) replaced with the derived constraints (\ref{constraint:nospeed}).  The algorithm rounds the $x'$ solution to an \emph{integral} solution $x^*$.  If this solution is feasible, then the algorithm terminates.  Otherwise  $x^*$ is modified to obtain a new feasible fractional solution  $\tilde{x}^*$ that the algorithm recurses on. 

\medskip
\noindent \textbf{Informal Algorithm Description and Intuition:} The algorithm runs in phases. During a phase the algorithm finds a completion time for each job.  The completion times are pushed back to later times in each phase.  In a fixed phase the algorithm runs a randomized rounding procedure to sample completion times for the jobs.  Roughly, each completion time will be over sampled by a $\Theta(\log \log nP)$ factor over the fractional part of the LP.  After sampling an integer solution $x^*$ is created and variables are set corresponding to the sampled completion times of the jobs.   Some of the constraints in the LP will be satisfied.  These constraints will always remain satisfied because each job's completion time is only pushed back to a later time in subsequent iterations. The algorithm needs to satisfy the remaining constraints, which it does by recursing on a fractional LP solution $\tilde{x}^*$ to make completion times later. The fractional solution $\tilde{x}^*$ is constructed by increasing variables in the integral solution $x^*$.

In each iteration, the cost of the sampling is bounded by a $\Theta(\log \log nP)$  factor more than the \emph{fractional} portion of the linear program objective in expectation.   Ideally, one can use standard iterative rounding for the recursion.  However, naive approaches could have large cost as it is difficult to bound the number of times the algorithm recurses and therefore difficult to bound how many times a variable in the LP is sampled.  A standard approach will result in a $\Theta(\log nP)$ approximation.

The idea is to only include some fractional variables  in the linear program solution that the algorithm recurses on.  The variables that are included can be slightly larger than their original value.   The essential properties are that (1)  the linear program is feasible and (2) each fractional variable is set to $0$ with constant probability.  Using (2) it will be shown that  the expected value of each variable drops by a constant (e.g. $\frac{1}{2}$) factor of its value at the beginning of the iteration. If this is established, then the probability a completion time is sampled decreases geometrically over the phases and we can bound the algorithm's objective  by the cost of the sampling in the first phase, a $\Theta(\log \log nP)$ factor within the original LP object in expectation.  

The recursion will determine fractional variables to set to satisfy all constraints. One can think of the fractional variables as  \textbf{fractional  completion times} for the jobs. The idea is to associate each constraint with a set of fractional completion times that are \emph{critical} for satisfying the constraint in the solution $x'$. This will not be all completion times used to satisfy the constraint, but an essential subset of them. By slightly increasing the linear program variables in the integral solution $x^*$ to get a solution $\tilde{x}^*$ it is the case that only the critical completion times are needed to satisfy their corresponding unsatisfied constraints. 

If a constraint is unsatisfied, then the solution $\tilde{x}^*$ will set positive fractional values for all completion times critical for this constraint to satisfy it. The analysis will show that each fractional completion time is  $0$ (not increased) in $\tilde{x}^*$  with constant probability. For a completion time to be  $0$ in the recursion we need to ensure all constraints are satisfied where the completion time is critical by the integer solution $x^*$.   Unfortunately, a completion time could be critical for many constraints. Due to this, it is insufficient to show each constraint is satisfied with good probability and then union bound over all constraints. 

Fix a fractional completion time. The proof establishes that with good probability \emph{every} constraint that the completion time is critical for is satisfied in $x^*$.  This ensures that the the completion is $0$ with constant probability.  This will be used to show that the expected value of a fractional variable in the LP decreases geometrically over the iterations.  A completion time is over-sampled by at most a $\Theta(\log \log nP)$ factor as compared to the original LP solution in expectation over all iterations. The cost is as if only one iteration of uniform sampling occurred. This is similar to the analysis approach used in quasi-uniform sampling \cite{Varadarajan10} and rounding \cite{ImM17}.


\subsection{Formal Algorithm Description} The $i$th phase of the algorithm is the following.  The algorithm  recurses on the following until all constraints are satisfied.  Let the input to the first phase be $x$, the optimal fractional solution to the LP.  The algorithm utilizes randomized rounding parameterized by $c \leq 1$. The value of $c$ will be set to be $\frac{1}{\Theta(\log \log nP)}$.

\medskip
\noindent \textbf{Phase $i$ of the Algorithm:}   The algorithm first uses \textbf{randomized rounding}. Let $x'$ be a feasible fractional solution to the LP at $i$th phase of the algorithm.       For each job $j$, the algorithm chooses a value $\alpha_j \in [0,1]$ uniformly at random and independently.  Let $C_{j,\alpha}$ be the latest time $t$ where $x'_{j,t } \geq c \alpha_j$.  Let $\beta'_j$ be the latest time $t$ where $x'_{j,t} = 1$.  Let $\lpint = \sum_{j} g_j(\beta'_j)$ be the total \textbf{integral cost} of the LP solution $x'$ and let $\lpfrac = \sum_{j\in J} \sum_{t > \beta'_j} x'_{j,t}(g_j(t) - g_j(t-1))$ be the total \textbf{fractional cost} of the LP solution $x'$.
  
The algorithm modifies the LP solution $x'$ to get a new  (possibly infeasible) solution $x^*$. This is further modified to get a feasible solution  $\tilde{x}^*$  that the algorithm  recurses on.  The algorithm sets $x^*_{j,t} =1$ for all $j$ and $t$ where $t \leq C_{j,\alpha}$ and $0$ otherwise.  If all constraints are satisfied in the LP, then the algorithm sets $C_j^* = C_{j,\alpha}$ and returns this set of completion times as the final solution.  If not, then the algorithm further modifies $x^*$ as described below and recurses on phase $i+1$.  Let $x^*$ denote the current integral solution and $\tilde{x}^*$ a fractional solution resulting from the following modification to $x^*$.


Consider any constraint in (\ref{constraint:nospeed}) defined by a set of time steps $T'$ and a vector of completion times $D'$ that is not satisfied by $x^*$. Assume $D'$ is chosen so that $D'_j \geq C_{j,\alpha}$ for all $j$.  We only need to consider  these constraints because $x_{j,t} =1$ for $t\leq  C_{j,\alpha}$.  Recall that we may assume $T'$ contains a continuous set of time steps beginning with some time $t_{T'}$ and going to $\infty$.   The algorithm  identifies a set of pairs of jobs and completion times that are fractionally chosen in $x'$ that are `critical' for satisfying this constraint.  Intuitively, we will need to include these same fractional completion times in $\tilde{x}^*$. 


 Let $J_{T', D'}$ be the set of jobs $j$ where  $\sum_{t \in E(T',D',j)} x'_{j,t} >0$. These are jobs used to satisfy the constraint in $x'$. The following two sets of jobs can be thought of as \emph{not} critical for the constraint.

\begin{enumerate}
\item Order the jobs $j$ in $J_{T',D'}$  as $1, 2, 3, \ldots  |J_{T',D'}|$ in decreasing order of $D'_j$.  Let $M_{T',D'} \subseteq J_{T',D'}$ be the smallest prefix of jobs $1, 2, \ldots k$ from this order such that $\sum_{j=1}^k  \sum_{t \in E(T',D',j)} x'_{j,t} \geq \frac{1}{10} V(T',D')$. 
\item Order the jobs $j$ in $J_{T',D'}$ as $1, 2, 3, \ldots, |J_{T',D'}|$ in increasing order of $D'_j$.  Let $L_{T',D'} \subseteq J_{T',D'}$ be the smallest prefix of jobs $1, 2, \ldots k$ in this order such that $\sum_{j=1}^k  \sum_{t \in E(T',D',j)} x'_{j,t} \geq \frac{1}{10} V(T',D')$.  
\end{enumerate} 

We will say that  the  jobs in $\mathcal{J}_{T',D'} = J_{T',D'} \setminus  (L_{T',D'} \cup M_{T',D'})$ are \textbf{critical} for satisfying the constraint $T',D'$. Let $\mathcal{J}^* = \{j \; | \; \exists T'D' \mbox{ s.t.  the constraint for $T'$ and $D'$ is unsatisfied in $x^*$ and $j \in \mathcal{J}_{T',D'}$}  \}$.  This is the set of all critical jobs for constraints that are not satisfied by $x^*$. 

 For each job $j \in \mathcal{J}^*$ set $\tilde{x}^*_{j,t'}$ to be  $10x'_{j,t'}$ for all $t' \geq C_{j,\alpha} $.  Note that if $c < \frac{1}{10}$ and $C_{j,\alpha} \leq t'$ it is the case that $\tilde{x}^*_{j,t'} \leq c \leq 1/10$, so $\tilde{x}^*$ is in $[0,1]$ as desired. After performing all updates, recurse.

\medskip
\noindent \textbf{Terminating Condition:}  The above description states that the algorithm terminates when all constraints are satisfied by the integer solution $x^*$.   The analysis will show that this occurs in polynomial time in expectation.

\section{Bounding the Cost and Feasibility of the Algorithm}
\label{sec:costbound}

In this section, the correctness of the algorithm is established and  the total cost of the algorithm  is bounded as well as the running time. The analysis has several intermediate goals.  One is to show that the resulting solution $\tilde{x}^*$ is feasible. Another is to show that the increase in cost of the randomized rounding is bounded by the cost of the fractional part of the LP in one iteration.   The final goal is showing that the expected value of the objective for the fractional part of the LP solution decreases significantly in each iteration.  After establishing these facts Theorem~\ref{thm:flow} will prove the main theorem. 

\medskip
\noindent \textbf{Feasibility of the Algorithm:}\label{sec:feasible} We now establish that $\tilde{x}^*$ is feasible for the LP.  This ensures the algorithm constructs a feasible solution.   The proof is omitted due to space and is in Appendix~\ref{sec:omitted}. The proof follows  by the fact that each unsatisfied constraint is associated with variables in the solution $x'$ which are within a constraint multiplicative factor of satisfying the constraint.   In the recursion, these variables are included in $\tilde{x}^*$ with their fractional valuesincreased by a factor $10$ over $x'$ ensuring their corresponding constraint is satisfied. 

\begin{lemma}
\label{lem:feasible}
In any iteration of the algorithm, if $x'$ is a feasible LP solution  then the algorithm constructs a feasible solution $\tilde{x}^*$ at the end of the iteration for any $c \leq 1/10$.  
\end{lemma}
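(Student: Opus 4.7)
The plan is to verify each type of LP constraint for $\tilde{x}^*$ in turn. The monotonicity constraints (\ref{constraint:greater}) and the box constraints $\tilde{x}^*_{j,t} \in [0,1]$ are immediate: for $t \leq C_{j,\alpha}$ we have $\tilde{x}^*_{j,t} = 1$, while for $t > C_{j,\alpha}$ the variable is either $0$ (if $j \notin \mathcal{J}^*$) or $10\,x'_{j,t}$ (if $j \in \mathcal{J}^*$). Since $C_{j,\alpha}$ is the latest $t$ with $x'_{j,t} \geq c\alpha_j$, every $t > C_{j,\alpha}$ satisfies $x'_{j,t} < c\alpha_j \leq c \leq 1/10$, giving $10\,x'_{j,t} < 1$; monotonicity across the jump at $t = C_{j,\alpha}$ then follows from $10\,x'_{j,C_{j,\alpha}+1} \leq 1 = \tilde{x}^*_{j,C_{j,\alpha}}$ together with monotonicity of $x'$.

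The substance of the lemma lies in the job cover inequalities (\ref{constraint:nospeed}). Fix a constraint indexed by $(T',D)$. If $x^*$ alone satisfies it, then so does $\tilde{x}^* \geq x^*$ and we are done. Otherwise I would first reduce to the case $D_j \geq C_{j,\alpha}$ for every $j$. For any $j$ with $D_j < C_{j,\alpha}$, every $t \in E(T',D,j) \cap (D_j, C_{j,\alpha}]$ contributes the integral value $x^*_{j,t}=1$ to the left-hand side; these contributions can be absorbed into the right-hand side by replacing $D_j$ with $\max(D_j, C_{j,\alpha})$, yielding a shifted constraint with a correspondingly smaller $V$ whose satisfaction by $\tilde{x}^*$ implies satisfaction of the original.

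In the reduced case every $t \in E(T',D,j)$ lies strictly after $C_{j,\alpha}$, so $\tilde{x}^*_{j,t} = 10\, x'_{j,t}$ for every $j \in \mathcal{J}_{T',D} \subseteq \mathcal{J}^*$. It therefore suffices to establish
\[
\sum_{j \in \mathcal{J}_{T',D}} \sum_{t \in E(T',D,j)} x'_{j,t} \;\geq\; \frac{1}{10}\,V(T',D).
\]
Feasibility of $x'$ gives that the full sum over $J_{T',D}$ is at least $V(T',D)$. The definitions of $M_{T',D}$ and $L_{T',D}$ as the \emph{smallest} prefixes (in decreasing and increasing order of $D_j$, respectively) that first cross the $\frac{1}{10}V$ threshold bound each prefix's fractional contribution by $\frac{1}{10}V$ plus the contribution of its single last job. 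I would exploit the orderings---in decreasing-$D_j$ order the earlier jobs have small $|E(T',D,j)|$, while in increasing-$D_j$ order the sets $E(T',D,j)$ of early jobs occupy disjoint later portions of $E(T',j)$---to control the overshoot on each side and conclude that $L_{T',D} \cup M_{T',D}$ together absorb at most roughly $\frac{2}{10}V(T',D)$ of the total mass, leaving $\mathcal{J}_{T',D}$ with at least $\frac{1}{10}V(T',D)$ as required.

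The main obstacle I expect is precisely this last combinatorial accounting. The danger is that a single job lying at the boundary of a prefix could contribute nearly $|E(T',D,j)| \leq V(T',D)$, so naively the overshoot could absorb essentially all of the fractional LP mass and leave $\mathcal{J}_{T',D}$ empty. Ruling this out is the reason the construction takes two prefixes from \emph{opposite} ends of the $D_j$-ordering rather than one, and keeps the threshold at the small constant $\frac{1}{10}$ rather than pushing it closer to $\frac{1}{2}$; the argument must leverage the monotone structure of the $E(T',D,j)$ sets along the $D_j$-ordering to show that the two prefix overshoots cannot simultaneously be large.
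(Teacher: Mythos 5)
Your setup matches the paper's proof: handle the box/monotonicity constraints directly, reduce to job cover constraints with $D_j \geq C_{j,\alpha}$, and show that the critical jobs $\mathcal{J}_{T',D}$ retain a constant fraction of the fractional mass so that the factor-$10$ boost in $\tilde{x}^*$ satisfies the constraint. However, at exactly the step you flag as the ``main obstacle'' there is a genuine gap, and the mechanism you propose to close it is not the right one. You suggest that one must argue the two prefix overshoots cannot \emph{simultaneously} be large by exploiting how the sets $E(T',D,j)$ nest along the $D_j$-ordering; no such interaction argument is needed (nor is it clear it could work), and you never substantiate it. The correct resolution is a simple per-job bound on each boundary job: every time $t \in E(T',D,j)$ is strictly later than $D_j \geq C_{j,\alpha}$, and since $C_{j,\alpha}$ is the latest time with $x'_{j,t} \geq c\alpha_j$, every such variable satisfies $x'_{j,t} \leq c$; moreover $|E(T',D,j)| \leq V(T',D)$ by the definition of $E(T',D,j)$ as the earliest (at most) $V(T',D)$ time steps after $D_j$. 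Hence the single job that closes either prefix contributes at most $c\,|E(T',D,j)| \leq c\,V(T',D) \leq \tfrac{1}{10}V(T',D)$ when $c \leq 1/10$, so each of $L_{T',D}$ and $M_{T',D}$ absorbs at most $\tfrac{2}{10}V(T',D)$, their union at most $\tfrac{4}{10}V(T',D)$, and feasibility of $x'$ leaves $\mathcal{J}_{T',D}$ with at least $\tfrac{6}{10}V(T',D)$, which after multiplication by $10$ satisfies the constraint.

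Note also that this is where the hypothesis $c \leq 1/10$ really earns its keep in this lemma; in your write-up it is only used to keep $10x'_{j,t} \leq 1$, which is a sign that the load-bearing step is missing. Your reduction to constraints with $D_j \geq C_{j,\alpha}$ and your observation that constraints already satisfied by $x^*$ remain satisfied since $\tilde{x}^* \geq x^*$ are fine and match the paper (with the caveat, implicit in both treatments, that the boost $\tilde{x}^*_{j,t} = 10x'_{j,t}$ is only guaranteed for jobs critical to some \emph{unsatisfied} constraint, which is exactly the case where the argument is needed).
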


\medskip
\noindent \textbf{Bounding the Cost and Running Time of the Algorithm:} This section bounds the cost of the LP solution $\tilde{x}^*$ and the running time of the algorithm.   First the cost of the randomized rounding is bounded.  This bounds the cost of the \emph{intermediate} integral solution $x^*$. The following lemma shows that the expected cost \emph{increase} of the LP solution $x^*$ over $x'$ is bounded by $\frac{1}{c}\lpfrac$.    Recall that  $x^*$ is the solution obtained by only the randomized rounding part of the algorithm and it is a possibly infeasible integer solution. After this lemma, the cost of the solution  $\tilde{x}^*$ is bounded.  This is the solution the algorithm recurses on.  Combining the cost over the entire algorithm is bounded.

 The following lemma bounds the cost of the solution $x^*$.  The proof can be found in Appendix~\ref{sec:omitted}.  The proof follows by a standard analysis of randomized rounding.

\begin{lemma}
\label{lem:expectedcost}
The total expected different in cost  of $x^*$ and $x'$ is at most  $\frac{1}{c}\lpfrac$.  That is, $\ex[ \sum_{j \in J} \sum_t (x^*_{j,t} - x'_{j,t}) (g_j(t) - g_j(t-1))]  \leq \frac{1}{c} \sum_{j \in J} \sum_{t } \lpfrac$.
\end{lemma}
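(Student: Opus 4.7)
The plan is to compute $\mathbb{E}[x^*_{j,t}]$ directly from the definition of the randomized rounding and then compare term by term against the LP. First I would observe a useful invariant: because $\alpha_j \in [0,1]$ and $c \leq 1$, whenever $x'_{j,t}=1$ we have $x'_{j,t} = 1 \geq c\alpha_j$. Since (by constraint (\ref{constraint:greater})) the profile $x'_{j,\cdot}$ is non-increasing in $t$, the time $C_{j,\alpha}$, defined as the latest $t$ with $x'_{j,t} \geq c\alpha_j$, is at least $\beta'_j$. Hence for every $t \leq \beta'_j$ we have $x^*_{j,t} = 1 = x'_{j,t}$, and the corresponding terms contribute nothing to the expected difference. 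All the action happens at times $t > \beta'_j$, which is exactly the range that appears in $\lpfrac$.

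Next I would analyze the probability that $x^*_{j,t}=1$ for a fixed $t > \beta'_j$. By monotonicity of $x'_{j,\cdot}$, the event $t \leq C_{j,\alpha}$ is equivalent to $x'_{j,t} \geq c\alpha_j$, i.e.\ $\alpha_j \leq x'_{j,t}/c$. Since $\alpha_j$ is uniform on $[0,1]$,
\begin{equation*}
\Pr[x^*_{j,t}=1] \;=\; \min\!\left\{1,\ \tfrac{x'_{j,t}}{c}\right\} \;\leq\; \frac{x'_{j,t}}{c}.
\end{equation*}
Consequently $\mathbb{E}[x^*_{j,t} - x'_{j,t}] \leq x'_{j,t}/c - x'_{j,t} \leq x'_{j,t}/c$ for $t > \beta'_j$.

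Finally I would put the pieces together by linearity of expectation, using that $g_j(t)-g_j(t-1) \geq 0$ (monotonicity of $g_j$) to preserve the inequality when multiplying:
\begin{align*}
\mathbb{E}\!\left[\sum_{j}\sum_{t}(x^*_{j,t}-x'_{j,t})(g_j(t)-g_j(t-1))\right]
&= \sum_{j}\sum_{t > \beta'_j}\mathbb{E}[x^*_{j,t}-x'_{j,t}]\,(g_j(t)-g_j(t-1)) \\
&\leq \sum_{j}\sum_{t > \beta'_j}\frac{x'_{j,t}}{c}\,(g_j(t)-g_j(t-1)) \;=\; \frac{1}{c}\lpfrac,
\end{align*}
which is the claimed bound. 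There is no real obstacle here; the only subtle point is confirming that the rounding never pushes $C_{j,\alpha}$ before $\beta'_j$, which localises the entire contribution to the fractional tail of $x'$ and is why the bound involves $\lpfrac$ rather than the full LP cost.
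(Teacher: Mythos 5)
Your proposal is correct and takes essentially the same route as the paper's proof: a standard randomized-rounding computation using that $\alpha_j$ is uniform and $x'_{j,\cdot}$ is non-increasing, so each variable is over-sampled by at most a $\frac{1}{c}$ factor and the increase is charged entirely to the fractional tail $t>\beta'_j$, i.e.\ to $\lpfrac$. The only cosmetic difference is bookkeeping: you bound the marginals $\Pr[x^*_{j,t}=1]=\min\{1,\,x'_{j,t}/c\}$ against the increments $g_j(t)-g_j(t-1)$, whereas the paper computes the distribution of $C_{j,\alpha}$ and compares $\sum_{t}\Pr[C_{j,\alpha}=t]\,g_j(t)$ to a telescoped form of the LP cost, which is a summation-by-parts rewrite of the same argument.
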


Let $\lpfracs := \sum_{j\in J} \sum_{t > C_{j,\alpha}} \tilde{x}^*_{j,t}(g_j(t) - g_j(t-1))$ be the fractional cost of $\tilde{x}^*$ and $\lpints =\sum_{j\in J}  g(C_{j,\alpha})$ be the integral cost.  Note that $\lpints$ is precisely the objective of the integral solution $x^*$. The key to bounding the cost of the algorithm is to show that the fractional cost of the LP solution decreases by a constant factor in each iteration. This is stated in the following lemma.  The proof is deferred due to space.  This lemma is the most interesting part of the analysis and the proof is presented in Section~\ref{sec:type1}.

\begin{lemma}
\label{lem:costtotal}
In any iteration of the algorithm, $\ex[\lpfracs] \leq \frac{1}{4}\lpfrac$. 
\end{lemma}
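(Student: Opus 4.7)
\medskip\noindent \textbf{Proof proposal.} The plan is to unpack the definition of $\tilde x^*$, reduce the lemma to a per-job probabilistic claim, and then prove that claim using a Chernoff bound together with the structure of the job cover inequalities.

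By the rule defining $\tilde x^*$ in Section~\ref{sec:alg}, $\tilde x^*_{j,t}=10\,x'_{j,t}$ whenever $j\in\mathcal{J}^*$ and $t\ge C_{j,\alpha_j}$, and $\tilde x^*_{j,t}=x^*_{j,t}=0$ whenever $j\notin\mathcal{J}^*$ and $t>C_{j,\alpha_j}$. Writing $\Delta g_j(t):=g_j(t)-g_j(t-1)$, this gives
\begin{equation*}
\lpfracs \;=\; 10\sum_{j\in\mathcal{J}^*}\sum_{t>C_{j,\alpha_j}} x'_{j,t}\,\Delta g_j(t).
\end{equation*}
Taking expectations, conditioning on $\alpha_j$, and applying Fubini yields
\begin{equation*}
\ex[\lpfracs] \;=\; 10\sum_{j\in J}\int_0^1 \Pr[j\in\mathcal{J}^*\mid\alpha_j=a]\,\sum_{t>C_{j,a}} x'_{j,t}\,\Delta g_j(t)\,da.
\end{equation*}
Since $x'_{j,\beta'_j}=1\ge c\,a$ for every $a\in[0,1]$, one has $C_{j,a}\ge\beta'_j$, so the inner sum is upper bounded by $\sum_{t>\beta'_j} x'_{j,t}\,\Delta g_j(t)$. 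Consequently, the lemma will follow from the \emph{key probabilistic claim}
\begin{equation*}
\Pr\bigl[j\in\mathcal{J}^*\mid\alpha_j=a\bigr]\;\le\;\tfrac{1}{40}\qquad\text{for every }j\in J\text{ and every }a\in[0,1],
\end{equation*}
since then $\ex[\lpfracs]\le 10\cdot\tfrac{1}{40}\,\lpfrac=\tfrac{1}{4}\lpfrac$.

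To prove the key claim, fix $j$ and $a$. The event $j\in\mathcal{J}^*$ requires the existence of a constraint $(T',D')$ with $T'=[b,\infty]$ and $D'_{j'}\ge C_{j',\alpha_{j'}}$ for every $j'$, violated by $x^*$, and with $j\in\mathcal{J}_{T',D'}$. By the definitions of $L_{T',D'}$ and $M_{T',D'}$, each of these two buffer sets covers at least $V(T',D')/10$ of the constraint fractionally in $x'$. The randomized rounding satisfies $\ex[x^*_{j',t}]=\min(1,x'_{j',t}/c)$ and the $x^*_{j',t}$'s are independent across jobs, so the expected integer contribution from $L_{T',D'}\cup M_{T',D'}$ alone is at least $\tfrac{V(T',D')}{5c}=\Omega(V(T',D')\log\log nP)$, which comfortably exceeds the required threshold $V(T',D')$ once $c$ is chosen as $1/\Theta(\log\log nP)$. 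A standard Chernoff bound then shows that for any fixed $(T',D')$ the probability of violation is $\exp(-\Omega(V(T',D')/c))$.

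The main obstacle is then the union bound, because a priori (\ref{constraint:nospeed}) has super-polynomially many constraints. The hard part of the proof is to identify a polynomial-sized family of \emph{witness} constraints such that, whenever every witness is satisfied by $x^*$, every constraint with $j\in\mathcal{J}_{T',D'}$ is also satisfied. I would exploit two structural features for this. First, by the reduction recorded in (\ref{constraint:nospeed}) one may restrict to $T'=[b,\infty]$, so $T'$ is determined by the single scalar $b\in\{0,1,\dots,nP\}$. Second, the constraints are monotone in $D'$ (larger $D'_{j'}$ values yield a tighter inequality on $x^*$), so $D'$ can be coarsened to a discrete family of canonical completion-time vectors whose coordinates are drawn from the dyadic sampling thresholds (the times where the fractional LP solution $x'$ crosses values of the form $c\alpha$). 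Combining the resulting polynomial union bound with the per-constraint Chernoff tail above, in the style of quasi-uniform sampling analyses \cite{Varadarajan10,ImM17}, yields the key probabilistic claim, provided $c$ is a sufficiently small constant multiple of $1/\log\log nP$, and the lemma follows.
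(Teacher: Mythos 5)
Your reduction of the lemma to a per-job probabilistic claim is sound and is exactly the role played by Lemma~\ref{lem:t1prob} in the paper; your conditioning on $\alpha_j$ even treats the correlation between the random cost coefficient $\sum_{t>C_{j,\alpha}}x'_{j,t}(g_j(t)-g_j(t-1))$ and the bad event somewhat more explicitly than the paper does (this is harmless there because the relevant randomness in the paper's argument comes from jobs other than $j^*$, since a critical job is excluded from $L_{T',D'}\cup M_{T',D'}$). The genuine gap is in how you propose to prove the per-job claim. Your per-constraint tail bound $\exp(-\Omega(V(T',D')/c))$ is incorrect: a single job can contribute up to $|E(T',D',j)|\le V(T',D')$ to the left-hand side, so the second-moment concentration bound only yields $\exp(-\Omega(1/c))$ per constraint, which is precisely what Lemma~\ref{lem:offlinewhp} proves (exponent $1/(80c)$). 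With $c=\Theta(1/\log\log nP)$ this is merely $1/\mathrm{polylog}(nP)$, so a union bound over a \emph{polynomial-sized} family of witness constraints, as you suggest, is vacuous: $\mathrm{poly}(nP)\cdot 1/\mathrm{polylog}(nP)\gg 1$. Making a polynomial union bound go through would force $c=\Theta(1/\log nP)$, i.e., exactly the $O(\log nP)$ bound the lemma is designed to beat.

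What is missing is a per-job witness structure of only $O(\log nP)$ constraints, which is the heart of the paper's argument. After fixing $j^*$ and reducing to the single canonical vector $D'$ (with $D'_j$ the last time $x'_{j,t}\ge c$, Proposition~\ref{prop:allD}), the constraints for which $j^*$ is critical are split according to whether $t_{T'}$ lies after or before $D'_{j^*}$, and each side is grouped geometrically by the value of $V(T',D')$ into $O(\log nP)$ classes (Lemmas~\ref{lem:afterd} and~\ref{lem:befored}). Within a class one takes the single extreme constraint (latest, respectively earliest, $t_{T'}$) and shows that if the jobs of $L_{T',D'}$ (respectively $M_{T',D'}$) alone oversatisfy it by a factor $10$ in $x^*$, then \emph{every} constraint in the class is satisfied; this transfer uses Propositions~\ref{prop:during} and~\ref{prop:before}, and it works only because criticality of $j^*$ forces all jobs of $L_{T',D'}$ (respectively $M_{T',D'}$) to have their $D'$-values on one side of $D'_{j^*}$, which is what the hypotheses of those propositions require. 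The union bound is then over only $O(\log nP)$ witnesses, each failing with probability at most $1/\log^{10} nP$ by Lemma~\ref{lem:offlinewhp}, and this is what makes $c=\Theta(1/\log\log nP)$ suffice. Your sketch of coarsening $D'$ to canonical vectors contains no analogue of this grouping-and-domination mechanism, and without it the argument does not close.
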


Using this lemma, the total cost of the algorithm can be bounded.  

\begin{lemma}
\label{lem:Finalcosts}
Let $\opt$ be the optimal feasible objective to the LP.  It is the case that the algorithm's total cost is at most $\frac{2}{c} \opt$ in expectation.
\end{lemma}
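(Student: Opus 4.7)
My plan is to bound the expected final cost $\ex[\lpints^{(i^*)}]$ by telescoping one-step increments produced by Lemma~\ref{lem:expectedcost}, then crushing the resulting sum of fractional costs using the geometric decay from Lemma~\ref{lem:costtotal}. I index iterations by $i$, write $x^{(i)}$ for the LP solution fed into iteration $i$ (so $x^{(1)}$ is the optimal LP solution and $\opt = \lpint^{(1)} + \lpfrac^{(1)}$), and let $i^*$ denote the random termination iteration.

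The argument rests on two structural observations. First, because the construction of $\tilde x^{*(i)}$ only \emph{raises} variables above $x^{*(i)}$, we have $x^{(i+1)}_{j,t} = 1$ for every $t \le C_{j,\alpha}^{(i)}$; hence $C_{j,\alpha}^{(i+1)} \ge C_{j,\alpha}^{(i)}$ and $\lpints^{(i)}$ is monotone non-decreasing in $i$. Second, when $c \le 1/10$, for $t > C_{j,\alpha}^{(i)}$ we have $\tilde x^{*(i)}_{j,t} \le 10\,x^{(i)}_{j,t} < 10c \le 1$, so the last time with $\tilde x^{*(i)}_{j,t} = 1$ is exactly $C_{j,\alpha}^{(i)}$. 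This gives the key identity $\lpint^{(i+1)} = \lpints^{(i)}$ that enables the telescoping.

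Applying Lemma~\ref{lem:expectedcost} at iteration $i$ conditional on $x^{(i)}$ yields $\ex[\lpints^{(i)} \mid x^{(i)}] \le \lpint^{(i)} + (1 + \tfrac{1}{c})\lpfrac^{(i)}$, because the LP cost of $x^{(i)}$ is $\lpint^{(i)} + \lpfrac^{(i)}$ while $x^{*(i)}$ is integral with cost $\lpints^{(i)}$. Taking unconditional expectations and substituting $\lpint^{(i)} = \lpints^{(i-1)}$ for $i \ge 2$ produces the per-iteration increment bound $\ex[\lpints^{(i)} - \lpints^{(i-1)}] \le (1 + \tfrac{1}{c})\ex[\lpfrac^{(i)}]$, which telescopes to
\[
\ex[\lpints^{(i^*)}] \le \lpint^{(1)} + \Bigl(1 + \tfrac{1}{c}\Bigr) \sum_{i \ge 1} \ex[\lpfrac^{(i)}].
\]
The random stopping time is handled by extending the process with trivial iterations past termination and invoking monotone convergence (valid since $\lpints^{(i)}$ is monotone and bounded by $\sum_j g_j(nP)$). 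Finally, Lemma~\ref{lem:costtotal} combined with the identity $\lpfrac^{(i+1)} = \lpfracs^{(i)}$ yields $\ex[\lpfrac^{(i+1)}] \le \tfrac{1}{4}\ex[\lpfrac^{(i)}]$, hence $\sum_{i \ge 1} \ex[\lpfrac^{(i)}] \le \tfrac{4}{3}\lpfrac^{(1)}$. Substituting and using $c \le 1/2$ (which makes $\tfrac{4(1+1/c)}{3} \le \tfrac{2}{c}$ and $\tfrac{2}{c} \ge 1$) produces
\[
\ex[\lpints^{(i^*)}] \le \lpint^{(1)} + \tfrac{2}{c}\lpfrac^{(1)} \le \tfrac{2}{c}\bigl(\lpint^{(1)} + \lpfrac^{(1)}\bigr) = \tfrac{2}{c}\opt.
\]

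The main obstacle I anticipate is not probabilistic — the hard probabilistic content is already absorbed into Lemmas~\ref{lem:expectedcost} and \ref{lem:costtotal} — but rather bookkeeping: carefully verifying the identity $\lpint^{(i+1)} = \lpints^{(i)}$ so that the integral part at the start of each iteration is exactly the cost incurred at the end of the previous one, and passing from the telescoped finite-$i$ bound to the random stopping time via monotonicity. Everything else is a geometric series and a final algebraic comparison.
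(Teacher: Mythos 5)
Your proposal is correct and follows essentially the same route as the paper: bound the per-iteration increase of the integral cost via Lemma~\ref{lem:expectedcost}, use the geometric decay $\ex[\lpobj]\leq 4^{-(i-1)}\opt$ from Lemma~\ref{lem:costtotal}, and sum the geometric series. Your write-up is simply more explicit about the bookkeeping the paper's terse proof glosses over (the identity that the integral cost at the start of iteration $i+1$ equals $\lpints$ of iteration $i$, the contribution of the initial integral cost, the extra $+1$ in the $(1+\tfrac{1}{c})$ factor, and the passage to the random stopping time), all of which is absorbed by the slack in the constant $\tfrac{2}{c}$.
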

\begin{proof}

Let $\lpobj$ denote the fractional part of the objective in the LP solution at the beginning of the $i$th iteration of the algorithm.  Note that $\mathrm{LP^1_{frac}}\leq \opt$.  Inductively,   Lemma~\ref{lem:costtotal} gives that $\ex[\lpobj] \leq \frac{1}{4^i} \opt$.  

Lemma~\ref{lem:expectedcost} ensures that the integral portion of the LP objective increases by at most $\frac{1}{c} \lpobj$ in each iteration.  Thus the total cost can be bounded by $\frac{1}{c}\sum_{i=1}^\infty \frac{1}{4^i} \opt \leq \frac{2}{c} \opt.$
\end{proof}

The following proposition bounds the run time of the algorithm and the proof is in Appendix~\ref{sec:omitted}. This follows because the previous lemma will ensure the variables in the LP converge to $0$ after $O(\log nP)$ iterations.

\begin{proposition}
\label{prop:runtime}
The algorithm runs in polynomial time in expectation.
\end{proposition}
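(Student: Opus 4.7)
The plan is to chain Lemma~\ref{lem:costtotal} across iterations and then apply Markov's inequality. By induction on the iteration number $i$, the expected fractional cost at the start of iteration $i$ satisfies $\ex[\lpobj] \leq \opt / 4^{i-1}$. The algorithm halts as soon as the intermediate integer solution $x^*$ is feasible; a sufficient condition for this is that the current LP iterate $x'$ is itself integer, which in turn is implied by $\lpobj = 0$ (up to a minor argument handling coordinates where $g_j(t)-g_j(t-1)=0$, which do not affect feasibility of the integer rounding).

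To convert the geometric cost decay into a bound on the number of iterations, I would establish a polynomial lower bound $\delta$ on the minimum positive value that $\lpobj$ can take in any iteration. The ingredients are (i) LP basic-feasible-solution theory, which bounds denominators of the initial iterate $x^{(1)}$ polynomially in the input size, (ii) the observation that every later iterate $x^{(i)}$ is obtained from $x^{(1)}$ by multiplication by powers of $10$ and by replacement of coordinates with $\{0,1\}$, so denominators remain polynomially bounded as a function of the iteration number, and (iii) the polynomial bit complexity of the $g_j$ values in the input. Markov's inequality then yields $\Pr[\lpobj \geq \delta] \leq \opt / (\delta \cdot 4^{i-1})$, which drops below $1/2$ after polynomially many iterations, giving an expected iteration count that is polynomial.

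Each iteration itself runs in polynomial time: the strengthened LP is solved via the dynamic-programming separation oracle from Section~\ref{sec:lp}; the randomized rounding is a per-job thresholding against $c\alpha_j$; and identifying unsatisfied constraints together with their critical sets $\mathcal{J}_{T',D'}$ uses the same separation oracle followed by the sorting procedure that defines $L_{T',D'}$ and $M_{T',D'}$. Multiplying a polynomial expected iteration count by polynomial per-iteration work yields the claimed polynomial expected running time.

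The principal obstacle is establishing the $\delta$ lower bound cleanly, since one must track denominators through the iterative $\times 10$ rescaling and reason carefully about interactions with the $g_j$ values. A useful alternative, should a direct denominator bound prove cumbersome, is to terminate the algorithm as soon as $x^*$ itself becomes feasible (which may occur strictly before $\lpobj = 0$); this weakens the termination criterion and can only decrease the expected running time.
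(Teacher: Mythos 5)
Your skeleton (geometric decay of the fractional cost via Lemma~\ref{lem:costtotal}, then a minimum-positive-value bound $\delta$ plus Markov, then polynomial per-iteration work) is reasonable, but the step that links small fractional cost to termination has a genuine gap. You claim that $\lpobj=0$ essentially forces $x'$ to be integral, dismissing coordinates with $g_j(t)-g_j(t-1)=0$ as ones that ``do not affect feasibility of the integer rounding.'' That is exactly what is unproven, and it is false in general: a coordinate $x'_{j,t}$ with zero marginal cost contributes nothing to $\lpfrac$ but can carry fractional mass that the job cover inequalities (\ref{constraint:nospeed}) rely on, and the rounding can destroy it --- any coordinate with $x'_{j,t}<c\alpha_j$ is set to $0$ in $x^*$, which happens with probability $1-x'_{j,t}/c$ when $x'_{j,t}<c$. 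So $\lpobj=0$ does not imply the current $x^*$ is feasible, the algorithm may keep recursing on exactly these zero-cost fractional coordinates, and since they contribute nothing to the objective, your decay-plus-Markov argument gives no control over how many further iterations they cause. (Your closing fallback does not help: terminating ``as soon as $x^*$ is feasible'' is already the algorithm's rule, and the problem is that termination can occur strictly \emph{after} $\lpobj=0$, not before.) The gap is repairable --- e.g., note that every surviving fractional coordinate is multiplied by $10$ in each iteration, so after $O(\log (c/x_{\min}))$ additional iterations it is either zeroed or exceeds $c$ and is thereafter always rounded up --- but that argument is missing from your write-up.

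The paper avoids both this issue and your bit-complexity machinery by using a purely combinatorial termination threshold on the \emph{variables} rather than the objective: once every fractional variable is below $1/(nP)^2$, any constraint in (\ref{constraint:nospeed}) has at most $nP$ variables on its left-hand side, so the fractional contribution is below $1$; since the right-hand side $V(T',D)$ is an integer, the integral part of the left-hand side alone must cover it, and because $x^*$ dominates the integer part of $x'$ (as $C_{j,\alpha}\ge\beta'_j$), the next rounded solution is feasible regardless of the coin flips. Combined with Lemma~\ref{lem:Finalcosts} this yields an $O(\log nP)$ expected iteration bound with no appeal to denominators of basic feasible solutions or to the bit-length of the $g_j$'s, which is where your proposal concentrates its effort (and which you yourself flag as the delicate part).
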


 Lemma~\ref{lem:Finalcosts} bounds the objective of the algorithm.  Lemma~\ref{lem:feasible} ensures that the algorithm constructs a feasible solution.  Finally, Propostion~\ref{prop:runtime} shows the algorithm runs in polynomial time.  Together, this proves the main result, Theorem~\ref{thm:main}.

\subsection{Proof of Lemma~\ref{lem:costtotal}: Expected Decrease in the Fractional Objective}
\label{sec:type1}

This section proves Lemma~\ref{lem:costtotal} for a fixed iteration of the algorithm.  Fix a job $j$ and the fractional solution $x'$ that is input to the rounding algorithm in this iteration.   The goal  is to show that the probability that $j$ is in $\mathcal{J}^*$ is small and therefore the algorithm only includes integral variables $\tilde{x}^*_{j,t}$  for job $j$ when it recurses.   In particular, the goal is to show the following lemma.  In the following, $n$ and $P$ are assumed to be sufficiently large.

\begin{lemma}
\label{lem:t1prob}
Fix any job $j^*$.  With probability at most $\frac{2}{\log^2 nP} \leq 1/40  $ it is the case that there is a  constraint $T',D'$ unsatisfied by $x^*$ and $j^* \in \mathcal{J}_{T',D'}$ when $c \leq \frac{1}{1000\log \log nP}$.  
\end{lemma}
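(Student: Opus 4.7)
The plan is to fix $j^*$ and bound the probability $\Pr[\mathcal{E}]$ of the event $\mathcal{E}$ that some constraint $(T',D')$ with $j^* \in \mathcal{J}_{T',D'}$ is unsatisfied by $x^*$.  Since the family of $(T',D')$ is super-polynomial but the desired probability bound is $O(1/\log^2 nP)$, the strategy is the standard two-step one in the spirit of quasi-uniform sampling: a \emph{structural reduction} to a polylogarithmic-sized canonical family, followed by a \emph{Chernoff plus union bound} over that family.

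For the structural reduction, I would show that if $\mathcal{E}$ occurs, then there is a canonical constraint $(T',\bar D)$ in some family $\mathcal{C}^*$ of size $\mathrm{polylog}(nP)$ that also witnesses $\mathcal{E}$.  The recipe is to restrict $T'$ to sets $[b,\infty)$ (already ensured by the reduced LP~(\ref{constraint:nospeed})), to dyadically bucket both the deficit $V(T',D')$ and the pivot $D'_{j^*}$ into $O(\log nP)$ levels each, and to replace the remaining coordinates of $D'$ by canonical values determined by these buckets (for each $j \neq j^*$, push $D'_j$ to the largest value within its bucket that still keeps $j^* \in \mathcal{J}_{T',D'}$).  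The two-sided buffer sets $L_{T',D'}$ and $M_{T',D'}$ — each carrying fractional LP-mass at least $V(T',D')/10$ — provide the slack needed to argue that this canonicalization preserves both $j^* \in \mathcal{J}_{T',\bar D}$ and the unsatisfaction by $x^*$.

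For the Chernoff step, fix a canonical $(T',\bar D)$.  The left-hand side under $x^*$ is $X = \sum_{j\in J} Y_j$ where $Y_j = |E(T',\bar D,j) \cap [0,C_{j,\alpha}]|$ depends only on the independent random variable $\alpha_j$, so the $Y_j$ are mutually independent.  For each pair $(j,t)$ with $t \in E(T',\bar D,j)$ the rounding rule gives $\Pr[x^*_{j,t}=1] = \min(x'_{j,t}/c,1) \ge x'_{j,t}/c$, whence $\ex[X] \ge V(T',\bar D)/c = 1000 \log\log(nP)\cdot V(T',\bar D)$.  Unsatisfaction requires $X < V(T',\bar D)$, a $(1-c)$-multiplicative deviation from the mean, and a weighted Chernoff bound (using the per-job bound $Y_j \le p_j$, and handling separately any $(j,t)$ with $x'_{j,t} > c$, whose contribution is deterministic and only helps) gives failure probability at most $\exp(-\Omega(V(T',\bar D)/c)) \le (\log nP)^{-500}$, since the constraints are only active when $V \ge 1$.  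Union-bounding over $\mathcal{C}^*$ then yields $\Pr[\mathcal{E}] \le |\mathcal{C}^*|\cdot(\log nP)^{-500} \le 2/\log^2(nP)$.

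The main obstacle is the structural reduction — specifically, showing that the canonical replacement of $D'$ preserves both the critical membership of $j^*$ and the unsatisfaction of the constraint.  This is precisely where the two-sided buffer design in the definition of $\mathcal{J}_{T',D'}$ pays off: the $L$-buffer protects $j^*$ against being pushed out of criticality ``from below'' as the $D'_j$'s with $D'_j < D'_{j^*}$ are rounded, and the $M$-buffer protects it ``from above'', while simultaneously leaving enough LP-mass margin that the rounded constraint remains unsatisfied after the perturbation of $D'$.  Making $|\mathcal{C}^*|$ truly $\mathrm{polylog}(nP)$ — rather than $\mathrm{poly}(nP)$ — requires additional care in bucketing the parameter $b$ of $T' = [b,\infty)$, which a priori has $\Omega(nP)$ values; the expectation is that only $O(\log nP)$ choices of $b$ yield distinct sets $E(T',\bar D, j^*)$ after dyadic bucketing, collapsing the remaining dependence.
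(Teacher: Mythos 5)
Your high-level plan (reduce to few canonical constraints, then Chernoff plus union bound) is the right spirit, but the step you yourself flag as ``the main obstacle'' --- the structural reduction to a $\mathrm{polylog}(nP)$-sized canonical family --- is exactly the content of the lemma, and your proposal does not supply it: the canonicalization of $D'$ by bucketing and ``pushing each $D'_j$ to the largest value in its bucket'' is not shown to preserve unsatisfaction (changing any $D'_j$ changes $V(T',D')$, hence the truncation lengths $|E(T',D',j)|$ for \emph{all} jobs and the membership of the buffers $L,M$), and the claim that only $O(\log nP)$ values of $b$ matter is stated as an expectation, not proved. The paper's argument avoids this kind of reduction altogether. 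The dependence on $D'$ is killed by an integrality argument (Proposition~\ref{prop:allD}): for the integral $x^*$ it suffices to check the single vector $D'$ with $D'_j$ the latest time $x'_{j,t}\ge c$. The dependence on $b$ is handled by grouping constraints into $O(\log nP)$ dyadic classes of $V(T',D')$, applying the concentration bound (Lemma~\ref{lem:offlinewhp}) only to the \emph{one} extremal $T'$ in each class (latest $t_{T'}$ when $t_{T'}>D'_{j^*}$, earliest when $t_{T'}<D'_{j^*}$), and then transferring satisfaction to every other constraint in the class \emph{deterministically} via Propositions~\ref{prop:during} and~\ref{prop:before}; this transfer is precisely why criticality is defined by removing the prefixes $L_{T',D'}$ and $M_{T',D'}$ (all jobs in $L$ have $D'_j\le D'_{j^*}$, all jobs in $M$ have $D'_j\ge D'_{j^*}$, which is what makes the monotonicity propositions applicable, together with constraint~(\ref{constraint:greater})). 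Your proposal uses $L$ and $M$ only as ``slack'' for a perturbation argument that is never carried out, so the key mechanism is missing.

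There is also a quantitative error in your Chernoff step: the failure probability of a fixed constraint is \emph{not} $\exp(-\Omega(V/c))$, because a single job can contribute up to $\min\{p_j,V(T',\bar D)\}$ to the left-hand side (the sets $E(T',\bar D,j)$ have size up to $V$), so the lower tail for this weighted sum only gives an exponent of order $\mu/V=\Omega(1/c)=\Omega(\log\log nP)$, i.e.\ a bound of the form $(\log nP)^{-\Theta(1)}$ --- this is exactly what the paper obtains in Lemma~\ref{lem:offlinewhp} via the bound $\ex(X_j^2)\le V\cdot\ex(X_j)$. That weaker (but correct) per-constraint bound is sufficient only because the union bound in the paper ranges over $O(\log nP)$ events per job; if your canonical family were larger than polylogarithmic (which, absent the unproved reduction, it may be), the argument would not close.
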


This lemma implies Lemma~\ref{lem:costtotal}.  

\begin{proofof}[Lemma~\ref{lem:costtotal}] Fix any job $j$ and an iteration of the algorithm.   If there is a constraint $T',D'$ unsatisfied, $j \in  \mathcal{J}_{T',D'}$, $x'_{j,t}$ is fractional, $t \geq C_{j,\alpha}$ and $t \in E(T',D',j)$  then the algorithm sets $\tilde{x}^*_{j,t}$ to be $10x'_{j,t}$. 

 This event increases the cost of $\tilde{x}^*$ by at most $10\sum_{t > C_{j,\alpha}} x'_{j,t}(g_j(t) - g_j(t-1))$ and it happens for some $T',D'$ with probability at most $1/40$.  The total expected cost $\lpfracs$ is at most the following.    
  
  \vspace{-.3cm}
\begin{eqnarray*}
&&10\sum_{j \in J}\sum_{t > C_{j,\alpha}} x'_{j,t}(g_j(t) - g_j(t-1)) \Pr[\exists T',D' | \; j \in \mathcal{J}_{T',D'} \mbox{ and constraint  } T', D' \mbox{ unsatisfied by $x^*$} ]\\
&&\leq 10\sum_{j \in J} \sum_{t > C_{j,\alpha}} x'_{j,t}(g_j(t) - g_j(t-1)) \frac{1}{40} \leq \frac{1}{4}\sum_{j \in J}  \sum_{t > C_{j,\alpha}} x'_{j,t}(g_j(t) - g_j(t-1))  \;\;\;\; \mbox{[Lemma~\ref{lem:t1prob}]}
\end{eqnarray*}

Thus, the expected cost of $\lpfracs$ decreases by a factor $1/4$ over $\lpfrac$. 
\end{proofof}

The remaining goal of this section is to prove Lemma~\ref{lem:t1prob}.   The proof begins by observing that since the solution $x^*$ is integral if a constraint $T'$ and $D'$ is satisfied for a particular set $D'$ then the constraints for $T'$ and all sets $D''$ are satisfied.   This will allow us to focus on constraints for one special set $D'$. The proof can be found in Appendix~\ref{sec:omitted}.

\begin{proposition}
\label{prop:allD}
Let $t_{T'}$ be some time step and $T' = [t_{T'},\infty]$. Let $D'$ be set such that $D'_j$ is the latest time $t$ where $x'_{j,t} \geq c$ for all jobs $j$.   If the constraint for $T'$ and $D'$ is satisfied in the solution $x^*$ then all constraints for $T'$ and any set $D''$ are satisfied in the soultion $x^*$.
\end{proposition}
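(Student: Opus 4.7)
The plan is to reduce every constraint $(T', D)$ on the integer solution $x^*$ to a single, $D$-free inequality, and then chain equivalences through it. Since $x^*_{j,t}=1$ iff $t \leq C_{j,\alpha}$, a direct computation yields
$$\sum_{t \in E(T',D,j)} x^*_{j,t} \;=\; \min\{V(T',D),\, e_j(D)\},$$
where $e_j(D):=|E(T',j)\cap (D_j,C_{j,\alpha}]|$ when $D_j \leq C_{j,\alpha}$ and $e_j(D):=0$ otherwise. This just expresses that $E(T',D,j)$ is the earliest $V(T',D)$ slots of $E(T',j)$ past $D_j$, while $x^*_{j,\cdot}$ is the indicator of $[1,C_{j,\alpha}]$.

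I would then invoke the elementary identity $\sum_j \min\{V,a_j\}\geq V \iff \sum_j a_j \geq V$ (both directions are one-line checks) to reduce the $(T',D)$ constraint on $x^*$ to $\sum_j e_j(D) \geq V(T',D)$. Abbreviating $e_j:=|E(T',j)\cap[1,C_{j,\alpha}]|$, $d_j(D):=|E(T',j)\cap [1,D_j]|$, and $V_0:=\sum_i p_i-m|T\setminus T'|$, a short rearrangement shows that for the particular choice $D=D'$ this collapses to the $D$-free inequality $\sum_j e_j \geq V_0$. Here I would use that $C_{j,\alpha}\geq D'_j$ (because $\alpha_j\leq 1$ makes the threshold for $C_{j,\alpha}$ weaker than that for $D'_j$), so the bad case $D'_j>C_{j,\alpha}$ never arises. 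Thus the assumed $(T',D')$ constraint is equivalent to the $D$-free statement $\sum_j e_j\geq V_0$.

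Finally, I would show that this $D$-free inequality implies the constraint for any other vector $D''$. Let $S:=\{j:D''_j>C_{j,\alpha}\}$. Jobs in $S$ contribute $0$ to the left-hand side of the $(T',D'')$ constraint, but because $d_j(D'')\geq e_j$ for such $j$, they also shrink $V(T',D'')$ by at least $\sum_{j\in S}e_j$. Plugging this through the same min-sum identity reduces the desired inequality $\sum_{j\notin S}(e_j-d_j(D''))\geq V(T',D'')$ to $\sum_j e_j\geq V_0$, closing the chain. The one delicate step, and the main obstacle, is verifying that the mass lost on the left when a coordinate $D''_j$ exceeds $C_{j,\alpha}$ is always offset by the corresponding decrease in $V(T',D'')$; once this bookkeeping is in place the rest is straightforward algebra.
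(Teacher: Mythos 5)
Your proof is correct and follows essentially the same route as the paper: both reduce the $(T',D')$ constraint on the integral, monotone solution $x^*$ to a $D$-free base inequality using $D'_j \leq C_{j,\alpha}$ (so $x^*_{j,t}=1$ on $[0,D'_j]$), and then lift that base inequality back to arbitrary $D''$. The only difference is presentational: the paper delegates the lifting step to its appendix lemma on the validity of the job cover inequalities for integer solutions, whereas you re-derive it inline via the $\min$-sum identity and the bookkeeping for jobs with $D''_j > C_{j,\alpha}$.
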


For the remainder of the proof, fix $D'$ to be as described in the prior lemma. Now we establish some basic propositions on which jobs contribute to a constraint.  This will be useful for identifying critical jobs.      Note that the two propositions below are different depending on the ordering of the times considered. The proofs are omitted and can be found in Appendix~\ref{sec:omitted}.

\begin{proposition}
\label{prop:during}
Fix any job $j$ and two sets $T = [t_{T}, \infty]$ and  $T' = [t_{T'}, \infty]$ where $D'_j \leq t_{T} < t_{T'}$. For any fractional LP solution $x$ satisfying constraints (\ref{constraint:greater})  if $V(T,D') \geq \frac{1}{2} V(T',D')$ then it is the case that $\sum_{t \in E(T',D',j)} x_{j,t} \leq 2\sum_{t \in E(T,D',j)} x_{j,t}$. 
\end{proposition}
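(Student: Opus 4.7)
The plan is to exploit the structural fact that, under the hypotheses of the proposition, both $E(T',D',j)$ and $E(T,D',j)$ are contiguous initial segments of time, starting at $t_{T'}$ and $t_{T}$ respectively. Combined with constraint (\ref{constraint:greater}), which says $x_{j,t}$ is non-increasing in $t$, the claim reduces to a monotonicity-shift plus a halving-pairing argument.

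First I would unpack the definitions. Since $D'_j \le t_{T} < t_{T'}$, every $t \in E(T',j) \subseteq [t_{T'},\infty]$ satisfies $t > D'_j$ and similarly for $E(T,j) \subseteq [t_{T},\infty]$, so the ``later than $D'_j$'' restriction is vacuous. Hence $E(T',D',j) = \{t_{T'}, t_{T'}+1,\ldots, t_{T'}+\ell-1\}$ with $\ell := \min\{V(T',D'), |E(T',j)|\}$, and $E(T,D',j) = \{t_{T}, \ldots, t_{T}+\ell'-1\}$ with $\ell' := \min\{V(T,D'), |E(T,j)|\}$. The monotonicity shift then gives, for each $k \ge 0$, $x_{j,t_{T'}+k} \le x_{j,t_{T}+k}$ (since $t_{T} \le t_{T'}$), so summing over $k = 0,\ldots,\ell-1$ yields $\sum_{t \in E(T',D',j)} x_{j,t} \le \sum_{k=0}^{\ell-1} x_{j,t_{T}+k}$.

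Next I would show $\ell' \ge \ell/2$. Because $t_{T} \le t_{T'}$, the capacity terms in the definition of $E(\cdot,j)$ satisfy $|E(T,j)| = \min\{p_j, \sum_i p_i - m(t_{T}-1)\} \ge \min\{p_j, \sum_i p_i - m(t_{T'}-1)\} = |E(T',j)|$. Combined with the hypothesis $V(T,D') \ge V(T',D')/2$ and the elementary inequality $\min\{a/2,b\} \ge \min\{a,b\}/2$, this gives $\ell' \ge \min\{V(T',D')/2,|E(T',j)|\} \ge \ell/2$. Finally, since $\ell \le 2\ell'$, pairing $k \in \{\ell',\ldots,\ell-1\}$ with $k-\ell' \in \{0,\ldots,\ell-\ell'-1\}$ and using monotonicity gives $\sum_{k=\ell'}^{\ell-1} x_{j,t_{T}+k} \le \sum_{k=0}^{\ell'-1} x_{j,t_{T}+k}$, so $\sum_{k=0}^{\ell-1} x_{j,t_{T}+k} \le 2\sum_{t \in E(T,D',j)} x_{j,t}$. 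Chaining with the shift inequality proves the proposition.

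The main obstacle I anticipate is the bookkeeping in the $\ell' \ge \ell/2$ step, since the sizes of $E(T,j)$ and $E(T',j)$ involve two capacity caps ($p_j$ and $\sum_i p_i - m(t-1)$) and the ``cover'' cap $V(\cdot,D')$, any of which could be binding. The elementary fact $\min\{a/2,b\} \ge \min\{a,b\}/2$ handles this cleanly; everything else is a direct consequence of the monotonicity of $x_{j,t}$.
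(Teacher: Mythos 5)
Your proof is correct and takes essentially the same route as the paper's: compare the sizes of $E(T,D',j)$ and $E(T',D',j)$ via the hypothesis $V(T,D')\ge \tfrac12 V(T',D')$ (noting both are contiguous windows whose lengths are capped by $V(\cdot,D')$ and by $|E(\cdot,j)|$, with the earlier window at least as long), then use the monotonicity constraint (\ref{constraint:greater}) to bound the sum over the later window by twice the sum over the earlier one. You merely spell out the shift-and-pairing step and the $\min\{a/2,b\}\ge\min\{a,b\}/2$ bookkeeping that the paper compresses into a single sentence, and, like the paper's own proof (which verifies its size simplification only when $t_{T}>D'_j$), you treat the boundary case $D'_j=t_{T}$ as if the ``later than $D'_j$'' restriction were vacuous, which is harmless since the proposition is only invoked with strict inequality.
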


\begin{proposition}
\label{prop:before}
Fix any  job $j$ and two sets $T = [t_{T}, \infty]$ and  $T' = [t_{T'}, \infty]$ and $D'_j > t_{T} > t_{T'}$. For any fractional LP solution $x$ satisfying constraints (\ref{constraint:greater})  if  $V(T,D') \geq \frac{1}{2} V(T',D')$ then  $\sum_{t \in E(T',D',j)} x_{j,t} \leq 2\sum_{t \in E(T,D',j)} x_{j,t}$.
\end{proposition}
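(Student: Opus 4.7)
The plan is to reduce the inequality on the weighted sums of $x_{j,t}$ to a purely combinatorial inequality on the cardinalities of $E(T,D',j)$ and $E(T',D',j)$, using the non-increasing property of $x_{j,\cdot}$, and then to verify that cardinality bound using the hypothesis on $V$.

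First, I would exploit the assumption $D'_j > t_T > t_{T'}$ to observe that the time steps in $E(T,j)$ (respectively $E(T',j)$) that are strictly later than $D'_j$ form a consecutive block $\{D'_j+1,\ldots,e_T\}$ (respectively $\{D'_j+1,\ldots,e_{T'}\}$), where $e_T = t_T + \min(p_j,\sum_i p_i - m t_T)-1$ is the largest element of $E(T,j)$ and $e_{T'}$ is defined analogously. Consequently,
\[
E(T,D',j)=\{D'_j+1,\ldots,D'_j+\ell\},\qquad E(T',D',j)=\{D'_j+1,\ldots,D'_j+\ell'\},
\]
with $\ell=\min\bigl(V(T,D'),\max(0,e_T-D'_j)\bigr)$ and $\ell'$ defined analogously. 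Both sets are prefixes of $\{D'_j+1,D'_j+2,\ldots\}$ starting at the same point.

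Next, constraint~(\ref{constraint:greater}) forces $x_{j,\cdot}$ to be non-increasing, so the average of the first $\ell$ terms of the non-increasing sequence $x_{j,D'_j+1},\ldots,x_{j,D'_j+\ell'}$ is at least the overall average, giving
\[
\sum_{t\in E(T,D',j)} x_{j,t}\;\geq\;\frac{\ell}{\ell'}\sum_{t\in E(T',D',j)} x_{j,t}.
\]
It therefore suffices to prove $\ell'\leq 2\ell$. When $\ell=V(T,D')$, this follows immediately from $\ell'\leq V(T',D')\leq 2V(T,D')=2\ell$. In the remaining tail-limited case $\ell=e_T-D'_j<V(T,D')$, I would first argue that the machine-capacity cap must be active at $T$ (so $K_T=\sum_i p_i-m t_T<p_j$; otherwise $e_T\geq e_{T'}$ and $\ell'\leq \ell$ is trivial). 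Then the identity $K_{T'}-K_T\leq m(t_T-t_{T'})$ yields $e_{T'}-e_T\leq (m-1)(t_T-t_{T'})$, and hence $\ell'-\ell\leq (m-1)(t_T-t_{T'})$. Finally, the bound is closed by showing $(m-1)(t_T-t_{T'})\leq\ell$, which uses the hypothesis $V(T,D')\geq V(T',D')/2$ together with a case analysis that every job $j'\neq j$ contributes a non-positive amount to $V(T',D')-V(T,D')$ (one verifies this by splitting on whether the $D'_{j'}$-cap or the $K$-cap is active in each intersection).

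The main obstacle will be that last step: showing $(m-1)(t_T-t_{T'})\leq\ell$ in the tail-limited case. This requires a careful decomposition of $V(T',D')-V(T,D')$ into the contribution from $j$ (which equals $(m-1)(t_T-t_{T'})$ in this regime) and contributions from all other jobs (which must be bounded above by $0$ through a case analysis on their $D'$- versus $K$-constraints). Only after this decomposition does the hypothesis $V(T,D')\geq V(T',D')/2$ close the inequality. The overall scheme parallels Proposition~\ref{prop:during} but with the roles of $T,T'$ and the position of $D'_j$ adjusted to the ``$D'_j$ after the interval starts'' regime.
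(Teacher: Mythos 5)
Your reduction is the right one, and its first half is essentially the paper's argument: both $E(T,D',j)$ and $E(T',D',j)$ are contiguous blocks starting at $D'_j+1$ (because $t_{T'}<t_T<D'_j$), so by the monotonicity forced by constraint~(\ref{constraint:greater}) everything comes down to the cardinality bound $\ell'\leq 2\ell$, and the case $\ell=V(T,D')$ follows at once from $V(T',D')\leq 2V(T,D')$. The gap is in your remaining ``tail-limited, capacity-cap'' case. The closing step you propose there, $(m-1)(t_T-t_{T'})\leq \ell$, does not follow from the ingredients you list: decomposing $V(T',D')-V(T,D')$ as $m(t_T-t_{T'})$ minus the increases in covered prefixes, with job $j$ contributing an increase of $t_T-t_{T'}$ and every other job a nonnegative increase, yields $V(T',D')-V(T,D')\leq (m-1)(t_T-t_{T'})$ --- an upper bound on the difference of the $V$'s, which is the wrong direction for concluding anything of the form $(m-1)(t_T-t_{T'})\leq \ell$, and the hypothesis $V(T,D')\geq \tfrac12 V(T',D')$ only bounds $V(T',D')-V(T,D')$ by $V(T,D')$, which exceeds $\ell$ in exactly this case. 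So as written the plan does not close.

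Fortunately the troublesome case is vacuous, which is the observation you are missing. If the capacity cap (rather than $p_j$) determines the last element $e_T$ of $E(T,j)$, so $|E(T,j)|=K_T:=\sum_i p_i - m|T\setminus T|$, and $E(T,j)$ reaches past $D'_j$, then job $j$ alone contributes $|[0,D'_j]\cap E(T,j)| = D'_j-t_T+1$ to the covered amount subtracted in $V(T,D')$, whence $V(T,D')\leq K_T-(D'_j-t_T+1)=e_T-D'_j$. Thus $\ell=\min\{V(T,D'),e_T-D'_j\}=V(T,D')$ and you are back in your first case; the only genuinely tail-limited situation is the $p_j$-cap one, which you already dispatch via $e_{T'}\leq e_T$. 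This is exactly how the paper's (very short) proof proceeds: it records that $|E(T,D',j)|=\min\{p_j-\max\{0,D'_j-t_T\},\,V(T,D')\}$ (the capacity term never binds here) and compares the two minima directly using $t_{T'}<t_T$ and $V(T',D')\leq 2V(T,D')$, then finishes with the same monotonicity step you use. With the vacuity observation inserted in place of your $(m-1)(t_T-t_{T'})$ argument, your proof is complete and coincides in substance with the paper's.
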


The next lemma establishes that for any fixed constraint $T'$ and $D'$, it is the case that the constraint is satisfied with good probability in the integral solution $x^*$. Further, the constraint is satisfied if only the jobs in $M_{T',D'}$ (or $L_{T',D'}$) are considered in the summation in the left hand side of the constraint.  
Due to space, the proof can be found in Appendix~\ref{sec:omitted}.

\begin{lemma}
\label{lem:offlinewhp}
Fix any $T' = [t_{T'},\infty]$ and let the vector $D'$ contain the time $D'_j$ that is  the latest time $t$ where $x'_{j,t} \geq c$ for all jobs $j$. With probability at least $1-\frac{1}{ \log^{10} nP}$ it is the case that $\sum_{j \in L_{T',D"}} \sum_{t \in E(T',D',j)} x^*_{j,t} \geq 10V(T',D')$ and $\sum_{j \in M_{T',D'}} \sum_{t \in E(T',D',j)} x^*_{j,t} \geq 10V(T',D')$ when $c \leq \frac{1}{1000\log\log nP}$.
   \end{lemma}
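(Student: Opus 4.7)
I would establish the $L_{T',D'}$ bound and obtain the $M_{T',D'}$ bound by a verbatim symmetric argument, then take a union bound over the two failure events. Fix $T' = [t_{T'},\infty]$ and the vector $D'$ from the lemma. The key structural observation is that any $t \in E(T',D',j)$ satisfies $t > D'_j$, and by the choice of $D'_j$ as the latest time with $x'_{j,t} \geq c$, we have $x'_{j,t} < c$ on $E(T',D',j)$. Combined with the monotonicity constraint (\ref{constraint:greater}), the randomized rounding rule then gives $\Pr[x^*_{j,t}=1] = \Pr[\alpha_j \leq x'_{j,t}/c] = x'_{j,t}/c$ for every such pair $(j,t)$, with no truncation at $1$.

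Define $Y_j := \sum_{t \in E(T',D',j)} x^*_{j,t}$ for $j \in L_{T',D'}$ and $Y := \sum_{j \in L_{T',D'}} Y_j$. Since each $Y_j$ depends only on $\alpha_j$, the $Y_j$ are independent across $j$. By linearity and the previous paragraph, $\ex[Y] = c^{-1} \sum_{j \in L_{T',D'}} \sum_{t \in E(T',D',j)} x'_{j,t}$, and the defining property of $L_{T',D'}$ forces the inner double sum to be at least $V(T',D')/10$. Hence $\ex[Y] \geq V(T',D')/(10c) \geq 100 \log\log nP \cdot V(T',D')$ under the stated choice of $c$; the expectation already overshoots the $10 V(T',D')$ target by a $\Theta(\log\log nP)$ factor.

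The final step is a standard lower-tail Chernoff bound for independent, bounded, non-negative random variables. Each $Y_j$ is an integer in $[0, |E(T',D',j)|]$, and by the very definition of $E(T',D',j)$ one has $|E(T',D',j)| \leq V(T',D')$, so every $Y_j$ lies in $[0, V(T',D')]$. Applying Chernoff with relative deviation $\delta = 1 - 10V(T',D')/\ex[Y] \geq 1 - 1/(10\log\log nP)$ yields $\Pr[Y < 10V(T',D')] \leq \exp(-\Omega(\log\log nP)) \leq 1/(2\log^{10} nP)$ once the constant hidden in $c$ is taken large enough. Repeating the argument verbatim with $L_{T',D'}$ replaced by $M_{T',D'}$ and union bounding over the two failure events gives the claimed $1/\log^{10} nP$ bound.

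The one step that needs genuine care is the expectation computation, which crucially relies on $x'_{j,t} < c$ throughout $E(T',D',j)$ — without that, the sampling probability would be capped at $1$ and the essential $1/c$ amplification would be lost. Beyond that, independence across jobs is automatic from the independence of the $\alpha_j$, the per-job bound $Y_j \leq V(T',D')$ is baked into the definition of $E(T',D',j)$, and the concentration inequality is off-the-shelf, so the bulk of the proof is routine.
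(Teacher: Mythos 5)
Your proposal is correct and follows essentially the same route as the paper's proof: per-job sums $Y_j$ that are independent across jobs, the expectation lower bound $V(T',D')/(10c)$ coming from the defining prefix property of $L_{T',D'}$ (resp.\ $M_{T',D'}$), the per-job bound $Y_j \leq |E(T',D',j)| \leq V(T',D')$, a lower-tail bound, and a union bound over the two events. The only cosmetic difference is the concentration tool: the paper invokes a second-moment (Bernstein/Chung--Lu) inequality and bounds $\ex(X_j^2) \leq V(T',D')\,\ex(X_j)$, whereas you use a rescaled multiplicative Chernoff bound for $[0,V(T',D')]$-bounded variables; both give $\exp(-\Omega(1/c)) = \exp(-\Omega(\log\log nP))$, which suffices.
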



Fix any job $j^*$. Group constraints into two classes for job $j^*$. The first class are those constraints $T'=[t_{T'},\infty]$ where $t_{T'} > D'_{j^*}$ and the second class are the remaining constraints.  It will be shown separately for both groups of constraints that they are all unsatisfied with small probability.

 In the following lemma, the first class of constraints are considered.  This lemma heavily relies on Proposition~\ref{prop:during}. 


\begin{lemma}
\label{lem:afterd}
Fix any job $j^*$.  With probability at most $1/\log^2 nP$ it is the case that there exists a  constraint $T',D''$ unsatisfied by $x^*$, $j^* \in \mathcal{J}_{T',D''}$ and  $t_{T'} > D'_{j^*}$  when $c \leq \frac{1}{1000\log \log nP}$.  
\end{lemma}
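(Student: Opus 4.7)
The plan is to reduce, via Proposition~\ref{prop:allD}, to the single vector $D'$, then to discretize the suffixes $T'$ by the dyadic scale of $V(T', D')$ so that only $O(\log nP)$ representative suffixes need to be considered, to apply Lemma~\ref{lem:offlinewhp} and a union bound at these representatives, and finally to extend the resulting guarantee to every other $T'$ with $t_{T'} > D'_{j^*}$ by invoking Proposition~\ref{prop:during}.

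First I invoke Proposition~\ref{prop:allD}: since $x^*$ is integral, if the $(T', D')$ constraint is satisfied by $x^*$ then so is the $(T', D'')$ constraint for every $D''$, so it suffices to bound
$$\Pr\bigl[\exists\, T' = [t_{T'}, \infty] \text{ with } t_{T'} > D'_{j^*} \text{ such that the } (T', D') \text{ constraint is violated by } x^*\bigr] \leq \tfrac{1}{\log^2 nP}.$$
Because $V(T', D') \in [0, nP]$, for each $k = 0, 1, \ldots, \lceil \log_2 nP \rceil$ I pick (if one exists) a representative suffix $T_k = [t_{T_k}, \infty]$ with $t_{T_k} > D'_{j^*}$ and $V(T_k, D') \in [2^k, 2^{k+1})$. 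This produces $O(\log nP)$ representatives. Lemma~\ref{lem:offlinewhp} guarantees that at each $T_k$, with probability at least $1 - 1/\log^{10} nP$,
$$\sum_{j \in L_{T_k, D'}} \sum_{t \in E(T_k, D', j)} x^*_{j, t} \;\geq\; 10\, V(T_k, D').$$
A union bound over the $O(\log nP)$ representatives gives that this $L$-inequality holds at every representative with probability at least $1 - 1/\log^2 nP$.

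Conditioned on this event, I claim that every $T'$ with $t_{T'} > D'_{j^*}$ has its $(T', D')$ constraint satisfied in $x^*$. Given such a $T'$, I pick the representative $T_k$ with $T_k \subsetneq T'$ (that is, $t_{T_k} > t_{T'}$) and $V(T_k, D') \in [V(T', D')/5, V(T', D')]$; such a $T_k$ exists by the (approximate) monotonicity of $V$ in $t_{T'}$ combined with the dyadic spacing of the representatives. For each $j \in L_{T_k, D'}$ with $D'_j \leq t_{T'}$, Proposition~\ref{prop:during} applied with $T \leftarrow T'$ and $T' \leftarrow T_k$ gives
$$\sum_{t \in E(T', D', j)} x^*_{j, t} \;\geq\; \tfrac{1}{2} \sum_{t \in E(T_k, D', j)} x^*_{j, t}.$$
Since $L_{T_k, D'}$ is a prefix in the increasing-$D'_j$ order and $t_{T'} > D'_{j^*}$, the relevant condition $D'_j \leq t_{T'}$ is satisfied by essentially every job in $L_{T_k, D'}$: whenever $j^* \notin L_{T_k, D'}$ the prefix terminates strictly before $j^*$ in the ordering, so every job in $L_{T_k, D'}$ has $D'_j \leq D'_{j^*} < t_{T'}$. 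Summing the Proposition~\ref{prop:during} inequality over these jobs then yields
$$\sum_{j} \sum_{t \in E(T', D', j)} x^*_{j, t} \;\geq\; \tfrac{1}{2} \cdot 10\, V(T_k, D') \;\geq\; V(T', D'),$$
so the $(T', D')$ constraint is satisfied, and combined with the first step this gives the probability bound claimed in the lemma.

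The step I expect to be the main obstacle is the last one: handling the remaining case $j^* \in L_{T_k, D'}$ where the prefix $L_{T_k, D'}$ extends past $j^*$ to jobs with $D'_j > t_{T'}$, to which Proposition~\ref{prop:during} does not apply directly. The fix should combine the minimality of the prefix defining $L$ (which forces the cumulative fractional $x'$-mass up to and including $j^*$ to be strictly less than $V(T_k, D')/10$, limiting how far past $j^*$ the prefix can extend) with a mild variant of the Chernoff argument underlying Lemma~\ref{lem:offlinewhp} applied to the sub-prefix of $L_{T_k, D'}$ restricted to $\{j : D'_j \leq t_{T'}\}$, which still receives enough expected mass under the randomized rounding to yield the $\geq 10 V(T_k, D')$ lower bound needed in the final chain of inequalities.
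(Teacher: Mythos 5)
Your overall skeleton matches the paper's proof: reduce to the single canonical vector $D'$ via Proposition~\ref{prop:allD}, group the suffix constraints dyadically by $V(T',D')$, apply Lemma~\ref{lem:offlinewhp} to one representative per class and union bound over the $O(\log nP)$ classes, then transfer the guarantee to the other suffixes with Proposition~\ref{prop:during}. However, the case you flag at the end as ``the main obstacle'' is a genuine gap, and it is exactly the point where the paper's choice of representative does real work. The paper does not try to show that \emph{every} suffix constraint with $t_{T'} > D'_{j^*}$ is satisfied; it only places into the class $\mathcal{C}_k$ those $T'$ for which $j^*$ is \emph{critical}, i.e.\ $j^* \in \mathcal{J}_{T',D'}$ (which is all the lemma needs), and it takes as representative the member of $\mathcal{C}_k$ with the \emph{latest} start time $t_{T'}$. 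Because the representative itself has $j^*$ critical, $j^*$ belongs to $J_{T',D'}$ but not to $L_{T',D'}$, so the prefix $L_{T',D'}$ (taken in increasing order of $D'_j$) consists entirely of jobs with $D'_j \leq D'_{j^*}$; combined with $t_{T''} > D'_{j^*}$ and $t_{T''} \leq t_{T'}$ for every other class member $T''$, this gives precisely the hypothesis $D'_j \leq t_{T''} < t_{T'}$ of Proposition~\ref{prop:during}, and the transfer goes through for \emph{all} jobs of $L_{T',D'}$ with no exceptional case.

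Your version chooses representatives only by their dyadic $V$-value (and $t_{T_k} > D'_{j^*}$), so nothing prevents $j^* \in L_{T_k,D'}$, or even $j^* \notin J_{T_k,D'}$ altogether while $L_{T_k,D'}$ contains jobs with $D'_j > D'_{j^*}$ --- note your dichotomy ``either $j^* \in L_{T_k,D'}$ or the prefix ends before $j^*$'' silently assumes $j^*$ appears in the ordering at all, which only holds when $j^*$ has positive mass for that constraint. The sketched repair (a Chernoff bound on the sub-prefix restricted to $\{j : D'_j \leq t_{T'}\}$) is not carried out and is not obviously salvageable: that sub-prefix may carry almost none of the $\frac{1}{10}V(T_k,D')$ fractional mass, so Lemma~\ref{lem:offlinewhp} gives no lower bound for it, and you would also need to union bound over the targets $T'$ rather than over $O(\log nP)$ representatives. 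A second, smaller gap is the assertion that a pre-chosen representative with $t_{T_k} > t_{T'}$ and $V(T_k,D') \in [V(T',D')/5, V(T',D')]$ exists ``by approximate monotonicity''; $V$ is not obviously monotone in $t_{T'}$, and the paper sidesteps this entirely by letting the target and representative lie in the same dyadic class (so their $V$-values differ by at most a factor $2$) and taking the representative to be the latest element of that class.
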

\begin{proof}

Fix any job $j^*$.     Let $D'$ be set such that $D'_{j}$ is the latest time $t$ where $x'_{j,t} \geq c$ for all jobs $j$. The proof will establish that with probability greater $1-\frac{1}{\log^2 nP}$ it is the case that the constraints for $D'$ and any $T'=[t_{T'} ,\infty]$ with $t_{T'} > D'_{j^*}$ are satisfied by $x^*$.  Applying Proposition~\ref{prop:allD} this implies that $x^*$ satisfies the same allowing for any constraint $D''$, proving the lemma.

Geometrically group constraints based on the value of $V(T',D')$.  Let $\mathcal{C}_k$ contain the set $T' = [t_{T'} ,\infty]$ if  $j^* \in \mathcal{J}_{T',D'}$,  $2^k \leq V(T',D') < 2^{k+1}$ and $t_{T'} > D'_{j^*}$ for any integer $0 \leq k \leq \log nP$.   

Fix $k$ and the set $T' \in \mathcal{C}_k$ such that $t_{T'}$ is as late as possible.  Let $L_{T',D'}$ be as described in the algorithm definition. We will establish that if  $\sum_{j \in  L_{T',D'}} \sum_{t \in E(T',D',j)} x^*_{j,t} \geq 10V(T',D')$  then all constraints $V(T'',D')$ for any $T'' \in  \mathcal{C}_k$ are satisfied.  Once this is established, this will complete the proof as follows.  We apply  Lemma~\ref{lem:offlinewhp} stating that $\sum_{j \in  L_{T',D'}} \sum_{t \in E(T',D',j)} x^*_{j,t} \geq 10V(T',D')$ occurs with probability at least $1-\frac{1}{ \log^{10} nP}$.  By union bounding for all $\log nP$ values for $k$ the lemma follows.

Say that  $\sum_{j \in  L_{T',D'}} \sum_{t \in E(T',D',j)} x^*_{j,t} \geq 10V(T',D')$.  Consider any set $T'' \in \mathcal{C}_k$.   By definition of the  set  $L_{T',D'}$ it is the case that $D'_j \leq D'_{j^*}$ for all $j \in L_{T',D'}$.  Hence, $D'_{j} \leq D'_{j^*} \leq t_{T''} \leq t_{T'}$. Thus, Proposition~\ref{prop:during} and the geometric grouping of constraints gives that $\sum_{j \in  L_{T',D'}} \sum_{t \in E(T'',D',j)} x^*_{j,t} \geq \frac{1}{2} \sum_{j \in  L_{T',D'}} \sum_{t \in E(T',D',j)} x^*_{j,t} \geq 5V(T',D') \geq V(T'',D')$.    Thus the constraint for $T''$ and $D'$ is satisfied, proving the lemma.
\end{proof}

Similar to the previous lemma, in the following lemma it is shown that all of the second class of constraints are satisfied with good probability.  This lemma heavily relies on Proposition~\ref{prop:before}. This proof is similar to the prior lemma and can be found in Appendix~\ref{sec:omitted}.

\begin{lemma}
\label{lem:befored}
Fix any job $j^*$.  With probability at most $1/\log^2 nP$ it is the case that there exists a  constraint $T',D''$ unsatisfied by $x^*$, $j^* \in \mathcal{J}_{T',D'}$ and  $t_{T'} < D'_{j^*}$  when $c \leq \frac{1}{1000\log \log nP}$.  
\end{lemma}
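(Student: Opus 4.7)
My plan is to mirror the proof of Lemma~\ref{lem:afterd}, with two symmetric substitutions reflecting that $T'$ now starts \emph{before} $D'_{j^*}$ rather than after: replace the prefix $L_{T',D'}$ (smallest $D'_j$) by $M_{T',D'}$ (largest $D'_j$), and replace Proposition~\ref{prop:during} by Proposition~\ref{prop:before}. I would set $D'_j$ to be the latest time $t$ with $x'_{j,t}\geq c$. By Proposition~\ref{prop:allD} it suffices to establish that, with probability at least $1-1/\log^2 nP$, every constraint for $T'=[t_{T'},\infty]$ with $t_{T'}<D'_{j^*}$ and $j^*\in\mathcal{J}_{T',D'}$ is satisfied by $x^*$ under the vector $D'$. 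Group such sets geometrically: $\mathcal{C}_k$ contains such a $T'$ whenever $2^k\leq V(T',D')<2^{k+1}$, for $0\leq k\leq \log nP$.

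For each $k$, I would pick the representative $T'\in\mathcal{C}_k$ with $t_{T'}$ as \emph{early} as possible --- the opposite choice from Lemma~\ref{lem:afterd} --- so that every $T''\in\mathcal{C}_k$ satisfies $t_{T'}\leq t_{T''}<D'_{j^*}$. Since $j^*\in\mathcal{J}_{T',D'}$ forces $j^*\notin M_{T',D'}$ and $M_{T',D'}$ is a prefix in decreasing $D'$-order, every $j\in M_{T',D'}$ has $D'_j\geq D'_{j^*}>t_{T''}>t_{T'}$, exactly matching the hypothesis of Proposition~\ref{prop:before}. Lemma~\ref{lem:offlinewhp} then gives, with probability at least $1-1/\log^{10}nP$,
\[\sum_{j\in M_{T',D'}}\sum_{t\in E(T',D',j)} x^*_{j,t}\geq 10\,V(T',D').\]
Applying Proposition~\ref{prop:before} (its $T$ set to $T''$ and its $T'$ to the representative; the hypothesis $V(T'',D')\geq \tfrac{1}{2}V(T',D')$ is immediate from the geometric grouping) yields $\sum_{t\in E(T',D',j)}x^*_{j,t}\leq 2\sum_{t\in E(T'',D',j)}x^*_{j,t}$ for each $j\in M_{T',D'}$. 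Summing over $j\in M_{T',D'}$ and using $5V(T',D')\geq V(T'',D')$ from the grouping shows the constraint for $T''$ and $D'$ is satisfied in $x^*$. A union bound over the $O(\log nP)$ classes gives the target failure probability.

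The main subtlety, and the only real departure from Lemma~\ref{lem:afterd}, is the choice to pick the \emph{earliest} representative in each class: only then does the chain $D'_j\geq D'_{j^*}>t_{T''}>t_{T'}$ hold uniformly for every $T''\in\mathcal{C}_k$, which is precisely what Proposition~\ref{prop:before}'s hypothesis $D'_j>t_T>t_{T'}$ demands. Once this choice is made, the computation is an exact dual of the previous lemma.
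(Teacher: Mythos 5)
Your proposal is correct and matches the paper's own proof essentially step for step: same choice of $D'$, same geometric grouping by $V(T',D')$, same earliest-start representative in each class, same use of $M_{T',D'}$ with Lemma~\ref{lem:offlinewhp}, Proposition~\ref{prop:before} applied with the same role assignment, and the same union bound over the $O(\log nP)$ classes. The one place you go slightly beyond the paper is in spelling out why every $j \in M_{T',D'}$ has $D'_j \geq D'_{j^*}$ (via $j^* \in \mathcal{J}_{T',D'}$ and the prefix structure), which the paper leaves implicit.
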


For sufficiently large $n$ and $P$, a union bound and Lemmas~\ref{lem:afterd} and \ref{lem:befored} prove Lemma~\ref{lem:t1prob}.

\section{Conclusion}

This paper introduced a new set of strong inequalities for scheduling problems on multiple identical machines.  Using these inequalities, the paper showed an iterative algorithm that rounds a fractional LP solution to an integral solution which achieves an $O(\log \log nP)$ approximation for most reasonable scheduling minimization problems.  

An open question is if there are algorithms with an $O(1)$ approximation ratio for GSP on identical machines.  It is also of interest to determine if the inequalities introduced can be extended to other bipartite assignment problems.   Can these inequalities  be extended to more general environments, such as the related machines  or restricted assignment settings?  Could a similar analysis be used when jobs arrive over time?  These cases introduce new technical hurdles, but $O(1)$ approximation algorithms could be possible  by leveraging the given constraints and assuming preemption and migration are allowed.\footnote{We remind the reader that any algorithm for GSP on identical machines has an unbounded approximation ratio if either preemption or migration are not allowed}  For example, if jobs arrive over time then  a generalization of the LP with the strengthened constraints can be derived. However, the rounding becomes challenging because there appears to be no reduction showing only a polynomial number of time sets $T'$ need to be considered in the set of constraints, like  was established in this paper.  Due to this, it is challenging to show all exponential number of constraints based on sets of times are satisfied by the rounding algorithm. 

\bibliography{refsGenCost}

\begin{thebibliography}{10}

\bibitem{AntoniadisHMVW17}
Antonios Antoniadis, Ruben Hoeksma, Julie Mei{\ss}ner, Jos{\'{e}} Verschae, and
  Andreas Wiese.
\newblock A {QPTAS} for the general scheduling problem with identical release
  dates.
\newblock In {\em 44th International Colloquium on Automata, Languages, and
  Programming, {ICALP} 2017, July 10-14, 2017, Warsaw, Poland}, pages
  31:1--31:14, 2017.

\bibitem{BansalP14}
Nikhil Bansal and Kirk Pruhs.
\newblock The geometry of scheduling.
\newblock {\em {SIAM} J. Comput.}, 43(5):1684--1698, 2014.

\bibitem{CarrFLP00}
Robert~D. Carr, Lisa Fleischer, Vitus~J. Leung, and Cynthia~A. Phillips.
\newblock Strengthening integrality gaps for capacitated network design and
  covering problems.
\newblock In {\em Proceedings of the Eleventh Annual {ACM-SIAM} Symposium on
  Discrete Algorithms, January 9-11, 2000, San Francisco, CA, {USA.}}, pages
  106--115, 2000.

\bibitem{ChanGKS12}
Timothy~M. Chan, Elyot Grant, Jochen K{\"{o}}nemann, and Malcolm Sharpe.
\newblock Weighted capacitated, priority, and geometric set cover via improved
  quasi-uniform sampling.
\newblock In {\em Proceedings of the Twenty-Third Annual {ACM-SIAM} Symposium
  on Discrete Algorithms, {SODA} 2012, Kyoto, Japan, January 17-19, 2012},
  pages 1576--1585, 2012.

\bibitem{ChengNYL05}
T.~C.~Edwin Cheng, C.~T. Ng, J.~J. Yuan, and Zhaohui Liu.
\newblock Single machine scheduling to minimize total weighted tardiness.
\newblock {\em European Journal of Operational Research}, 165(2):423--443,
  2005.

\bibitem{CheungMSV17}
Maurice Cheung, Juli{\'{a}}n Mestre, David~B. Shmoys, and Jos{\'{e}} Verschae.
\newblock A primal-dual approximation algorithm for min-sum single-machine
  scheduling problems.
\newblock {\em {SIAM} J. Discrete Math.}, 31(2):825--838, 2017.

\bibitem{CheungS11}
Maurice Cheung and David~B. Shmoys.
\newblock A primal-dual approximation algorithm for min-sum single-machine
  scheduling problems.
\newblock In {\em Approximation, Randomization, and Combinatorial Optimization.
  Algorithms and Techniques - 14th International Workshop, {APPROX} 2011, and
  15th International Workshop, {RANDOM} 2011, Princeton, NJ, USA, August 17-19,
  2011. Proceedings}, pages 135--146, 2011.

\bibitem{ChungL06}
Fan Chung and Linyuan Lu.
\newblock {\em Complex Graphs and Networks (Cbms Regional Conference Series in
  Mathematics)}.
\newblock American Mathematical Society, Boston, MA, USA, 2006.

\bibitem{ChuzhoyGKN04}
Julia Chuzhoy, Sudipto Guha, Sanjeev Khanna, and Joseph Naor.
\newblock Machine minimization for scheduling jobs with interval constraints.
\newblock In {\em 45th Symposium on Foundations of Computer Science {(FOCS}
  2004), 17-19 October 2004, Rome, Italy, Proceedings}, pages 81--90, 2004.

\bibitem{HochbaumS85}
Dorit~S. Hochbaum and David~B. Shmoys.
\newblock Using dual approximation algorithms for scheduling problems:
  Theoretical and practical results.
\newblock In {\em 26th Annual Symposium on Foundations of Computer Science,
  Portland, Oregon, USA, 21-23 October 1985}, pages 79--89, 1985.

\bibitem{HohnJ15}
Wiebke H{\"{o}}hn and Tobias Jacobs.
\newblock On the performance of smith's rule in single-machine scheduling with
  nonlinear cost.
\newblock {\em {ACM} Trans. Algorithms}, 11(4):25, 2015.

\bibitem{HohnMW14}
Wiebke H{\"{o}}hn, Juli{\'{a}}n Mestre, and Andreas Wiese.
\newblock How unsplittable-flow-covering helps scheduling with job-dependent
  cost functions.
\newblock In {\em Automata, Languages, and Programming - 41st International
  Colloquium, {ICALP} 2014, Copenhagen, Denmark, July 8-11, 2014, Proceedings,
  Part {I}}, pages 625--636, 2014.

\bibitem{ImM17}
Sungjin Im and Benjamin Moseley.
\newblock Fair scheduling via iterative quasi-uniform sampling.
\newblock In {\em Proceedings of the Twenty-Eighth Annual {ACM-SIAM} Symposium
  on Discrete Algorithms, {SODA} 2017, Barcelona, Spain, Hotel Porta Fira,
  January 16-19}, pages 2601--2615, 2017.

\bibitem{LAWLER}
E.L. Lawler.
\newblock A fully polynomial approximation scheme for the total tardiness
  problem.
\newblock {\em Operations Research Letters}, 1(6):207 -- 208, 1982.
\newblock URL:
  \url{http://www.sciencedirect.com/science/article/pii/0167637782900220},
  \href {http://dx.doi.org/http://dx.doi.org/10.1016/0167-6377(82)90022-0}
  {\path{doi:http://dx.doi.org/10.1016/0167-6377(82)90022-0}}.

\bibitem{LeviLS08}
Retsef Levi, Andrea Lodi, and Maxim Sviridenko.
\newblock Approximation algorithms for the capacitated multi-item lot-sizing
  problem via flow-cover inequalities.
\newblock {\em Math. Oper. Res.}, 33(2):461--474, 2008.

\bibitem{LoChGo15}
David Lo, Liqun Cheng, Rama Govindaraju, Parthasarathy Ranganathan, and
  Christos Kozyrakis.
\newblock Heracles: Improving resource efficiency at scale.
\newblock In {\em Proceedings of the 42Nd Annual International Symposium on
  Computer Architecture}, ISCA '15, pages 450--462, Portland, Oregon, 2015.
  ACM.
\newblock URL: \url{http://doi.acm.org/10.1145/2749469.2749475}, \href
  {http://dx.doi.org/10.1145/2749469.2749475}
  {\path{doi:10.1145/2749469.2749475}}.

\bibitem{MarsTaHu11}
Jason Mars, Lingjia Tang, Robert Hundt, Kevin Skadron, and Mary~Lou Soffa.
\newblock Bubble-up: Increasing utilization in modern warehouse scale computers
  via sensible co-locations.
\newblock In {\em Proceedings of the 44th Annual IEEE/ACM International
  Symposium on Microarchitecture}, MICRO-44, pages 248--259, Porto Alegre,
  Brazil, 2011. ACM.
\newblock URL: \url{http://doi.acm.org/10.1145/2155620.2155650}, \href
  {http://dx.doi.org/10.1145/2155620.2155650}
  {\path{doi:10.1145/2155620.2155650}}.

\bibitem{MarshallKeFr11}
Paul Marshall, Kate Keahey, and Tim Freeman.
\newblock Improving utilization of infrastructure clouds.
\newblock In {\em Proceedings of the 2011 11th IEEE/ACM International Symposium
  on Cluster, Cloud and Grid Computing}, CCGRID '11, pages 205--214,
  Washington, DC, USA, 2011. IEEE Computer Society.
\newblock URL: \url{http://dx.doi.org/10.1109/CCGrid.2011.56}, \href
  {http://dx.doi.org/10.1109/CCGrid.2011.56}
  {\path{doi:10.1109/CCGrid.2011.56}}.

\bibitem{MestreV14}
Juli{\'{a}}n Mestre and Jos{\'{e}} Verschae.
\newblock A 4-approximation for scheduling on a single machine with general
  cost function.
\newblock {\em CoRR}, abs/1403.0298, 2014.
\newblock URL: \url{http://arxiv.org/abs/1403.0298}.

\bibitem{FlowCoverBook}
George~L. Nemhauser and Laurence~A. Wolsey.
\newblock {\em Integer and combinatorial optimization}.
\newblock Wiley interscience series in discrete mathematics and optimization.
  Wiley, 1988.

\bibitem{FlowCover}
Tony J.~Van Roy and Laurence~A. Wolsey.
\newblock Valid inequalities for mixed 0-1 programs.
\newblock {\em Discrete Applied Mathematics}, 14(2):199 -- 213, 1986.

\bibitem{smith}
Wayne~E. Smith.
\newblock Various optimizers for single-stage production.
\newblock {\em Naval Research Logistics Quarterly}, 3:59--66, 1956.

\bibitem{Varadarajan10}
Kasturi~R. Varadarajan.
\newblock Weighted geometric set cover via quasi-uniform sampling.
\newblock In {\em Proceedings of the 42nd {ACM} Symposium on Theory of
  Computing, {STOC} 2010, Cambridge, Massachusetts, USA, 5-8 June 2010}, pages
  641--648, 2010.

\end{thebibliography}

\appendix

\section{Strengthening the Linear Program}
\label{sec:derivation}

The goal is to derive a set of valid strengthening inequalities for the IP (\ref{IP:first}). These inequalities are used to  strengthen the minimum cut constraints. These inequalities are needed because without them the LP has an unbounded integrality gap even if all jobs arrive at the same time on a single machine \cite{CheungS11}. Other natural LP relaxations, such as a time indexed LP also have an unbounded gap even on a single machine. 

To derive the inequalities, notice that any given IP solution can be related to a corresponding flow graph.   Each job node $a_i$ has an edge to a node $b_t$ in the right bipartition with capacity $x_{i,t}$.  These will all be $1$ for $t$ smaller than $i$'s completion time and $0$ for larger $t$.  The only edges that depend on the completion times for jobs are those between the job nodes and the nodes representing time steps and, therefore, this completes the graph description.   We will call the resulting graph $G$ a \textbf{scheduling flow graph}.

Fix a solution $x$ to the IP. Let $C_j$ be the completion time of job $j$ in this solution; that is, the latest time $t$ where $x_{j,t}=1$. Let $G_x$ be the corresponding scheduling flow graph. In $G_x$ the job node $a_j$ has edges of capacity $1$ to a contiguous set of nodes on the right partition from time one  until its completion time. That is, to all nodes $b_1, b_2, \ldots , b_{C_j}$.  This  property is what will be leveraged to derive strengthening inequalities.  To do so, first some basic facts about these graphs are established.  The first relates cuts in the graph to the integer program and this follows by definition of the constraints (\ref{constraint:weak}).

\begin{fact}
Let $x$ be a solution to the IP and let $G_x$ be the resulting scheduling flow graph. If the minimum cut in $G_x$ is at least $\sum_{i \in J} p_i$ then the set of constraints (\ref{constraint:weak}) are satisfied in $x$.
\end{fact}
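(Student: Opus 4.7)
The plan is to exhibit a direct correspondence between $(s,d)$-cuts in the scheduling flow graph $G_x$ and the pairs $(J', T')$ indexing the constraints in (\ref{constraint:weak}), and then observe that the capacity of each such cut equals the left-hand side of the corresponding inequality.

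First I would fix any subset $J' \subseteq J$ and $T' \subseteq T$ and consider the $(s,d)$-cut defined by placing $s$, every $a_j$ with $j \in J'$, and every $b_t$ with $t \in T \setminus T'$ on the source side, with the remaining vertices on the sink side. By the construction of $G_x$ given in Section~\ref{sec:deadline}, the edges crossing this cut in the forward direction fall into exactly three groups: source edges $s \to a_j$ for $j \notin J'$ (contributing capacity $p_j$ each), bipartite edges $a_j \to b_t$ for $j \in J'$ and $t \in T'$ (contributing $x_{j,t}$ each, by the definition of the capacities in $G_x$), and sink edges $b_t \to d$ for $t \notin T'$ (contributing $m$ each). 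Summing these contributions gives exactly $\sum_{j \in J\setminus J'} p_j + \sum_{j \in J'}\sum_{t \in T'} x_{j,t} + \sum_{t \in T \setminus T'} m$, which is precisely the left-hand side of the inequality in (\ref{constraint:weak}) indexed by $(J',T')$.

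Conversely, every $(s,d)$-cut in $G_x$ arises this way: given any cut, one recovers $J'$ as the set of $j$ such that $a_j$ is on the source side and $T'$ as the set of $t$ such that $b_t$ is on the sink side. Hence the multiset of cut capacities in $G_x$ is exactly the multiset of left-hand sides appearing on the left of (\ref{constraint:weak}) as $(J', T')$ ranges over all choices. If the minimum cut in $G_x$ is at least $\sum_{i \in J} p_i$, every one of these left-hand sides is at least $\sum_{i \in J} p_i$, i.e.\ all constraints (\ref{constraint:weak}) are satisfied by $x$.

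The main, and essentially only, obstacle is bookkeeping: being careful with the convention that $J'$ denotes the jobs whose nodes are on the \emph{source} side and $T'$ the time steps whose nodes are on the \emph{sink} side, so that the three edge contributions line up with the three summands of (\ref{constraint:weak}). Once this convention is fixed, the fact reduces to matching terms between the cut expression and the written inequality.
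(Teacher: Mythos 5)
Your proof is correct and matches the argument the paper intends: the paper simply asserts that the fact ``follows by definition of the constraints (\ref{constraint:weak}),'' and your term-by-term identification of each cut capacity in $G_x$ with the left-hand side of the constraint indexed by $(J',T')$ is exactly that correspondence, with the source/sink-side conventions handled correctly. The converse direction (every cut arises from some $(J',T')$) is not even needed for this implication, but it is accurate and does no harm.
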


The previous fact allows us to focus on strengthening  the inequalities that correspond to  a minimum cut in $G_x$. Next consider two facts about minimum cuts in any scheduling flow graph.
  
  \begin{fact}
  \label{fact:pedges}
Consider any scheduling flow graph $G$ and any cut $\{s\} \cup A' \cup (B\setminus B')$ where    $A' \subseteq A$ and $B'\subseteq B$.   If this is a \emph{minimum cut} then  any node $a_j \in A'$ has at most $p_j$ edges to nodes in $B'$ with positive capacity (i.e. equal to 1). 
  \end{fact}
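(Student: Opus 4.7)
The plan is a one-step exchange argument on the minimum cut. First I would write out explicitly the capacity of the given cut $\{s\}\cup A'\cup(B\setminus B')$ in the directed flow graph $G_x$. Only forward-crossing arcs contribute, so the capacity decomposes into three pieces: (i) the source arcs $s\to a_i$ for $i\in A\setminus A'$, contributing $\sum_{i\in A\setminus A'} p_i$; (ii) the sink arcs $b_t\to d$ for $b_t\in B\setminus B'$, contributing $m\,|B\setminus B'|$; and (iii) the bipartite arcs $a_j\to b_t$ for $j\in A'$ and $b_t\in B'$ that are actually present, each of capacity exactly $1$. These present arcs are precisely those with $t\le C_j$, so for a fixed $j\in A'$ their contribution equals $|\{b_t\in B' : t\le C_j\}|$, which is what the fact calls the number of edges of positive capacity from $a_j$ to $B'$.

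Next I would fix some $a_{j^*}\in A'$ and consider the alternative cut obtained by moving $a_{j^*}$ to the sink side, i.e., the cut $\{s\}\cup(A'\setminus\{a_{j^*}\})\cup(B\setminus B')$. The change in cut value has exactly two contributions: we gain the source arc $s\to a_{j^*}$, which adds $p_{j^*}$, and we lose all bipartite arcs of positive capacity from $a_{j^*}$ into $B'$, which subtracts $|\{b_t\in B' : t\le C_{j^*}\}|$. All other terms are identical. If $a_{j^*}$ had more than $p_{j^*}$ edges of positive capacity into $B'$, the new cut would be strictly smaller than the original, contradicting the assumption that the original cut is a minimum cut.

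I would close by noting that this argument applies uniformly to every $a_j\in A'$, yielding the stated bound. There is no real obstacle here: the statement is essentially a local optimality condition for min-cuts, and the proof is a textbook exchange argument. The only thing to be careful about is correctly identifying which arcs cross the cut in the directed sense (so that arcs $a_i\to b_t$ with $a_i\in A\setminus A'$ and $b_t\in B\setminus B'$, which go from sink side to source side, are correctly excluded from the cut value and therefore play no role in the exchange). Once that bookkeeping is done, the inequality follows immediately.
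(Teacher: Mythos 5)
Your proposal is correct and matches the paper's own argument: both prove the fact by moving $a_j$ to the other side of the cut, trading the $p_j$-capacity source arc for the more-than-$p_j$ unit-capacity bipartite arcs, contradicting minimality. Your version merely spells out the cut decomposition and the directed bookkeeping more explicitly.
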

  \begin{proof}
  For the sake of contradiction say that the node $a_j$ has more than $p_j$ edges to nodes in $B'$ with positive capacity.  Each of the edges is cut, all have capacity $1$ and their total capacity is at least $p_j+1$.  By moving $a_j$ to $A \setminus A'$ all of these edges will no longer be cut.  The only new edge cut is from $s$ to $a_j$ of capacity $p_j$.  This strictly decreases the total cut value, contradicting the definition of  a minimum cut. 
  \end{proof}
  
    \begin{fact}
    \label{fact:bigjob}
Consider any scheduling flow graph $G$ and any minimum cut $\{s\} \cup A' \cup (B\setminus B')$ where    $A' \subseteq A$ and $B'\subseteq B$.   Fix a job $j$ with sufficiently large processing time, $p_j > \sum_{i \in J} p_i - m | B\setminus B'|$.  If $a_j$ has at least $ \sum_{i \in J} p_i - m | B\setminus B'|$  edges to nodes in $ B'$ with positive value (i.e. equal to 1), then the minimum cut value is at least $\sum_{i \in J} p_i$.
   \end{fact}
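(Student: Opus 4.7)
The plan is to evaluate the value of the given minimum cut $\{s\} \cup A' \cup (B \setminus B')$ directly from the structure of the scheduling flow graph and then lower bound it using the two hypotheses of the fact. Writing $E(A',B')$ for the set of positive-capacity edges from $A'$ to $B'$, the cut value equals
\[
\sum_{i \in A \setminus A'} p_i \;+\; |E(A',B')| \;+\; m\,|B \setminus B'|,
\]
since the three terms come respectively from the source edges to $A \setminus A'$, the bipartite edges crossing from $A'$ to $B'$, and the sink edges out of $B \setminus B'$.

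I would then split the argument into two cases according to whether the distinguished job node $a_j$ lies on the source side of the cut. In the first case, $a_j \in A'$: then every positive-capacity edge from $a_j$ into $B'$ is contained in $E(A',B')$, so by the second hypothesis $|E(A',B')| \geq \sum_{i \in J} p_i - m|B \setminus B'|$. Combined with the $m|B \setminus B'|$ contribution from the sink side and the nonnegativity of the first term, the cut value is at least $\sum_{i \in J} p_i$, as required.

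In the second case, $a_j \notin A'$, i.e.\ $a_j \in A \setminus A'$: then the source edge $(s,a_j)$ of capacity $p_j$ is paid for by the first term. Using the first hypothesis $p_j > \sum_{i \in J} p_i - m|B \setminus B'|$ together with the $m|B \setminus B'|$ from the sink side, the cut value is strictly greater than $\sum_{i \in J} p_i$, which again gives the claimed bound.

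I do not expect a serious obstacle here; the statement is essentially a case analysis that mirrors Fact~\ref{fact:pedges}. The only subtlety is being careful that the second hypothesis only helps when $a_j$ actually lies in $A'$, which is exactly why the case split on the location of $a_j$ is forced. The minimum-cut assumption itself is used only to identify the cut value with ``the minimum cut value,'' so no additional optimization argument is needed beyond summing the crossing capacities.
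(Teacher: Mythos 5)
Your proof is correct and follows essentially the same argument as the paper: a case split on whether $a_j \in A'$, lower-bounding the crossing capacity adjacent to $a_j$ (either the source edge of capacity $p_j$ or the at least $\sum_{i \in J} p_i - m|B\setminus B'|$ unit edges into $B'$) and adding the $m|B\setminus B'|$ contribution from the sink edges. No gaps; your explicit write-out of the cut value is just a slightly more detailed presentation of the paper's reasoning.
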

\begin{proof}
Since the cut is a minimum cut, it suffices to bound the value of this cut by $\sum_{i \in J} p_j$ to establish the fact.  Consider the node $a_j$. If $a_j \notin A'$ then the edge from the source to $a_j$ is cut and its capacity is $p_j$. If $a_j \in A'$ then $ \sum_{i \in J} p_i - m | B\setminus B'|$ edges from $a_j$ to nodes in $B'$ are cut and they all have capacity $1$.  Given that $p_j$ is sufficiently large, in either case the total capacity of edges adjacent to $a_j$ that are cut is at least $ \sum_{i \in J} p_i - m | B\setminus B'|$ 

Additionally, notice that the total capacity of edges from time step nodes to the sink that are cut is exactly $m | B\setminus B'|$.  Together this shows the cut value is at least $\sum_{i \in J} p_i$. 
\end{proof}
  
  \medskip
\noindent \textbf{First Set of Strong Inequalities:}  Fix any solution $x$ and a minimum cut  $\{s\} \cup A' \cup (B\setminus B')$ where  $B'\subseteq B$ and $A'\subseteq A$ in the graph $G_x$ corresponding to $x$.  Let $T'$ be the time steps corresponding to nodes in $ B'$ and let $J'$ be the jobs corresponding to nodes in $A'$. The two prior facts will be used to give inequalities  that strengthen the constraint in (\ref{constraint:weak}) for $T'$ and $J'$.

  The constraint for $T'$ and $J'$ is satisfied only if the following is satisfied. For a job $j$ let $E(T',j)$ be the set of up to $\min\{p_j, \sum_{i \in J} p_i - m | T\setminus T'| \}$ \textbf{earliest} time steps in $ T'$. 
 
 \vspace{-.3cm}
  \begin{small}
\begin{eqnarray} \sum_{j \in J \setminus J'}p_j + \sum_{j \in J'} \sum_{t \in  E(T',j)} x_{j,t} + \sum_{t \in T\setminus T'} m \geq \sum_{j \in J} p_j \label{eqn:inter}
\end{eqnarray}
\end{small} 
\vspace{-.3cm}

The  change is that the summation in the second term only considers time steps in $E(T',j)$ for each job $j$. For some solution, this could potentially reduce the left hand side of the constraint. To see why this is a valid replacement for the constraint consider the following.  Fix job $j$ and let $t^*_j$ be the latest time in $E(T',j)$. Only the time steps in $E(T',j)$ need to be considered in the summation because one of the following holds.  

\begin{compactitem}
\item If $j$ is completed at or before time $t_j^*$ (i.e. $x_{j,t} = 0$ for $t > t_j^*$), then $j$ contributes the same amount to the left hand side of (\ref{eqn:inter}) as  (\ref{constraint:weak}).  Due to this, in the following bullets assume $j$ is completed after $t_j^*$ (i.e. $x_{j,t} = 1$ for $t \leq t_j^*$).
\item If there are $p_j$ time steps at or before $t^*_j$ in $T'$ and $j$ is completed after $t_j^*$, it must be the case that $j \notin J'$ by Fact~\ref{fact:pedges}.  Hence $j$ is not included in the outer summation.
\item Say the prior case doesn't hold. If $ \sum_{i \in J} p_i - m | T\setminus T'|$ time steps are in $T'$ at or before $t^*_j$ and $j$ is completed after $t_j^*$ then the proof of Fact~\ref{fact:bigjob} implies the constraint is satisfied by only considering these edges in $E(T',j)$.  Note that if the previous case did not hold, then in this case we may assume $p_j > \sum_{i \in J} p_i - m | T\setminus T'|$ because there are at least $\sum_{i \in J} p_i - m | T\setminus T'|$ times in $T'$ at or before $t^*_j$, but less than $p_j$. 
\end{compactitem}

We now simplify these constraints.  Given that $|E(T',j)| \leq p_j$ for all jobs $j$ and that $x_{j,t}\leq 1$, the left hand side of the above constraint can be bounded as follows.   The modification is the first term is dropped, but now the second summation is over all jobs.

\vspace{-.3cm}
  \begin{small}
\begin{eqnarray*}
&&\sum_{j \in J \setminus J'}p_j + \sum_{j \in J'} \sum_{t \in  E(T',j)} x_{j,t} + \sum_{t \in T\setminus T'} m
\geq \sum_{j \in J \setminus J'}|E(T',j)| + \sum_{j \in J'} \sum_{t \in  E(T',j)} x_{j,t} + \sum_{t \in T\setminus T'} m\\
&\geq& \sum_{j \in J} \sum_{t \in E(T',j)} x_{j,t} + \sum_{t \in T\setminus T'} m   \\
\end{eqnarray*}
\end{small} 
\vspace{-.7cm}

This shows that the above constraint for $J'$ and $T'$ in (\ref{eqn:inter}) is satisfied  only if the  constraint in (\ref{eqn:inter}) for $J$ and $T'$ is satisfied.  This will allow us to define inequalities that do not iterate over all subsets of jobs, but only subsets of time steps.   The set of constraints (\ref{constraint:weak}) in the integer program can be replaced with the following stronger set of inequalities.  These inequalities are stronger because we pruned the set of $x$ variables that can contribute to the left hand side of the constraint in the first summation. 

\vspace{-.5cm}
  \begin{small}
\begin{eqnarray}\sum_{j \in J} \sum_{t \in E(T',j)} x_{j,t} + \sum_{t \in T\setminus T'} m \geq \sum_{j \in J} p_j  \;\;\;\;  \forall T' \subseteq T  \label{eqn:firstcut}
\end{eqnarray}
\end{small} 
\vspace{-.5cm}

  \medskip
\noindent \textbf{Even Stronger Inequalities:} To further strengthen these inequalities, ideas similar to knapsack cover inequalities are used.  Let $D$ be a \textbf{vector} where $D_j$ is a time corresponding to job $j$.   Intuitively, $D_j$ is a lower bound on the completion time for  job $j$. The constraints below say that even if all jobs $j$ are set to have completion times \textbf{at least} $D_j$  then the constraints should still be satisfied.   

Fix $T' \subseteq T$ and a vector $D$ of completion times for every job.  Consider the constraint for $T'$ in (\ref{eqn:firstcut}).  If job $j$ is given a completion time of at least $D_j$ then  $j$ will contribute at least $\sum_{t \in  [0, D_j] \cap E(T',j) }1$ to the left side of the inequality.  Let $V(T',D) = \sum_{i\in J } p_i - \sum_{t \in T \setminus T'} m - \sum_{j \in J}\sum_{t \in  [0, D_j] \cap E(T',j) }1$.  This idea gives rise to the following set of constraints for each pair $T'$ and $D$ where $V(T',D)$ is positive.  

\vspace{-.3cm}
\begin{small}
$$\sum_{j \in J} \sum_{t \in E(T',j), t > D_j} x_{j,t} \geq V(T',D)  \;\;\;\;  \forall T' \subseteq T, \forall D, V(T',D)  >0$$
\end{small}
\vspace{-.3cm}

These new constraints are not tighter than those in (\ref{eqn:firstcut}). To make them tighter, consider the following.   

  Let $E(T',D,j)$ be the \textbf{earliest}  $V(T',D)$ time steps  in $E(T',j)$ later than $D_j$. The new constraints are as follows.  These are the \textbf{job cover inequalities} and are the same as constraints~\ref{constraint:nospeed1} givin in the paper. 

\vspace{-.4cm}
\begin{eqnarray*}
\sum_{j \in J} \sum_{t \in E(T',D,j)} x_{j,t} \geq V(T',D)  \;\;\;\;  \forall T' \subseteq T, \forall D, V(T',D)  >0  \qquad (\ref{constraint:nospeed1}) 
\end{eqnarray*}
\vspace{-.4cm}

Knowing that $E(T',D,j) \subseteq \{ t \;| \; t \in E(T',j) \mbox{ and } t > D_j\} $ for any $D$, these inequalities are tighter because the left hand side only considers times in $E(T',D,j)$.  These constraints can be shown to be valid using the following observations.  Consider any job $j$.  We compare the summation $\sum_{t \in E(T',j)} x_{j,t}$  to the summation  $\sum_{t \in E(T',D,j)} x_{j,t} $ and argue that changing the first to the second will not effect the validity of the constraint. In particular, this change can only decrease the left hand side of the inequality, so we need to establish that the constraint remains satisfied after the change if it was originally satisfied.  In the following, let $t^*_j$ denote the last time in $E(T',D,j)$.

\begin{compactitem}
\item If there are at most $V(T',D)$ time steps  in $E(T',j)$ later than $D_j$ then these summations are the same.
\item   If there are more than $V(T',D)$ time steps  in $E(T',j)$ later than $D_j$ and job $j$ completes at or \emph{before} time $t^*_j$ then the two summations are the same.  This is because $x_{j,t} =0$ for all times $t \in E(T',j) \setminus E(T',D,j)$ because $j$ completes by these times. 
\item   If there are more than $V(T',D)$ time steps  in $E(T',j)$ later than $D_j$ and job $j$ completes \emph{after} time $t^*_j$ then  $\sum_{t \in E(T',D,j)} x_{j,t}  \geq V(T',D) $ and the constraint in  (\ref{constraint:nospeed1}) for $T'$ and $D$ is satisfied.  This is because $x_{j,t} =1$ for all times $t \in E(T',D,j)$ and there are at least $V(T',D)$ such times.  
\end{compactitem}



\medskip 
\noindent \textbf{Reducing the Number of Constraints:} In Appendix \ref{sec:continuousT} it is established that any \emph{integer} solution satisfies all the constraints in (\ref{constraint:nospeed1}) if and only if the following are satisfied. It can be established that any \emph{integer} solution satisfies all the constraints in (\ref{constraint:nospeed1}) if and only if the following are satisfied.  These are the same as the constraints~ (\ref{constraint:nospeed}) presented in the main body of the paper. 

\vspace{-.4cm}
\begin{eqnarray*}
\sum_{j \in J} \sum_{t \in E(T',D,j)} x_{j,t} \geq V(T',D)  \;\;\;\;   \forall b \in [0,\infty], T' =[b,\infty], \forall D, V(T',D)  >0 \qquad (\ref{constraint:nospeed})  
\end{eqnarray*} 

 These constraints state that if the subset of constraints in  (\ref{constraint:nospeed1}) are satisfied for any $D$ and all sets $T'$  that consist of a continuous set of time steps from some  time $b$ to time $\infty$ then all of the constraints in  (\ref{constraint:nospeed1})  are satisfied (for any $T'$).  Due to this, we will only need to use the constraints in (\ref{constraint:nospeed}) that restricts the sets $T'$.

The constraints (\ref{constraint:weak}) in the  IP are replaced by the constraints in (\ref{constraint:nospeed}).   Throughout the paper these constraints are discussed and it is said that a fixed constraint is defined by the set $T' = [b,\infty]$ and a vector $D$.

\medskip \noindent \textbf{Note on Solving  the LP:}  The IP is relaxed to a LP.  The LP is solved and then subsequently rounded to an integer solution.  There are an exponential number of constraints.  To solve the LP, the ellipsoid method is used.  The author does not know of an efficient separation oracle for the set of constraints in (\ref{constraint:nospeed1}).   The reduced set of constraints in (\ref{constraint:nospeed}) are the only constraints needed for the analysis.  For this set of constraints, an efficient dynamic programming algorithm can be used as a separation oracle.  See Appendix~\ref{sec:separation}.


\section{Reducing the Number of Constraints}
\label{sec:continuousT}

This section shows that if all the constraints in (\ref{constraint:nospeed})  are satisfied then the superset of constraints in (\ref{constraint:nospeed1}) are satisfied.  In particular, if the constraints where $T'$ is a contiguous set of time steps from some time $b$ until time $\infty$ for all values of $b$ are satisfied then the constraints are satisfied for all possible choices of $T'$.

 To begin the proof, the following lemma relates how the constraints change as time steps are added to $T'$. In particular $\sum_{j \in J} \sum_{t \in E(T',j)  }  x_{j,t}$ decreases if a time step is removed from $T'$ and is replaced with a later time step.   This follows because the $x$ variables decrease as $t$ becomes larger by constraint $(\ref{constraint:greater})$.  Note that this summation counts times in $E(T',j)$ and \emph{not} $E(T',D',j)$.
 
 \begin{lemma}
 \label{lem:swaptimesoffline}
 Fix any (possibly infeasible) solution $x'$ to the LP  that satisfies $x'_{j,t} \geq x'_{j,t+1}$ for all jobs $j$ and times $t$.   Fix any set of time steps $T'$.  For any $t \in T'$  and any  time $t' > t$ where $t' \notin T'$  it is the case that $\sum_{j \in J} \sum_{t \in E(T',j) }  x'_{j,t} \geq \sum_{j \in J} \sum_{t \in E(T' \setminus \{t\} \cup \{t'\},j)}  x'_{j,t}$.  
 \end{lemma}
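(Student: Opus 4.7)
\medskip
\noindent \textbf{Proof plan for Lemma~\ref{lem:swaptimesoffline}:} The plan is to fix an individual job $j$ and compare $E(T',j)$ with $E(T'',j)$ where $T'' = T' \setminus \{t\} \cup \{t'\}$, then sum over $j$. First I would observe that $|T \setminus T'| = |T \setminus T''|$, so the truncation threshold $k_j := \min\{p_j, \sum_{i\in J} p_i - m|T \setminus T'|\}$ used in defining both $E(T',j)$ and $E(T'',j)$ is the same. Thus both sets consist of the $\min\{k_j, |T'|\}$ earliest elements of their respective parent sets, taken with respect to the same cap $k_j$.

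Next I would do a short case analysis on where $t$ sits inside $T'$, enumerating the elements of $T'$ in increasing order as $t_1 < t_2 < \cdots < t_{|T'|}$ and writing $t = t_i$. If $t \notin E(T',j)$ (equivalently $i > k_j$ and $|T'| > k_j$), then the first $k_j$ elements of $T'$ are all strictly smaller than $t$, hence strictly smaller than $t'$; removing $t$ and inserting the later value $t'$ leaves the first $k_j$ elements untouched, so $E(T',j) = E(T'',j)$ and the contribution of job $j$ to the two sides of the inequality is identical. If $|T'| \leq k_j$, then $E(T',j) = T'$ and $E(T'',j) = T''$, and the two sums differ by exactly $x'_{j,t} - x'_{j,t'}$, which is nonnegative by the monotonicity $x'_{j,\tau} \geq x'_{j,\tau+1}$.

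The only genuinely interesting case is $t \in E(T',j)$ with $|T'| > k_j$. Here removing $t_i$ opens a hole in the first $k_j$ slots, and the replacement entering $E(T'',j)$ is $f := \min(t', t_{k_j+1})$, where $t_{k_j+1}$ is the first element of $T'$ strictly after the original $E(T',j)$. Since both $t' > t$ and $t_{k_j+1} > t$, we have $f > t$, so $E(T'',j) = (E(T',j) \setminus \{t\}) \cup \{f\}$ with $f > t$. Applying the hypothesis $x'_{j,\tau} \geq x'_{j,\tau+1}$ iteratively gives $x'_{j,t} \geq x'_{j,f}$, whence $\sum_{\tau \in E(T'',j)} x'_{j,\tau} \leq \sum_{\tau \in E(T',j)} x'_{j,\tau}$.

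Summing this job-wise inequality over all $j \in J$ yields the claim. The argument requires no clever step beyond the threshold-preservation observation and the elementary fact that replacing one element of an ordered set by a strictly larger element can change the $k$-smallest elements by at most a single swap in the same direction; the only place one needs to be slightly careful is handling the boundary case $|T'| \leq k_j$ separately so that $t_{k_j+1}$ is well defined whenever it is invoked.
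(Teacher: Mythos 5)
Your proof is correct and follows essentially the same route as the paper's: both arguments note that the truncation threshold (and hence the cardinality of the $E$ sets) is unchanged since $|T\setminus T'|$ is preserved, and that the swap replaces $t$ by a single element later than $t$ (or leaves $E(T',j)$ unchanged), after which the monotonicity $x'_{j,\tau}\geq x'_{j,\tau+1}$ gives the job-wise inequality that is summed over $j$. Your write-up simply makes the paper's case analysis more explicit, including the boundary case $|T'|\leq k_j$.
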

 
 \begin{proof}
 By definition $E(T',j)$ contains the earliest $\min\{p_j, \sum_{i\in J} p_i - m|T\setminus T'| \}$ times steps in $T'$.  The definition implies that $E(T' \setminus \{t\} \cup \{t'\},j)$ contains exactly as many time steps as $E(T',j)$.  Further, since $t' \geq t$ it must the the case that  either $E(T' \setminus \{t\} \cup \{t'\},j) = E(T',j)$ or $t \in E(T',j)$ and there is a time step $t'' \in E(T' \setminus \{t\} \cup \{t'\},j) \setminus E(T',j)$ that is only later than $t$.  Knowing that  that $x'_{j,t} \geq x'_{j,t+1}$ for all $j$ and $t$ the lemma follows.

 \end{proof}
 
Next it is established that for integer solutions to the LP it is sufficient to focus on showing a weaker constraint than that used in the LP is satisfied.

\begin{lemma}
\label{lem:simpleeqoffline}
Consider any integer solution $x'$ the LP that satisfies $x'_{j,t} \geq x'_{j,t+1}$ for all $j,t$.  Fix any set of time steps $T'$. The inequality $\sum_{j \in J} \sum_{t \in E(T',j)} x'_{j,t} + \sum_{t \in T\setminus T'} m \geq \sum_{j \in J} p_j$ is satisfied if and only if  constraints for $T'$ and all $D$ are satisfied by $x'$ in (\ref{constraint:nospeed1}).
\end{lemma}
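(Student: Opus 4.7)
The plan is to prove the two directions of the equivalence separately.

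For direction ($\Leftarrow$), I would just take $D=\mathbf{0}$ and read off that the resulting instance of~(\ref{constraint:nospeed1}) \emph{is} the simple inequality. The check has two definitional pieces: $V(T',\mathbf{0}) = \sum_j p_j - m|T\setminus T'|$ since $|E(T',j)\cap[0,0]|=0$, and $E(T',\mathbf{0},j)=E(T',j)$ because the cap built into the definition of $E(T',j)$ guarantees $|E(T',j)|\le V(T',\mathbf{0})$, so ``the earliest $V(T',\mathbf{0})$ members of $E(T',j)$ above $0$'' is the whole set.

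Direction ($\Rightarrow$) is the substantive one. Fix an arbitrary $D$ with $V(T',D)>0$, and use that $x'$ is integral and monotone to introduce $C_j=\max\{t:x'_{j,t}=1\}$, so that $x'_{j,t}=\mathbf{1}[t\le C_j]$. The central identity I would establish is
\[
\sum_{t\in E(T',D,j)} x'_{j,t} \;=\; |E(T',D,j)\cap[0,C_j]| \;=\; \min\bigl(V(T',D),\ |E(T',j)\cap(D_j,C_j]|\bigr),
\]
which falls straight out of the definition of $E(T',D,j)$ as the earliest $V(T',D)$ members of $E(T',j)$ strictly above $D_j$. With this in hand I would case-split. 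If some job $j^*$ satisfies $|E(T',j^*)\cap(D_{j^*},C_{j^*}]|\ge V(T',D)$, then by the identity $j^*$ alone contributes $V(T',D)$ to the left side of~(\ref{constraint:nospeed1}) and the constraint holds. Otherwise the minimum always realizes its second argument, so $\sum_{t\in E(T',D,j)} x'_{j,t} = |E(T',j)\cap(D_j,C_j]|$ for every $j$, and the task reduces to showing $\sum_j |E(T',j)\cap(D_j,C_j]|\ge V(T',D)$.

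To close this final inequality, I would observe that per job one always has $|E(T',j)\cap[0,C_j]| - |E(T',j)\cap[0,D_j]| \le |E(T',j)\cap(D_j,C_j]|$: when $C_j\ge D_j$ this is equality by the partition $[0,C_j] = [0,D_j]\sqcup(D_j,C_j]$, and when $C_j<D_j$ the left side is non-positive while the right side is zero. Summing over $j$, substituting the simple inequality $\sum_j |E(T',j)\cap[0,C_j]|\ge \sum_j p_j - m|T\setminus T'|$ (which is just the hypothesis once $x'_{j,t}$ is written as $\mathbf{1}[t\le C_j]$), and using $V(T',D)=\sum_j p_j - m|T\setminus T'| - \sum_j|E(T',j)\cap[0,D_j]|$ delivers exactly the needed bound. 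The main obstacle is spotting the single-job shortcut in the first case: without it one would try to aggregate only the ``late'' contributions across jobs and the truncation by $V(T',D)$ in the $\min$ would be in the way. With the shortcut, the second case is a short accounting argument across the three nested intersections $E(T',j)\cap[0,C_j]$, $E(T',j)\cap[0,D_j]$, and $E(T',j)\cap(D_j,C_j]$.
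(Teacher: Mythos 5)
Your proposal is correct and takes essentially the same route as the paper: the ($\Leftarrow$) direction is the $D=\mathbf{0}$ specialization of (\ref{constraint:nospeed1}), and the ($\Rightarrow$) direction is the same per-job accounting that compares $\sum_{t \in E(T',D,j)} x'_{j,t}$ with $\sum_{t \in E(T',j)} x'_{j,t}$ minus the prefix count $|E(T',j)\cap[0,D_j]|$ and then sums over jobs using the definition of $V(T',D)$. The only difference is presentational: the paper argues the contrapositive (a violated constraint forces each job's truncated sum below $V(T',D)$, which implicitly rules out the saturation case), whereas you prove the implication directly and handle the saturating job $j^*$ by an explicit single-job case, which makes the per-job inequality airtight where the paper leaves that case implicit.
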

\begin{proof}

The first inequality is equivalent to  (\ref{constraint:nospeed1})  when $D_j = 0$ for all $D_j$.  Due to this, it suffices to show that the if there are sets $T'$ and $D$ where the constraint  (\ref{constraint:nospeed1}) is unsatisfied then $\sum_{j \in J} \sum_{t \in E(T',j)} x'_{j,t} + \sum_{t \in T\setminus T'} m < \sum_{j \in J} p_j$.  Fix such a set $T'$ and a vector $D$.    By assumption, it is the case that the following holds.

$$\sum_{j \in J} \sum_{t \in E(T',D,j)} x'_{j,t} < V(T',D)  = \sum_{i\in J } p_i - \sum_{t \in T \setminus T'} m - \sum_{j \in J}\sum_{t \in  [0, D_j] \cap E(T',j) }1 $$

The definition of $E(T',D,j)$ and $E(T,j)$ imply that for any $j$ it is the case that $ \sum_{t \in E(T',D,j)} x'_{j,t} +  \sum_{j \in J}\sum_{t \in  [0, D_j] \cap E(T',j) }1 \geq  \sum_{t \in E(T',j)} x'_{j,t}$.  Equality holds if $D_j$ is smaller than the last time $t$ where $x_{j,t} =1$. This and the above gives the following, proving the lemma

$$\sum_{j \in J} \sum_{t \in E(T',j)} x'_{j,t} <  \sum_{i\in J } p_i - \sum_{t \in T \setminus T'} m $$

\end{proof}

Using the previous two lemmas, it can  be shown that if there is a constraint in (\ref{constraint:nospeed1}) for an arbitrary $T'$ and $D$ unsatisfied by an integer solution $x^*$ then there is a constraint unsatisfied  in (\ref{constraint:nospeed}) where $T'$ is of the form $T' =[b,\infty]$ for some $b$. This is established in the next lemma.

\begin{lemma}
\label{lem:continuousoffline}
Let $x^*$ be an integer solution satisfying $x^*_{j,t} \geq x^*_{j,t+1}$ for all $j,t$.  If there is a constraint in (\ref{constraint:nospeed1}) unsatisfied by $x^*$ for some $T'$ and $D$ then there exists a constraint in (\ref{constraint:nospeed}) unsatisfied by  $x^*$ for some $T''$ and $D''$ where $T'' = [b,\infty]$  for some $b$ and $D''$ contains $D''_j = 0$ for all jobs $j$.
\end{lemma}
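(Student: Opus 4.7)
The plan is to reduce an arbitrary unsatisfied constraint in (\ref{constraint:nospeed1}) to one of the form required in (\ref{constraint:nospeed}) by first passing to the simpler inequality of Lemma~\ref{lem:simpleeqoffline}, then ``pushing'' the time set $T'$ rightward via Lemma~\ref{lem:swaptimesoffline} until it becomes a suffix $[b,\infty]$.

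First, I would invoke Lemma~\ref{lem:simpleeqoffline} applied to $x^*$ and the given set $T'$: since some constraint in (\ref{constraint:nospeed1}) for $T'$ and some $D$ is unsatisfied by $x^*$, the equivalent simple inequality fails, i.e.
\[
\sum_{j \in J} \sum_{t \in E(T',j)} x^*_{j,t} + \sum_{t \in T \setminus T'} m \;<\; \sum_{j \in J} p_j.
\]
Next, define $T'' = [b,\infty]$ to be the set consisting of the latest $|T'|$ time steps of $T$; in particular $|T \setminus T''| = |T \setminus T'|$, so the $m$-term is unchanged when $T'$ is replaced by $T''$.

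The key step is to show that $T''$ can be obtained from $T'$ by a sequence of swaps of the type handled by Lemma~\ref{lem:swaptimesoffline}, each replacing some $t \in T'$ by a strictly later $t' \notin T'$. This is a routine bubble-sort argument: as long as $T'$ is not already a suffix of $T$, there exist $t \in T'$ and $t' > t$ with $t' \notin T'$; swap them, obtaining a set of the same size in which the multiset of positions has moved strictly rightward (e.g., the sum of indices in $T'$ strictly increases and is bounded, so the process terminates). By Lemma~\ref{lem:swaptimesoffline} each individual swap does not increase $\sum_{j \in J} \sum_{t \in E(\cdot, j)} x^*_{j,t}$, so after the full sequence
\[
\sum_{j \in J} \sum_{t \in E(T'',j)} x^*_{j,t} \;\leq\; \sum_{j \in J} \sum_{t \in E(T',j)} x^*_{j,t}.
\]

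Combining these observations gives
\[
\sum_{j \in J} \sum_{t \in E(T'',j)} x^*_{j,t} + \sum_{t \in T \setminus T''} m \;<\; \sum_{j \in J} p_j,
\]
so the simple inequality fails for $T''$. Applying Lemma~\ref{lem:simpleeqoffline} in the reverse direction to $T''$ yields that some constraint in (\ref{constraint:nospeed1}) for $T''$ and some $D''$ is unsatisfied; since the simple inequality is exactly the $D''_j = 0$ instance of (\ref{constraint:nospeed1}), one may take $D''_j = 0$ for all $j \in J$. Because $T'' = [b,\infty]$ is a suffix, this is precisely a constraint of the form required in (\ref{constraint:nospeed}), completing the proof. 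The main (mild) obstacle is the termination/validity of the swap sequence, which is a standard sorting-style argument that nevertheless must be pinned down using an explicit monovariant such as $\sum_{t \in T'} t$.
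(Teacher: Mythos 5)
Your proof is correct and follows essentially the same route as the paper's: reduce to the simple inequality via Lemma~\ref{lem:simpleeqoffline}, push the time set rightward into a suffix using the swap bound of Lemma~\ref{lem:swaptimesoffline}, and conclude with the $D''_j=0$ constraint. The only difference is cosmetic: the paper packages the repeated swapping as a minimal-counterexample (extremal) argument, whereas you run the swaps explicitly with a monovariant to guarantee termination.
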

\begin{proof}
Let $T'$ and $D$ be an unsatisfied constraint in (\ref{constraint:nospeed1}).  Say that $T'$ is not of the form $[b, \infty]$ for some $b$.  Then there exists a time $t' \in T'$ where there is both an earlier time than $t'$ in $T\setminus T'$ and a later time than $t'$ in $T\setminus T'$.  Fix $T'$  and $D$ to be the sets that minimizes the number of such  time steps over all $T'$ and $D$. That is, minimizes $|\{t \; | \; t \in T', \exists t',t''\in T\setminus T' \mbox{ s.t. } t' <t< t'' \}|$. Let $D''$ be set so that $D''_j = 0$ for all jobs $j$. Since this constraint is unsatisfied, Lemma~\ref{lem:simpleeqoffline} implies that the following holds.

$$\sum_{j \in J} \sum_{t \in E(T',j)} x^*_{j,t} + \sum_{t \in T\setminus T'} m < \sum_{j \in J} p_j$$
 
Let  $t' \in T'$  be a time where there is both an earlier time than $t'$ in $T\setminus T'$ and a later time that $t'$ in $T\setminus T'$.   Let $t''$ be the earliest time in $T\setminus T'$.  Lemma~\ref{lem:swaptimesoffline} implies that $\sum_{j \in J} \sum_{t \in E(T' \setminus \{t'\}\cup \{t''\},j)} x^*_{j,t}  \leq \sum_{j \in J} \sum_{t \in E(T',j)} x^*_{j,t}$. Thus,  it is the case that the following holds.

$$\sum_{j \in J} \sum_{t \in E(T'\setminus \{t'\}\cup \{t''\},j)} x^*_{j,t} + \sum_{t \in T\setminus (T'\setminus \{t'\}\cup \{t''\})} m < \sum_{j \in J} p_j$$

However, this precisely says that the constraint for $(T' \setminus \{t''\}\cup \{t'\})$ and $D''$ is unsatisfied.  This is a contradiction because   $T'$ and $D$ where chosen to minimize the number of time steps in $T'$  that have both a smaller and greater time in $T\setminus T' $.  The set  $(T' \setminus \{t'\}\cup \{t''\})$  has one less such time than $T'$.
 
\end{proof}

\section{Separation Oracle}
\label{sec:separation}

The algorithm uses the ellipsoid method to solve the linear program.  There are an exponential number of constraints in the set of constraints in (\ref{constraint:nospeed}).  For this set of constraints there is an efficient separation oracle.

The separation oracle works as follows. First fix a set $T' = [b,\infty]$.  Since $T'$ is a continuous set of time-steps there are only $\texttt{poly}(P, n)$ possible sets and we try all such sets.  Fix a solution $x$.  We wish to determine if there is a constraint defined by $T'$ and some $D'$ that is unsatisfied.  That is,

$$\sum_{j \in J} \sum_{t \in E(T',D',j)} x_{j,t} < V(T',D').$$


 The algorithm first guesses $G$ the value of $V(T',D')$.  There are at most $\texttt{poly}(\Delta, n)$ possible values that the algorithm needs to enumerate over.  Now the algorithm uses dynamic programming.  Order the jobs arbitrarily from $1$ to $n$.  The algorithm recursively computes $M_{G}[i,\ell]$ which stores the minimum value of $\sum_{j=1}^i \sum_{t \in E(T',D,j)} x_{j,t}$ over all $D$ such that (1) $\ell = \sum_{j=1}^i \sum_{t \in [0,D_j] \cap E(T',j)} 1$ and (2) it must be case that $V(T',D)= G$.  There is a constraint unsatisfied for $T'$ and some $D$ if  $M_G[n,\ell] < G$ for $\ell = \sum_{i \in J} p_i - \sum_{t \in T\setminus T'} m - G$. 

The value of  $M_{G}[i,\ell]$ can be computed recursively as follows.   For ease of notation let $T_j$ denote the $j$ earliest time steps of $T'$.  Set $M_{G}[i,\ell] = \min_{1 \leq k \leq \ell}  \{ M[i-1,\ell -k] + \sum_{t \in (E(T',i) \setminus T_k) \cap T_{k+G})} x_{i,t}\} $.  For the base case, set $M_G[1, \ell] = \infty$ if $\ell > |E(T',1)|$ and otherwise $\sum_{t \in (E(T',1) \setminus T_\ell) \cap T_{\ell+G})} x_{i,t}  $.

\section{Removing the Dependence on $P$ in the Running Time}
\label{sec:largenumebr}

In this section, we show how to remove the dependence on $P$ in the running time.  To do so, the bipartite flow graph described in Section~\ref{sec:deadline} is modified to make its size polynomial in $n$ and $\log W$  where $W = \max_{j,t} g_j(t)$. For simplicity, it is assumed that $g_j$ returns integer values for all $j$.  For each job $j$ let $T_j$ be the set of time steps $t$ where $g_j(t) = 2^k$ for some $k$.  These are the time steps where the cost of job $j$ doubles.  Let $\mathcal{T} = \cup_jT_j$.  Note that $|\mathcal{T}|$ is  $O(n \log W)$.  

 Fix any schedule and let $C_j$ denote the completion time of job $j$ in the schedule. Sort the times in $\mathcal{T}$ so that $t_1 < t_2 < t_3 <\ldots t_{|\mathcal{T}'|}$.   Let $\ell_i = t_{i} - t_{i-1}$ for all $i > 1$. Let $G$ be a bipartite flow network as described  in Section~\ref{sec:deadline}, which is modified as follows. Remove all nodes in the right bipartition of $G$ and their adjacent edges.  Add a node $b_i$ on the right hand side for each time $t_i$  in $\mathcal{T}$.    Let $C_j$ be the given completion time for job $j$. Round $C_j$ to the nearest later time in $\mathcal{T}$.  Note that if the original solution is feasible, this new one is feasible.  Further the definition of times in $\mathcal{T}$ implies than at most a factor of $2$ is lost in the approximation ratio due to this rounding.  There is an edge from the node corresponding to job $j$,  $a_j$, to $b_i$ if $t_i \in [0,C_j]$. The capacity of the edge is $\ell_i$.  Intuitively, this capacity ensures that  a job can only run for $\ell_i$ time steps during $[t_{i-1}, t_i)$.  This is the maximum feasible amount a job can run during the interval.  The node $b_j$ has an edge to the sink of capacity $m\ell_i$.  This is the maximum amount of work all machines can process during $[t_{i-1}, t_i)$.
 
 The next lemma establishes that if there is a feasible flow in this graph of value $\sum_j p_j$ then there is a corresponding feasible schedule.
 
 \begin{lemma}
 If there is a feasible flow in $G$ of value $\sum_j p_j$ then there is a feasible corresponding schedule. 
 \end{lemma}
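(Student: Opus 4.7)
The plan is to reinterpret a feasible flow as a per-interval workload assignment and then invoke a classical preemptive scheduling routine within each interval. Given a feasible flow of value $\sum_j p_j$, let $f_{j,i}$ be the amount of flow sent along the edge $(a_j, b_i)$. Because the flow saturates every source edge (the total flow equals the total source capacity $\sum_j p_j$), each job $j$ has $\sum_i f_{j,i} = p_j$, and by construction $f_{j,i} > 0$ only if $t_i \in [0, C_j]$, so all of job $j$'s work is assigned to intervals that end no later than its (rounded) completion time.

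Next I would interpret $f_{j,i}$ as the amount of processing job $j$ should receive during the interval $I_i = [t_{i-1}, t_i)$ of length $\ell_i$. The edge capacity constraint gives $f_{j,i} \le \ell_i$, i.e.\ no job is asked to run for more than the length of the interval (so it can fit on one machine), and the sink-edge capacity gives $\sum_j f_{j,i} \le m \ell_i$, i.e.\ the total work does not exceed what $m$ machines can perform in time $\ell_i$. These are precisely the conditions under which a preemptive, migratory schedule of makespan $\ell_i$ on $m$ identical machines exists; I would invoke McNaughton's wrap-around rule to produce, for each interval $I_i$ independently, an explicit schedule inside $[t_{i-1}, t_i)$ that processes each job $j$ for exactly $f_{j,i}$ units without ever running $j$ on two machines simultaneously.

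Concatenating these per-interval schedules yields a global schedule in which each job $j$ receives $\sum_i f_{j,i} = p_j$ units of processing and never runs on two machines simultaneously (since the per-interval schedules guarantee this locally and the intervals are disjoint). Moreover, all processing of job $j$ occurs in intervals $I_i$ with $t_i \le C_j$, so job $j$ completes by time $C_j$. Combined with the observation (already used in setting up $G$) that rounding each $C_j$ up to the next element of $\mathcal{T}$ costs at most a factor of two in the objective, this gives a feasible schedule whose completion times are respected.

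The only non-routine ingredient is the per-interval scheduling step, and it is not really an obstacle: McNaughton's wrap-around rule is exactly designed for the conditions $\max_j f_{j,i} \le \ell_i$ and $\sum_j f_{j,i} \le m \ell_i$, producing a schedule with at most $m-1$ preemptions per interval. Thus the proof essentially reduces to (i) reading off the workloads from the flow, (ii) checking that the capacity constraints of $G$ are precisely the feasibility conditions for McNaughton within each interval, and (iii) pasting the interval schedules together.
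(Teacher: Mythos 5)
Your proposal is correct and follows essentially the same route as the paper: decompose the flow into per-interval workloads $f_{j,i}$, observe that the edge capacities $f_{j,i}\le \ell_i$ and $\sum_j f_{j,i}\le m\ell_i$ are exactly the feasibility conditions for scheduling within each interval, and paste the interval schedules together. The only cosmetic difference is that you invoke McNaughton's wrap-around rule as a black box, whereas the paper spells out the same greedy machine-filling argument explicitly and verifies it from the two capacity constraints.
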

 
 \begin{proof}
 
 Let $y_{j,i}$ be the amount of flow that job $j$ sends through node $b_i$. To prove the lemma, it suffices to ensure that for every interval $[t_i,t_{i+1})$  there is a way to  feasibly schedule   all jobs $j$ for $y_{j,i}$ units during $[t_i,t_{i+1})$. Fix an interval $[t_i,t_{i+1})$.  Order the machines arbitrarily $M_1, M_2, \ldots M_m$.  Order the jobs arbitrarily $1, 2, \ldots n$.  Consider the jobs in order.  When scheduling job $i$ assign it to the lowest indexed machine and earliest time greater than $t_i$ that the machine is free. Machines are filled until time  $t_{i+1}$.  When considering job $i$ if the  machine has less than $y_{j,i}$  time steps free in $[t_{i-1}, t_i)$, schedule as many as possible until $t_i$ and move to the next machine.  
 
 Notice that this schedule ensures that each machine  $M_i$ is completely busy during $[t_i,t_{i+1})$ before the a job is scheduled on machine $M_{i+1}$. Knowing this, the only way for a job not to be feasibly scheduled is for two reasons.  The first is because all machines are completely busy.  However, this implies that $\sum_{j} y_{j,i} > m \ell_i$, a contradiction to the capacity of the edge from $b_j$ to the sink. The other is that a job is scheduled on two machines at the same time.  However, this cannot occur by definition of the algorithm since the only way for it to be scheduled on two machines at the same time is if $y_{j,i} > \ell_i$ and we know that $y_{j,i} \leq \ell_i$ by the capacity of the  edge from $a_j$ to $b_i$.
 
 \end{proof}
 
 Using this new bipartite flow graph, a new linear program is constructed.  The differences are the following.  There are only variables $x_{j,t}$ for times in $\mathcal{T}$.  The constraints in (\ref{constraint:nospeed}) are modified.  In particular, the constraint is defined for all sets $\mathcal{T}'= [t, \infty] $ for $t \in \mathcal{T}$ and all $D$.  The vectors $D$ are kept the same.  It is not hard to see that each fixed constraint is polynomial time computable. 
 
 With the LP established, the rounding and analysis are essentially identical.

\section{Omitted Proofs}
\label{sec:omitted}

\begin{proofof}[Lemma \ref{lem:feasible}]

All constraints in the LP are trivially satisfied except those in (\ref{constraint:nospeed}).  Fix any  constraint $T',D'$ where $D'_j \geq C_{j,\alpha}$ for all jobs $j$.  Note that we need to only consider such constraints because all constraints are satisfied if these are since $\tilde{x}^*_{j,t}=1$ for $t \leq C_{j,\alpha}$ and the definition of the constraints.  

We now establish that  $\sum_{j \in L_{T',D'} \cup M_{T',D'}} \sum_{t \in E(T',D',j)} x'_{j,t} \leq \frac{4}{10} V(T',D')$. Let $j^*$ be the job in $L_{T',D'}$ such that $D'_{j^*}$ is maximized. It is the case that $\sum_{j \in L_{T',D'} \setminus \{j^*\}} \sum_{t \in E(T',D',j)} x'_{j,t} < \frac{1}{10} V(T',D')$ by definition of $L_{T',D'}$. Further, $|E(T',D',j^*)| \leq V(T',D')$ by definition. Knowing that $x'_{j,t} \leq c$ for $t > C_{j,\alpha}$ and times  in $E(T',D',j)$ are later than $C_{j,\alpha}$ it is the case that, $  \sum_{t \in E(T',D',j^*)} x'_{j^*,t} \leq c|E(T',D',j^*)| \leq cV(T',D')$.  Thus,  $\sum_{j \in L_{T',D'} } \sum_{t \in E(T',D',j)} x'_{j,t} \leq \frac{2}{10} V(T',D')$ for $c \leq 1/10$.  

An almost identical line of reasoning can be used  to show that $\sum_{j \in M_{T',D'} } \sum_{t \in E(T',D',j)} x'_{j,t} \leq \frac{2}{10} V(T',D')$ for $c \leq 1/10$.   By summing these two expressions we have that $\sum_{j \in L_{T',D'} \cup M_{T',D'}} \sum_{t \in E(T',D',j)} x'_{j,t} \leq \frac{4}{10} V(T',D')$ as desired.   

We know that  $\sum_{j \in J} \sum_{t \in E(T',D',j)} x'_{j,t} \geq  V(T',D')$ since $x'$ is a feasible solution.  Therefore, $\sum_{j \in \cJ_{T',D'}} \sum_{t \in E(T',D',j)} x'_{j,t} \geq \frac{6}{10}V(T',D')$ because $\cJ_{T',D'}$ includes all jobs contributing a positive amount to the summation besides those in $L_{T',D'} \cup M_{T',D'}$ and combining this with the above observation.

 We further know that for all jobs $j$ in $\cJ_{T',D'}$ it is the case that $\tilde{x}^*_{j,t} = 10  x'_{j,t}$ for $t \geq C_{j,\alpha}$.  Thus, it is the case that  $\sum_{j \in J} \sum_{t \in E(T',D',j)} \tilde{x}^*_{j,t} \geq \sum_{j \in \cJ_{T',D'}} \sum_{t \in E(T',D',j)} \tilde{x}^*_{j,t} \geq \sum_{j \in \cJ_{T',D'}} \sum_{t \in E(T',D',j)} 10x'_{j,t} \geq 10  \frac{6}{10} V(T',D') = 6V(T',D')$ and the constraint is satisfied. 
\end{proofof}

\begin{proofof}[Lemma \ref{lem:expectedcost}]
  Let $C_{j,c}$ be the  latest time $t$ where $x'_{j,t} \geq c$. Consider a fixed job $j$.  The probability that time $C_{j,\alpha} =t$ is exactly $\frac{1}{c}(x'_{j,t} - x'_{j,t+1})$ for $t > C_{j,c}$. The solution $x^*$'s cost for job $j$ is $g_j(C_{j,\alpha})$.  By the linearity of expectation, the expected objective of $x^*$ is:

$$\sum_{j\in  J} \left(  \sum_{t\geq C_{j,c}} \frac{1}{c}(x'_{j,t} - x'_{j,t+1}) g_j(t)  \right ).$$

Consider the total cost of the solution $x'$, $ \sum_{j\in J} \sum_{t } x'_{j,t}(g_j(t) - g_j(t-1))$.  Note that this includes both the integral and fractional cost of $x'$'s objective.   This is equal to:

$$\sum_{j\in  J}  \left(  \sum_{t} (x'_{j,t} - x'_{j,t+1}) g_j(t)  \right )$$

This is at least the following knowing that $x'_{j,t} \geq c$ for all $0\leq t \leq C_{j,c}$ and $x'_{j,t} = 1$ for all $0\leq t \leq \beta'_j$.

$$\sum_{j\in  J}  \left( (1-c)g(\beta'_j) +  \sum_{t \geq C_{j,c}} (x'_{j,t} - x'_{j,t+1}) g_j(t)  \right ) \geq \sum_{j\in  J}   \sum_{t \geq C_{j,c}} (x'_{j,t} - x'_{j,t+1}) g_j(t)  $$

This is at most a $c$ factor smaller than the expected cost of the solution $x^*$.   Due to this, it suffices to prove that this expression is smaller than $\lpfrac$. This follows because $\lpfrac= \sum_{j\in J} \sum_{t > \beta'_j} x'_{j,t}(g_j(t) - g_j(t-1))\geq \sum_{j\in  J}   \sum_{t \geq \beta'_j} (x'_{j,t} - x'_{j,t+1}) g_j(t) \geq \sum_{j\in  J}   \sum_{t \geq C_{j,c}} (x'_{j,t} - x'_{j,t+1}) g_j(t)$ because $\beta'_j \leq C_{j,c}$.

\end{proofof}

\begin{proofof}[Proposition \ref{prop:runtime}]
We argue that the algorithm will terminate after all variables  in the LP are either integer or smaller than $1/(Pn)^2$.  At this point, the algorithm will terminate on the next iteration, satisfying all constraints with an integer solution.  To see this, fix a constraint in (\ref{constraint:nospeed}). There are at most $nP$ variables on the left hand side.  The total factional amount contributing if variables are less than $1/(Pn)^2$ is $1/(Pn)$.  Since this is less than $1$ and the right hand side is integer, for the constraint to be satisfied the integer potion of the left hand side must be larger than the right hand side. Thus the constrains remain satisfied even if all fractional variables are rounded to $0$. Lemma~\ref{lem:Finalcosts} gives that fractional objective is smaller than $1/(Pn)^2$ after $O(\log nP)$ iterations in expectation. After this number of iterations all fractional variables will be at most  $1/(Pn)^2$.
\end{proofof}

\begin{proofof}[Proposition~\ref{prop:allD}]
Fix any set  $T' = [t_{T'},\infty]$ and $D'$ as described. Say that $x^*$ satisfies the constraint for $T'$ and $D'$. We will show that  the constraint $T'$ and $D''$ is satisfied where $D''_j =0$ is in $D''$ for all jobs $j$.  Notice that if this constraint is satisfied then the constraints for $T'$ and any $D''$ is satisfied due to the validity of the constraints for integral solutions.  A formal proof of this fact can be found in Lemma~\ref{lem:continuousoffline} in the appendix.

Consider the constraint for $T'$ and $D'$.  Consider a job $j$ where $D'_j >0$.  Consider reducing $D'_j$ to $0$ in the constraint (\ref{constraint:nospeed}). Notice that $\sum_{t \in E(T',j), t > D'_j} x^*_{j,t} + |E(T',j) \cap [0,D'_j]| =  \sum_{t \in E(T',j), t > 0} x^*_{j,t}$ because $x^*_{j,t} = 1$ for all $t \leq D'_j$ by definition.   Thus, the left hand side of the constraint for $T'$ and $D'$ increases by at least much as the right hand side decreases.  This holds for all jobs. Thus, reducing $D'_j$ to $0$ for all jobs ensures the constraint remains satisfied for $T'$ and $D''$.
\end{proofof}

\begin{proofof}[Proposition~\ref{prop:during}]
By definition $E(T,D',j)$ contains the earliest $\min\{ p_j - \max\{0, D'_j - t_{T} \}, V(T,D')\}$ time steps after $D'_j$ for any set $T$ and vector $D'$. Note that $\min\{ p_j - \max\{0, D'_j - t_{T} \}, V(T,D')\} = \min\{ p_j , V(T,D')\}$ if $t_{T} > D'_j$.  Knowing that $V(T,D') \geq \frac{1}{2} V(T',D')$ and  $d _j \leq t_{T} < t_{T'}$ it is the case that $|E(T,D',j)| \geq \frac{1}{2}  |E(T',D',j)|$ by definition of the sets $E()$.  Knowing that $T' \subseteq T$,  $ |E(T,D',j)|$ contains only earlier times than $|E(T',D',j)|$.  Thus, constraint (\ref{constraint:greater}) stating that $x_{j,t} \geq x_{j,t+1}$ for all $j$ and $t$ implies that $\sum_{t \in E(T',D',j)} x_{j,t} \leq 2\sum_{t \in E(T,D',j)} x_{j,t}$.
\end{proofof}

\begin{proofof}[Proposition~\ref{prop:before}]
 By definition $E(T,D',j)$ contains the earliest $\min\{ p_j - \max\{0, D'_j - t_{T} \}, V(T,D')\}$ time steps after $D'_j$ for any sets $T$ and $D'$.   Knowing that $V(T,D') \geq \frac{1}{2} V(T',D')$ and $D'_j >t_{T} > t_{T'}$ it is the case that $|E(T,D',j)| \geq \frac{1}{2}  |E(T',D',j)|$ by definition of the sets $E()$.  Knowing that $E(T',D',j)$ and $E(T,D',j)$ only include times after $D'_j$,  it is the case that the earliest $ \frac{1}{2}  |E(T',D',j)|$ time in $E(T,D',j)$ are also the same earliest times in $E(T',D',j)$.  Thus, constraint (\ref{constraint:greater}) stating that $x_{j,t} \geq x_{j,t+1}$ for all $j$ and $t$ implies that $\sum_{t \in E(T',D',j)} x_{j,t} \leq 2\sum_{t \in E(T,D',j)} x_{j,t} $.

\end{proofof}

\begin{proofof}[Lemma~\ref{lem:offlinewhp}]

 Note that $E(T',D',j)$ is a subset of $[t_{T'}, t_{T'}+ \min\{V(T',D'),p_j\}]$ for any job $j$ by definition.    By definition of the algorithm, each job $j $ has $x^*_{j,t} = 1$ with probability $\frac{1}{c}x'_{j,t}$ if $x'_{j,t} < c$. Notice that $x'_{j,t}<c $ if $t \in  E(T',D',j) $ by definition of $D'$.     

We now show that $\sum_{j \in M_{T',D'}} \sum_{t \in E(T',D',j)} x^*_{j,t} < 10V(T',D')$ with probability at most $\frac{1}{ 2\log^{10} nP}$.  An identical proof shows that $\sum_{j \in L_{T',D'}} \sum_{t \in E(T',D',j)} x^*_{j,t} < 10V(T',D')$ with probability at most $\frac{1}{ 2\log^{10} nP}$ and hence this is omitted.  Once this is established, by the union bound the lemma follows. 

Let $X_j = \sum_{t \in E(T',D',j)} x^*_{j,t}$.  By definition of the algorithm, the random variables $X_j$ are independent.  Let $\mu = \ex[\sum_{j \in M_{T',D'}} X_j] = \ex[\frac{1}{c} \sum_{j \in M_{T',D'}} \sum_{t \in E(T',D',j)} x'_{j,t}] \geq \frac{1}{10c} V(T',D')$.  The last inequality follows by definition of $M_{T',D'}$.    Applying the concentration inequality in Theorem~\ref{thm:concentration-1} that can be found in Appendix~\ref{sec:concentration} and setting $\lambda = \frac{\mu}{2}\geq\frac{1}{20c}V(T',D')$ gives that
\begin{eqnarray*}
&&\Pr[\sum_{j \in M_{T',D'}} \sum_{t \in E(T',D',j)} x^*_{j,t} < 10V(T',D')] \leq \Pr[\sum_{j \in M_{T',D'}} \sum_{t \in E(T',D',j)} x^*_{j,t} < \frac{1}{20c}V(T',D')] \;\;\;\;  \\ 
&\leq& \Pr[\sum_{j \in M_{T',D'}} X_j< \frac{1}{2}\mu]  \leq \exp\Big( - \frac{\lambda^2}{2 \sum_{i \in M_{T',D'}} \ex(X_i^2)}\Big)\\
&\leq&\exp\Big( - \frac{\lambda^2}{2 V(T',D') \sum_{i \in M_{T',D'}} \ex(X_i)}\Big) \;\;\;\; \mbox{[$X_j \leq |E(T',D',j)| \leq V(T',D') $ for all $j$]}\\
&\leq&\exp\Big( - \frac{(\mu/2)^2}{2 V(T',D') \sum_{i \in M_{T',D'}} \ex(X_i)}\Big) \;\;\;\; \mbox{[$\lambda = \frac{\mu}{2}$]}\\
&=&\exp\Big( - \frac{\mu}{8 V(T',D')}\Big) \\
&=&\exp\Big( - \frac{1}{80c }\Big) 
\leq \frac{1}{2\log^{10} nP} \;\;\;\;\mbox{[$c \leq \frac{1}{1000\log\log nP}$, sufficiently large $n$ and $\mu \geq  \frac{1}{10c}V(T',D')$]}
\end{eqnarray*}
\end{proofof}

\begin{proofof}[Lemma~\ref{lem:befored}]

Fix any job $j^*$.     Let $D'$ be set such that $D'_{j}$ is the latest time  $t$ where $x'_{j,t} \geq c$ for all jobs $j$. The proof will establish that with probability greater $1-\frac{1}{\log^2 nP}$ it is the case that the constraints for $D'$ and any $T'=[t_{T'} ,\infty]$ with $t_{T'} < D'_{j^*}$ are satisfied by $x^*$.  Applying Proposition~\ref{prop:allD} this implies that $x^*$ satisfies the same allowing for any constraint $D''$, proving the lemma.

Geometrically group constraints based on the value of $V(T',D')$.  Let $\mathcal{C}_k$ contain the set $T' = [t_{T'} ,\infty]$ if  $j^* \in \mathcal{J}_{T',D'}$,  $2^k \leq V(T',D') < 2^{k+1}$ and $t_{T'} < D'_{j^*}$ for any integer $0 \leq k \leq \log nP$.  

Fix $k$ and the set $T' \in \mathcal{C}_k$ such that $t_{T'}$ is as early as possible.  Let $M_{T',D'}$ be as described in the algorithm definition. We will establish that if  $\sum_{j \in  M_{T',D'}} \sum_{t \in E(T',D',j)} x^*_{j,t} \geq 10V(T',D')$  then all constraints $V(T'',D')$ for any $T'' \in  \mathcal{C}_k$ are satisfied.  Once this is established, this will complete the proof as follows.  We apply  Lemma~\ref{lem:offlinewhp} stating that $\sum_{j \in  M_{T',D'}} \sum_{t \in E(T',D',j)} x^*_{j,t} \geq 10V(T',D')$ occurs with probability at least $1-\frac{1}{ \log^{10} nP}$.   By union bounding for all $\log nP$ values for $k$ the lemma follows.

Say that  $\sum_{j \in  M_{T',D'}} \sum_{t \in E(T',D',j)} x^*_{j,t} \geq 10V(T',D')$.  Consider any set $T'' \in \mathcal{C}_k$.   By definition of the  set  $M_{T',D'}$ it is the case that $D'_j \geq D'_{j^*}$. Hence, $D'_{j} \geq D'_{j^*} \geq t_{T''} \geq t_{T'} $. Thus, Proposition~\ref{prop:before} and the geometric grouping of constraints gives that $\sum_{j \in  M_{T',D'}} \sum_{t \in E(T'',D',j)} x^*_{j,t} \geq \frac{1}{2} \sum_{j \in  M_{T',D'}} \sum_{t \in E(T',D',j)} x^*_{j,t} \geq 5V(T',D') \geq V(T'',D')$.    Thus the constraint for $T''$ and $D'$ is satisfied, proving the lemma.

\end{proofof}

\section{Hardness of the GSP on Identical Machines without Preemption or Migration}
\label{sec:approxhard}

In this section, we establish that any algorithm has an unbounded approximation for GSP if preemption is not allowed or migration is not allowed.

\begin{proofof}[Theorem~\ref{thm:approxhard}]
Consider an instance of 3-partition with the parameter $B$ and items $a_1, a_2, \ldots a_n$ where $\sum_{i=1}^n a_i =\frac{Bn}{3}$.  We may assume $\frac{B}{4} < a_i < \frac{B}{2}$ for all $i$.  From this problem create an instance of the GSP.  Each item $i$ corresponds to a job with processing time $p_i = a_i$.  Let the number of machines be $m = n/3$.  For every job $i$ let $g_i(t) = 0$ for $t \leq B$ and $g_i(t) = W$ for $t > B$.  Here $W>0$ is an arbitrary parameter.  

In a feasible instance of the $3$-partition problem there exists sets $S_k$ such that $\sum_{a_i \in S_k} a_i = B$ and $|S_k|=3$ for all $1 \leq k \leq \frac{n}{3}$.  In such a instance, the optimal solution to the GSP has an objective of $0$ by scheduling jobs corresponding to items in $S_k$ on machines $k$.  Notice that the each job can be scheduled without preemption or migration.

Alternatively, say that the instance of $3$-partition is infeasible. If either preemption or migration is not allowed then each job must be scheduled on exactly one machine.  However, then there exists a machine where a job is not completed until after time $B$. Otherwise, each machine will have three jobs all scheduled before time $B$ giving a solution to $3$-partition.   This results in the optimal solution having an objective of at least $W$.

If an algorithm has a bounded approximation ratio for GSP then one can detect whether a $3$-partition instance is feasible or not.  This is because the ratio of the optimal solution in the corresponding GSP in a feasible versus infeasible instance of 3-partition is unbounded. 
\end{proofof}


\section{Concentration Inequality}
\label{sec:concentration}

The following theorem can be found in Chapter 2 in \cite{ChungL06}. 

\begin{theorem}
\label{thm:concentration-1}
	Let $X_1$, $X_2$, ..., $X_n$ be non-negative independent random variables, then we have the following bound for the sum $X = \sum_{i = 1}^{n} X_i$. 
		$$\Pr [ X \leq \ex X - \lambda ] \leq \exp\Big( - \frac{\lambda^2}{2 \sum_{i=1}^n \ex(X_i^2)}\Big)$$
\end{theorem}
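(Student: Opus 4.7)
The plan is to bound the lower tail via the standard exponential (Chernoff) method. For a parameter $t \geq 0$ to be chosen at the end, apply Markov's inequality to the non-negative random variable $e^{-tX}$:
$$\Pr[X \leq \ex X - \lambda] = \Pr\!\left[e^{-tX} \geq e^{-t(\ex X - \lambda)}\right] \leq e^{t(\ex X - \lambda)}\,\ex[e^{-tX}].$$
Independence of the $X_i$ factorises this as $\ex[e^{-tX}] = \prod_{i=1}^n \ex[e^{-tX_i}]$, so the task reduces to controlling each moment generating function at the point $-t$.

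The key step leverages non-negativity of the $X_i$. Since $tX_i \geq 0$, the elementary inequality $e^{-y} \leq 1 - y + y^2/2$, which holds for all $y \geq 0$, yields
$$\ex[e^{-tX_i}] \leq 1 - t\,\ex X_i + \tfrac{t^2}{2}\,\ex X_i^2 \leq \exp\!\left(-t\,\ex X_i + \tfrac{t^2}{2}\,\ex X_i^2\right),$$
where the second inequality is the standard $1+u \leq e^u$. Taking the product over $i$ and substituting back into the Markov bound, the $\pm t\,\ex X$ terms cancel, giving
$$\Pr[X \leq \ex X - \lambda] \leq \exp\!\left(-t\lambda + \tfrac{t^2}{2}\sum_{i=1}^n \ex X_i^2\right).$$

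Finally, I would optimise the free parameter. Writing $S = \sum_i \ex X_i^2$, the exponent on the right is a quadratic in $t$ minimised at $t^\star = \lambda/S$, where it evaluates to $-\lambda^2/(2S)$. Substituting this choice reproduces the advertised bound $\exp\!\bigl(-\lambda^2/(2\sum_i \ex X_i^2)\bigr)$. I anticipate no serious obstacle; the one subtle point worth flagging is that the quadratic upper bound on $e^{-y}$ is valid only for $y\geq 0$, so the argument relies essentially on the hypothesis $X_i \geq 0$. Without non-negativity one would need an alternate MGF estimate such as a sub-Gaussian or boundedness assumption.
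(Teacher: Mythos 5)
Your proof is correct: the Chernoff/MGF argument with the elementary bound $e^{-y} \leq 1 - y + y^2/2$ for $y \geq 0$, followed by optimizing $t = \lambda/\sum_i \ex(X_i^2)$, is exactly the standard derivation of this lower-tail inequality. The paper itself gives no proof, citing Chung and Lu instead, and your argument is essentially the one found in that reference (with the usual implicit assumption $\lambda > 0$, under which the optimal $t$ is non-negative), so there is nothing to reconcile.
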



%

\end{document}